\DeclareMathOperator{\polylog}{polylog}
\DeclareMathOperator{\range}{range}
\newtheorem{theorem}{Theorem}
\newtheorem{lemma}[theorem]{Lemma}
\newtheorem{claim}[theorem]{Claim}
\newcommand{\EE}{\mathbb{E}}
\newcommand{\ZZ}{\mathbb{Z}}
\newcommand{\indic}{\mathbbm{1}}
\newcommand{\cA}{\mathcal{A}}
\newcommand{\cB}{\mathcal{B}}
\newcommand{\cC}{\mathcal{C}}
\newcommand{\cD}{\mathcal{D}}
\newcommand{\cX}{\mathcal{X}}
\newcommand{\tO}{\widetilde{O}}
\newcommand{\tOmega}{\widetilde{\Omega}}
\newcommand{\tmu}{\widetilde{\mu}}
\newcommand{\ceil}[1]{{\left\lceil{#1}\right\rceil}}
\newcommand{\floor}[1]{{\left\lfloor{#1}\right\rfloor}}
\newcommand{\norm}[1]{{\left\Vert{#1}\right\Vert}}
\newcommand{\avoid}{\textsc{avoid}\xspace}
\newcommand{\mif}{\textsc{mif}\xspace}
\newcommand{\init}{\mathrm{init}\xspace}
\title{Streaming algorithms for the missing item finding problem}
\author{Manuel Stoeckl\thanks{Department of Computer Science, Dartmouth College.\newline  This work was supported in part by the National Science Foundation under award 2006589.}}
\date{}
\begin{document}

\maketitle

\begin{abstract}
  Many problems on data streams have been studied at two extremes of difficulty:
  either allowing randomized algorithms, in the static setting (where they should
  err with bounded probability on the worst case stream); or when only 
  deterministic and infallible algorithms are required. Some recent works have considered
  the adversarial setting, in which a randomized streaming algorithm must succeed
  even on data streams provided by an adaptive adversary that can see the 
  intermediate outputs of the algorithm.
  
  In order to better understand the differences between these models, we study
  a streaming task called ``Missing Item Finding''.
  In this problem, for $r < n$, one is given a data stream $a_1,\ldots,a_r$ of
  elements in $[n]$, (possibly with repetitions), and must output some $x \in [n]$ 
  which does not equal any of the $a_i$. We prove that, for $r = n^{\Theta(1)}$ and
  $\delta = 1/\mathrm{poly}(n)$,
  the space required for randomized algorithms that solve this problem in the
  static setting with error $\delta$ is $\Theta(\mathrm{polylog}(n))$;
  for algorithms in the adversarial setting with error $\delta$, $\Theta((1 + r^2 / n)  \mathrm{polylog}(n))$;
  and for deterministic algorithms, $\Theta(r / \mathrm{polylog}(n))$. Because our
  adversarially robust algorithm relies on free access to a string of $O(r \log n)$
  random bits, we investigate a ``random start'' model of streaming algorithms
  where all random bits used are included in the space cost. Here we find a
  conditional lower bound on the space usage, which depends on the space that
  would be needed for a pseudo-deterministic algorithm to solve the problem.
  We also prove an $\Omega(r / \mathrm{polylog}(n))$ lower bound for the space needed
  by a streaming algorithm with $< 1/2^{\mathrm{polylog}(n)}$ error against ``white-box''
  adversaries that can see the internal state of the algorithm, but not predict
  its future random decisions.
\end{abstract}

\section{Introduction}

A streaming algorithm is one which processes a long sequence of input data and
performs a computation related to it. In general, we would like such algorithms
to use as little memory as possible -- preferably far less than the length of the
input -- while producing incorrect output with as low a probability as possible.
For some problems, there is a space-efficient deterministic algorithm, which
works for all possible inputs; but many others require randomized algorithms
which, for any input, have a bounded probability of failure.

In the \emph{adversarial setting}\cite{BenEliezerJWY20}, one considers the case
where a randomized algorithm is processing an input stream that is produced in
real time, and furthermore the algorithm continually produces outputs depending
on the partial stream that it has seen so far. It is possible that the outputs
of the streaming algorithm will affect the future contents of the input stream;
whether by accident or malice, this feedback may yield an input stream for which
the randomized algorithm gives incorrect outputs. Thus, in the adversarial setting,
we require that an algorithm has a bounded probability of failure, even when
the input stream is produced by an adversary that can see all past outputs of
the algorithm.

The extent to which an algorithm is vulnerable to adversaries depends critically
on the use of randomness by the algorithm. If, given a randomized algorithm that 
has nonzero failure probability on any fixed input stream, an adversary somehow
manages to determine all the past and future random choices made by an instance
of the the algorithm, then the adversary can determine a specific continuation
of the input stream on which the instance fails. Algorithms that are robust to
adversaries often prevent the adversary from learning any of their important
random decisions, and ensure that the decisions which are revealed do not affect
the future performance of the algorithm. For example, \cite{BenEliezerJWY20} 
mentions a sketch-switching method in which a robust algorithm maintains multiple
independent copies of a non-robust algorithm; it emits output derived from one
non-robust instance until it reaches the point where an adversary might make
the instance fail, at which point the algorithm switches to another instance,
none of whose random choices have been revealed to the adversary yet.

Recent research has introduced models with requirements stronger than adversarial
robustness. In the white-box streaming model\cite{AjtaiBJSSWZ22}, algorithms must avoid errors
even when the adversary can see the current state of the algorithm (i.e, including
past random decisions), but not future random decisions. In the pseudo-deterministic model\cite{GoldwasserGMW20}, streaming algorithms should with high probability 
always give the same output for a given input; such algorithms are automatically
robust against adversaries, because (assuming the algorithm has not failed)
the outputs of the algorithm reveal nothing about any random decisions made by
the algorithm.

In order to better understand the differences between all these models, we study 
a streaming problem known as Missing Item Finding ($\mif$). This problem is perhaps
the simplest search problem for data streams where the space of possible answers
shrinks as the stream progresses. For parameters $r < n$, given an data stream $a_1,\ldots,a_r$ of length $r$, 
where each element $e_i$ is an integer in the range $[n]$, the goal of the 
$\mif(n,r)$ problem is to identify some integer $x \in [n]$ for which, 
for all $i \in [r]$, $x \ne a_i$.

This problem is of interest because it has significantly different space complexities
for regular randomized streaming algorithms, adversarially robust streaming algorithms,
and deterministic streaming algorithms. Surprisingly, our adversarially robust algorithm
when $r = \sqrt{n}$ needs oracle access to $\tO(\sqrt{n})$ random bits, but only $\tO(\log n)$
random bits of mutable memory. One of the main open problems left by our work is whether this
is necessary. Our white-box model lower bound shows that the algorithm must make at
least \emph{some} random decisions that remain hidden from the adversary, and a 
conditional lower bound shows that, if the pseudo-deterministic space complexity
of $\mif(n,\sqrt{n})$ is $\tOmega(\sqrt{n})$, then the robust algorithm actually
must use $\tOmega(n^{1/4})$ bits of space, including random bits.

\subsection{Our results and contributions}

Our results -- a series of upper and lower bounds for space complexities of $\mif(n,r)$ in various different models are given in Table \ref{table:results}. For a more precise description of the models and of what guarantee exactly $\delta$ is associated with in each case, see \Cref{sec:prelim}.

\begin{table}[h!]
  \newcolumntype{P}[1]{>{\RaggedRight\arraybackslash}p{#1}}
  \renewcommand{\arraystretch}{1.5}
  \begin{tabular}{l | P{6cm} | P{5cm} | P{3cm}}
  Model & Lower bound & Upper bound & Source\\ \hhline{=|=|=|=}
  Classical     & $\Omega(\sqrt{\frac{\log(1/\delta)}{\log(n)}} + \frac{\log(1/\delta)}{(\log n) (1 + \log(n/r))} )$ ${\text{if } \delta \ge 1/n^r}$
                & $\min(r, \frac{\log(1/\delta)}{\log(n/r)})$ 
                & Thm \ref{thm:lb-classical}, Thm  \ref{thm:ub-classical} \\ \hline
  Adv. Robust   & $\Omega(\frac{r^2}{n} + \log(1 - \delta))$ 
                & $O(\min(r, \left(1 + \frac{r^2}{n} + \ln\frac{1}{\delta}\right) \cdot \log r))$
                & Thm \ref{thm:lb-advrobust} , Thm \ref{thm:ub-advrobust} \\ \hline
  Zero error $\star$    & $\Omega(\frac{r^2}{n})$
                & $O(\min(r, (1 + \frac{r^2}{n}) \log r )$ 
                & Thm \ref{thm:lb-zero-error}, Thm \ref{thm:ub-zero-error} \\ \hline
  Deterministic & $\Omega(\sqrt{r} + \frac{r}{1 + \log(n/r)})$ 
                & $O(\sqrt{r \log r} + \frac{r \log r}{\log n})$
                & Thm \ref{thm:lb-deterministic}, Thm \ref{thm:ub-deterministic}, \\ \hline
  White box     & $\Omega(r/ (\log n)^4)$ ${\text{if } \delta \le 1/n^{O(\log n)}}$
                & (see deterministic)
                & Thm \ref{thm:lb-whitebox} \\ \hline
  Random start  & $\Omega(\sqrt{r} / \polylog n)$, assuming \mbox{Pseudo-deterministic} algs require $\Omega(r / \polylog n)$ bits
                & $O((\sqrt{r} + r^2/n) \log n)$
                & Thm \ref{thm:lb-randstart}, Thm \ref{thm:ub-randstart}  \\ \hline
  \end{tabular}
  \caption{\label{table:results} Table summarizing the upper and lower bounds on the space complexity of algorithms for $\mif(n,r)$ in various models. $\delta$ is the worst case error -- see \Cref{sec:prelim} for what this means in the different models. $\star$: Unlike the other models, the complexity bounds for the zero error case are defined using of the \emph{expected} algorithm space usage, not the worst-case space usage.}
\end{table}

We shall highlight some of the more novel results in what follows:
\begin{itemize}
  \item Our adversarially robust algorithm for $\mif(n,r)$ uses its oracle-type access to random bits to keep track of a list $L$ of outputs that it could give. At each point in time, \Cref{alg:hidden-list} outputs the first element of $L$ which is still available.
  An adversary can choose to make the algorithm move to the next list element, but it
  cannot reliably provide an element from $L$ that it has not yet seen. For the algorithm, switching to the next list element is easy -- it just increments a counter -- but keeping track of future intersections between the $L$ and the stream requires that
  it record each intersecting element; fortunately, even with an adversary there will not be too many such intersections.

  \item  
    Our deterministic algorithm for $\mif(n,r)$ uses the (missing-) pigeonhole
    principle multiple times, and stays within a factor $\log r$ of the space lower bound.
    \Cref{alg:iterated-pigeonhole} proceeds in several stages; in each stage, it
    considers a partition of the input space into a number of different parts, and maintains
    a bit vector keeping track of which part contains an element from
    the stream that arrived in the current stage. When there is exactly one part 
    left, the algorithm remembers that part, discards the bit vector, and moves on
    to the next stage and a new partition of the input space. With suitably chosen
    partitions, the intersection of all the remembered parts from the different 
    stages will be nonempty and disjoint from each element of the stream. The
    algorithm then reports an element from this intersection.
  
  \item Our white-box lower bound proof establishes an adversary that samples
    its next batch of inputs using a distribution $\nu$ over $[n]$ which is
    chosen so that the algorithm will also produce outputs distributed
    according to $\nu$. This is done using recursive applications of Brouwer's
    fixed point theorem: for example, at the base level, we can use it because the map from
    the distribution on $[n]$ out of which the remaining input
    elements are sampled, to the distribution of the final algorithm output,
    is a continuous map from the space of distributions on $[n]$ to itself.
    Note that if $\nu$ picks some element with probability $\ge 2/3$, then the algorithm
    will also output that element with probability $\ge 2/3$, leading to a 
    $\ge 1/3$ chance that the algorithm incorrectly emits an output that 
    it received in the stream. We then show that, if a white box algorithm 
    using less space than our lower bound exists, then said algorithm
    will fail with $\ge 1/2^{O((\log n)^2)}$ probability. This follows by an 
    inductive argument which shows that, at any point in the stream, either
    the algorithm will make a mistake with significant probability, or
    there is a large enough chance that the next distribution which the
    adversary picks will be more "concentrated" than before, as measured by
    an $\ell_p$ norm for a value of $p$ slightly larger than 1.
    As distributions cannot be infinitely "concentrated", it follows that
    the algorithm will eventually make a mistake with some low probability.
    
  \item Our conditional lower bound proof for the "random start" model, relies on the observation that at a given point in the stream, either the adversary is able
  to provide an input where it learns a lot about the initial random bits of the
  algorithm, or the algorithm, because it reveals very little about its internal randomness, also must consistently produce the same output at some point,
  in response to the same input. We can use this behavior to construct a pseudo-deterministic algorithm which works on a shorter input stream.

\end{itemize}

The rest of this paper is organized as follows. Related work is described in \Cref{sec:related-work}. Detailed descriptions of the models for streaming algorithms are given in \Cref{sec:prelim}. Sections \ref{sec:classical} through \ref{sec:randstart-and-pd} contain the main results of this paper, organized according to the rows of \Cref{table:results}; they can be read in any order.

\section{Related work}\label{sec:related-work}

The Missing Item Finding problem appears to have been first studied by \cite{Tarui07}. While they primarily consider the problem of finding a duplicate element in a stream of $m > n$ elements chosen from $[n]$, most of their results also apply to $\mif(n,n-1)$. For example, their multi-pass duplicate finding algorithms can easily be translated to multiple pass algorithms to find a missing element. Their main results also hold: they find an deterministic streaming algorithm for $\mif(n,n-1)$ using $O(\log n)$ bits of space must make $\Omega(\log n / \log \log n)$ passes over the stream, and claim that a single-pass deterministic algorithm for $\mif(n,n-1)$ requires at least $2^n - 1$ states.\footnote{As \Cref{alg:trivial} uses exactly $2^{n-1}$ states for $\mif(n,n-1)$, the value $2^{n} - 1$ may be a typo.}

A variation on the Missing Item Finding problem, that forbids repeated elements in the input stream, was briefly studied in the first section of \cite{Muthukrishnan05}. The paper mentions that for any $k \ge 1$, on a stream encoding a subset of $[n]$ of size $n-k$, it is possible to recover the remaining $k$ elements with a sketch of size $O(k \log n)$. The paper \cite{ChakrabartiGS22} also briefly mentions a variant of Missing Item Finding to illustrate an exponential gap between space usage for regular randomized and adversarially robust streaming algorithms. For the problem where the stream can list any strict subset $S$ of $[n]$, and one must recover a single element not in $S$, they observe that there is a randomized algorithm which uses an $L_0$-sampling sketch to solve the problem in $O((\log n)^2))$ space; but any adversarially robust algorithm that succeeds with high probability needs $\Omega(n)$ bits.

If we were to extend the Missing Item Finding problem to turnstile streams, then we would end up with something opposite to the "support-finding" streaming problem. In the support-finding problem, the algorithm is given a turnstile stream of updates to a vector $x \in \ZZ^{[n]}$; on querying the algorithm, it must return any index $i \in [n]$ where $x_i \ne 0$. \cite{KapralovNPWWY17} find that this problem -- and the harder $L_0$ sampling problem, where one must find a uniformly random element of the support of $x$ -- have a space lower bound of $\Omega\left(\min\left(n, \log\frac{1}{\delta} (\log\frac{n}{\log(1/\delta)} )^2 \right)\right)$. This is close to \cite{JowhariST11}'s $L_0$ sampling algorithm which uses $O( \log \frac{1}{\delta} (\log n)^2)$ bits of space.

The paper \cite{MenuhinN22} studies a two player game that is similar to Missing Item Finding. Here there are two players, a "Dealer" and a "Guesser": for each of $n$ turns, the players simultaneously do the following: the Dealer chooses a number from $[n]$ that it has not picked so far, and the Guesser guesses a number in $[n]$. The goal of the Guesser is to maximize expected score, the number of times their number matches the Dealer's choice; the Dealer tries to minimize the score. The paper proves upper and lower bounds on the expected score, for a number of scenarios. Notably, a Guesser that is limited to remember only $m$ bits of information can do much better against a static Dealer
(that chooses a hard ordering of numbers at the start of the game) than against an adaptive Dealer (that may choose the next number depending on the guesses made by the Guesser.) For example, $m = O((\log n)^2)$ suffices for an expected score of $\Omega(\log n)$ against a static Dealer, but there exists an adaptive Dealer which limits any Guesser's expected score to $(1 + o(1)) \ln m + O(\log \log n)$. The objectives of the Guesser and Dealer are similar to those of the algorithm and adversary in Missing Item Finding: the Guesser tries to avoid, if possible, guessing any value that the Dealer has revealed before; while the Dealer tries to ensure the Guesser chooses that the Dealer had already sent before. However, unlike Missing Item Finding, the Dealer-Guesser game requires that numbers dealt never be repeated and that all numbers be used, which makes it much easier to identify a number that will be dealt in the future.

In the Mirror Game of \cite{GargS18}, there are two players, Alice and Bob who alternately declare numbers from the set $[2n]$. The players lose if they declare a number that has been declared before. Since Alice goes first, even if Bob can only remember $O(\log n)$ bits about the history of the game, Bob still has a simple strategy that will not lose. On the other hand, \cite{GargS18} prove that in order for Alice to guarantee a draw against Bob, they require $\Omega(n)$ bits of memory. If a low probability of error is acceptable, \cite{Feige19} provide a randomized strategy for Alice with $O((\log n)^3)$ bits of memory that draws with high probability -- but this requires oracle access to a large number of random bits, or cryptographic assumptions. \cite{Feige19} and \cite{MenuhinN22} ask whether there is a strategy using $O(\polylog n)$ bits of memory and of randomness. (Again, the objective of Alice in this game is quite similar to that of the algorithm in Missing Item Finding -- but numbers are never repeated, and all numbers in $[2n]$ are used by the end of the game.)

The problem of constructing an adversarially resilient Bloom filter is addressed by \cite{NaorY19}. Here one seeks a an "approximate set membership" data structure, which is initialized on a set $S$ of size $n$, and thereafter answers queries of the form "is $x \in S$" with false positive error probability $\epsilon$. An implementation of this structure is adversarially resilient if the false positive probability of the last element in the sequence is still $\le \epsilon$ when the adversary chooses the sets $S$, and adaptively chooses the sequence of $t$ elements to query. In addition to lower and upper bound results conditional on the existence of one-way functions, \cite{NaorY19} find a construction for an adversarially resilient bloom filter using $O(n \log 1/\epsilon + t)$ bits of memory.

There are many papers on the topic of adversarially robust streaming. Among them, we mention \cite{HardtW13}, who prove that linear sketches on turnstile streams are not, in general, robust against adversaries. \cite{BenEliezerY20} find that algorithms based on finding a representative random sample of the elements in a stream may need only slight modification to work with adaptive adversaries; \cite{BenEliezerJWY20} establish general methods to convert streaming algorithms with real valued output that are not robust against adversaries to ones which are, in exchange for an increase in space usage. \cite{HassidimKMMS20} improve on the space tradeoff of this result by using differential privacy. \cite{WoodruffZ22} improve the space/approximation factor tradeoffs for adversarially robust algorithms on tasks like $F_p$ estimation. 

The thread of finding separations between the space needed for classical streaming and for adversarially robust streaming has been pursued by \cite{KaplanMNS21}, who construct a problem whose classical and adversarially robust space complexities are exponentially separated. \cite{ChakrabartiGS22} mention that this also holds for the variant of Missing Item Finding mentioned above, and prove a separation for the adversarially robust space complexity of graph coloring on insertion streams.

Pseudo-deterministic streaming algorithms were first studied by \cite{GoldwasserGMW20}; the paper finds a separation between the classical and pseudo-deterministic memory needed for the task of finding a nonzero entry of a vector given by turnstile updates from a stream, among other problems. While it is not a streaming task, the Find1 query problem -- in which one is given a bit vector $x$ with $\ge \nicefrac{1}{2}$ density of ones, and must find an index $i$ where $x_i = 1$ by querying coordinates -- has been found to require significantly more queries in the pseudo-deterministic case than in the general randomized case \cite{GoldwasserIPS21}.

Streaming algorithms robust against white box adversaries were considered by \cite{AjtaiBJSSWZ22}; they rule out efficient white-box adversarially robust algorithms for tasks like $F_p$ moment estimation, while finding algorithms for heavy-hitters-type problems. They also show how to reduce white-box adversarially robust algorithms to deterministic 2-party communication protocols, where lower bounds may be easier to prove.\footnote{Unfortunately, for Missing Item Finding, the natural 2-party communication game is $\avoid(n,r/2,r/2)$, whose deterministic communication lower bound is almost the same as the randomized lower bound. See \Cref{subsec:lemmas}. In contrast, our deterministic and white box lower bounds both use $O(\log n)$ players/adaptive steps. }

The Missing Item Finding problem has connections to graph streaming problems. Just as the $L_0$-sampling problem has been used by streaming algorithms that find a structure in a graph, behaviors like those of the Missing Item Finding problem appear in algorithms that look for a structure which is not in a graph. Specifically, the graph coloring problem is equivalent to finding a small collection of cliques which cover all vertices but do not include any edge in the graph. \cite{AssadiCK19} proved that general randomized streaming algorithms can $\Delta+1$ color a graph in $\tO(n)$ space, where $n$ is the number of vertices. \cite{ChakrabartiGS22} showed that adversarially robust streaming algorithms in $\tO(n)$ space must use at least $\Delta^2$ colors for a graph of maximum degree $\Delta$; and \cite{AssadiCS22} proved that deterministic streaming algorithms using $\tO(n)$ space must use $\exp(\Delta^{\Omega(1)})$ colors. The papers \cite{ChakrabartiGS22} and \cite{AssadiCS22} are noteworthy in particular because their lower bound proofs use essentially the same arguments as this paper's lower bound proofs for Missing Item Finding. (In fact, our proof of \Cref{thm:lb-advrobust} was inspired by the \cite{ChakrabartiGS22}'s proof, while \Cref{thm:lb-deterministic} was independently developed.) Because of this, we suspect that this paper's white box lower bound will have an analogue for graph coloring.

\section{Preliminaries}\label{sec:prelim}

\paragraph{Notation} In this paper, following standard convention, $[n]$ is the set $\{1,2,\ldots,n\}$, and $\binom{X}{k}$ is shorthand for the set of all subset of $X$ of size $k$. For a finite set $Y$, we let $\triangle Y$ be the set of all probability distributions over $Y$. For $\pi$ a probability distribution over $Y$, we write $\alpha \sim \pi^k$ to mean that $\alpha \in Y^k$ and each coordinate of $\alpha$ is chosen independently at random according to $\pi$. For some $x \in Y$, the distribution $\indic_y$ is value $1$ on $y$ and value $0$ everywhere else; drawing a sample from this distribution will always result in $y$. The $p$-norm of a distribution $\phi$ on $Y$ is written as $\norm{\phi}_p := \left(\sum_{i \in Y} \phi(i)^p\right)^{1/p}$. The notation $[t]^\star$ gives the set of all sequences of elements from $t$, of any length. The empty sequence is written $\epsilon$; a sequence $s \in [t]^\star$ may be written as $(s_1,s_2,\ldots,s_k)$, in which case its length $|s| = k$. 
To concatenate two sequences $a$ and $b$, we write ``$a.b$''.
$\tO(x)$ means $O(x \polylog(x))$, and $\tOmega(x)$ means $\Omega(x / \polylog(x))$,

\paragraph{A simple algorithm}

While in most cases there are more efficient alternatives, this algorithm for $\mif(n,r)$ is particularly simple:

\begin{algorithm}[H]
  \caption{A simple deterministic streaming algorithm for $\mif(n,r)$
    \label{alg:trivial}}

  \begin{algorithmic}[1]
  \Statex \ul{\textbf{Initialization}}:
    \State $x \gets \{0,\ldots,0\}$, a vector in $\{0,1\}^{[r]}$
    
  \Statex
  \Statex \ul{\textbf{Update}($e \in [n]$)}:
    \If{$e \le r$}
      \State $x_e \gets 1$
    \EndIf
    
  \Statex
  \Statex \ul{\textbf{Query}}:
    \If{$\exists j \in [r] : x_j = 0$}
      \State \textbf{output}: $j$
    \Else
      \State \textbf{output}: $r + 1$
    \EndIf
  \end{algorithmic}
\end{algorithm}

\subsection{Models for streaming algorithms}

We now precisely define the models
of streaming computation considered in this paper. We classify the models by the type
of randomness used, the measure of the
cost of the algorithm, the setting in which they are measured, and by any additional constraints.

\paragraph{Randomness} A streaming algorithm for $\mif(n,r)$ has a set $\Sigma$ of possible states; a possibly random initial state $s_{\init} \in \Sigma$, a possibly random transition function $\tau : \Sigma \times [n] \rightarrow \Sigma$, and a possibly random output function $\omega : \Sigma \rightarrow [n]$. The models of this paper will use the following four variations:

\begin{enumerate}
\item \textbf{Random oracle}: The initial state, transition function, and output function may all be random and correlated; i.e, there is a space $\Omega$ and random variable $R$ on that space for which $s_{\init}$ is a function of $R$, and $\tau(s, a) = f(s, a, R)$ for some deterministic function $f : \Sigma \times [n] \times \Omega \rightarrow \Sigma$, and $\omega(s) = g(s, R)$ for some deterministic function $g : \Sigma \times \Omega \rightarrow [n]$. We can view this as the algorithm having access to an oracle for all of its operations, which provides the value of the variable $R$.

\item \textbf{Random tape}: In this case, the initial state, transition function, and output function are all random, but they are uncorrelated; each step $i$ of the algorithm has associated random variables $R_{i,\tau}$ and $R_{i,\omega}$, and all of these variables are independent of each other and of the initial state $s_{\init}$. The transition function of the algorithm is $\tau(s, a) = f(s, a, R_{i,\tau})$ for some $f$, and the output function is $\omega(s) = g(s, R_{i,\omega})$ for some $g$. If the algorithm visits a state twice, the transitions and outputs from that state will be independent. Intuitively, with this type of access to randomness, the algorithm can always sample fresh random bits (i.e, reading forward on a tape full of random bits), but cannot remember them for free.

\item \textbf{Random seed}: Here the initial state $s_{\init}$ may be chosen randomly, but the transition function and output function are deterministic. The algorithm only has access to the randomness it had when it started.

\item \textbf{Deterministic}: The initial state is fixed, and the transition function and output function are deterministic.
\end{enumerate}

These variations are listed in decreasing order of strength; the random oracle model can emulate the random tape model, which is stronger than the random seed model, which is stronger than the deterministic model. Note that the random oracle model, while inconvenient to implement exactly due to the need to store all the random bits used, can be approximated in practice, since a cryptographically secure random number generator can be used to generate all the random bits from a small random seed.\footnote{As the space cost of this seed can be shared between all tasks performed by a computer, we do not account for it in the space cost estimates for this paper.} Of course, if modern CSPRNGs based on functions like AES are broken, or one-way functions are proven not to exist, then the random oracle model may prove unreasonable.

\paragraph{Cost measure} In this paper, the space cost of an algorithm is the worst case value, over all possible streams or adversaries, of either the \emph{maximum} number of bits used by the algorithm, or the \emph{expected} number of bits used by the algorithm. The number of bits required is determined by a prefix-free encoding of the set $\Sigma$ of states as strings in $\{0,1\}^\star$; for most models, we measure the maximum number of bits used, which is $\ceil{\log |\Sigma|}$ for the best encoding.

\paragraph{Setting} The cost of an algorithm, and its probability of an error, are measured against the type of inputs that it is given.

\begin{enumerate}
\item \textbf{Static}: In the static setting, the algorithm should give an incorrect output, on being queried at the end of the stream, with probability $\le \delta$, when it is given any fixed input stream.\footnote{This is a weaker condition than requiring that the entire sequence of intermediate outputs of the algorithm is correct; however, our lower bounds in static and white-box adversarial settings only require this weaker condition.}

\item \textbf{Adversarial}: In the adversarial setting, we consider the algorithm as being part of a two player game between it and an adversary; the algorithm receives a sequence of elements $e_1,\ldots,e_r$ from the adversary, and after each element $e_i$, the algorithm shall produce an output $o_i$ corresponding to the sequence $e_1,\ldots,e_i$. The adversary chooses input $e_i$ based on the transcript $o_0,e_1,o_1,e_2,\ldots,o_{i-1}$ that has been seen so far. The probability that the sequence of outputs produced by the algorithm has an error should be be $\le \delta$, for any adversary.

\item \textbf{White box adversarial}: This is similar to the adversarial setting, except that here the adversary chooses the next input $e_i$ as a function of the current state $s_i$ of the algorithm. Here, the probability that the algorithm should make a mistake when producing an output at the end of the stream should be $\le \delta$.
\end{enumerate}

\paragraph{Extra constraints} A streaming algorithm may be required to be pseudo-deterministic; in other words, for any input stream $\sigma = e_1,\ldots,e_r$, there should be a corresponding output $o_\sigma$ of the algorithm for which the algorithm is considered to have made a mistake if it does not output $o_\sigma$. In other words, the algorithm should (with probability $\ge 1 - \delta)$ behave as if it were deterministic.

A noteworthy constraint which we do not consider in the following set of models, is the requirement that the algorithm detects when its next output is not certain to be correct, and if so, aborts instead of producing the wrong value. Most of the algorithms presented in this paper already have this property -- the one exception, \Cref{alg:classical}, can be patched to do so at the cost of an extra bit of space.

\paragraph{Models} The models of this paper are described by the following table:

\begin{center}
  \par
  \begin{tabular}{l | l | l | l | l}
  Model                 & Setting              & Randomness      &  Cost           & Extra conditions  \\ \hline
  Classical             & Static               & Oracle          &  Maximum space  &       \\
  Robust                & Adversarial          & Oracle          &  Maximum space  &       \\
  Zero error            & Static               & Oracle          &  Expected space &  $\delta = 0$     \\
  Deterministic         & Static               & Deterministic   &  Maximum space  &       \\
  White box robust      & White-box adv.       & Tape            &  Maximum space  &       \\
  Pseudo-deterministic   & Static               & Oracle          &  Maximum space  & Pseudo-deterministic \\
  Random start          & Adversarial          & Seed            &  Maximum space  &       \\
  \end{tabular}
\end{center}

A brief note on the "Zero error" model; this is a special case where the algorithm may be randomized,
but is required to always give correct output for any input stream; unlike the deterministic model,
the cost of the algorithm is the \emph{expected} number of bits of space used by the algorithm. We include
this model because, in many cases, a computer may run many independent instances of a streaming
algorithm, and it is often more important that the instances do not fail than that they hold to 
strict space limits. In this scenario, as long as the expected space used by each algorithm is limited,
and the worst case space usage is not too extreme, by the Chernoff bound it is unlikely that the total
space used by all the instances exceeds the expected space by a significant amount. Unlike the case
for time complexity, where a Las-Vegas algorithm can be obtained by repeating a 
Monte-Carlo algorithm until the solution is verifiably correct, there is no simple
way to construct a single-pass, zero-error streaming algorithm from one with nonzero error.

We use the following notation for the space complexities of these models. The $\delta$-error space complexity of the classical model for a task $T$ is $S_\delta(T)$; for the robust model, $S^{AR}_\delta(T)$, for the zero error model, $S_0(T)$; for the deterministic model, $S^{det}(T)$; for the
white box robust model, $S^{WB}_\delta(T)$; for the pseudo-deterministic model, $S^{PD}_\delta(T)$, and the random start model, $S^{RS}_\delta(T)$. The following relationships follow from the definitions of the models:
\begin{align*}
              &                     & S^{AR}_\delta(T) &\le S^{RS}_\delta(T) & S^{RS}_\delta(T) &\le S^{det}(T) \\
  S_\delta(T) &\le S^{AR}_\delta(T) & S^{AR}_\delta(T) &\le S^{WB}_\delta(T) & S^{WB}_\delta(T) &\le S^{det}(T) \\
              &                     & S^{AR}_\delta(T) &\le S^{PD}_\delta(T) & S^{PD}_\delta(T) &\le S^{det}(T)\\
              &                     &                  &                     & S_0(T)           &\le S^{det}(T)
   &  \\
\end{align*}

For problems in communication complexity, we write $R^{\rightarrow}_\delta(T)$ for the one-way randomized $\delta$-error communication complexity of task $T$, and $D^{\rightarrow}(T)$ for the deterministic communication complexity.

\subsection{Lemmas}\label{subsec:lemmas}

\paragraph{The $\avoid(t,a,b)$ communication task} This one-way communication game was introduced by \cite{ChakrabartiGS22}. In it, Alice is given $S \subseteq [t]$ with $|S| = a$, and sends a message to Bob, who must produce $T \subseteq [t]$ with $|T| = b$ where $T$ is disjoint from $S$. 

\begin{lemma}(From \cite{ChakrabartiGS22}, Lemma 6)\label{lem:avoid-lb}
  The public-coin $\delta$ error one-way communication complexity of $\avoid(t,a,b)$ is at least $\log(1 - \delta) + \log(\binom{t}{a}/\binom{t - b}{a})$. Because
  \begin{align*}
    \binom{t}{a}/\binom{t - b}{a} = \frac{t! (t-a-b)!}{(t-a)! (t-b)!} \ge 2^{\frac{a b}{t \ln 2}}
  \end{align*}
  we have the weaker but more convenient lower bound $R^{\rightarrow}_\delta(\avoid(t,a,b)) \ge \frac{a b}{t \ln 2} + \log(1-\delta)$
\end{lemma}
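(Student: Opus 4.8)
The plan is to prove this by the standard distributional counting argument for one-way communication, followed by an elementary estimate of the binomial ratio. First I would reduce public-coin protocols to deterministic ones: given a public-coin protocol $\Pi$ for $\avoid(t,a,b)$ with error at most $\delta$ and message length at most $\ell$, consider the input distribution in which Alice's set $S$ is drawn uniformly from $\binom{[t]}{a}$. Since $\Pi$ errs with probability at most $\delta$ on every input, it errs with probability at most $\delta$ over the joint choice of $S$ and the public coins; averaging over the coins gives a fixing of the randomness yielding a deterministic protocol $\Pi'$ with message length at most $\ell$ whose error over a uniformly random $S \sim \binom{[t]}{a}$ is at most $\delta$.

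Next comes the counting step. A deterministic one-way protocol $\Pi'$ is specified by Alice's message map $S \mapsto m(S) \in \{0,1\}^{\le \ell}$ together with, for each message $m$, the fixed set $T_m \in \binom{[t]}{b}$ that Bob outputs. The protocol is correct on input $S$ exactly when $T_{m(S)} \cap S = \emptyset$, i.e. $S \subseteq [t] \setminus T_{m(S)}$; for a fixed message $m$, the number of sets $S \in \binom{[t]}{a}$ with $S \subseteq [t] \setminus T_m$ is exactly $\binom{t-b}{a}$. Summing over the at most $2^\ell$ possible messages, $\Pi'$ is correct on at most $2^\ell \binom{t-b}{a}$ inputs. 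Since correctness must hold on at least $(1-\delta)\binom{t}{a}$ inputs, we get $2^\ell \ge (1-\delta)\binom{t}{a}/\binom{t-b}{a}$, hence $\ell \ge \log(1-\delta) + \log\!\big(\binom{t}{a}/\binom{t-b}{a}\big)$, which is the claimed bound.

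Finally the binomial estimate. Writing the ratio as a telescoping product, $\binom{t}{a}/\binom{t-b}{a} = \prod_{i=0}^{b-1} \frac{t-i}{t-a-i}$, and taking logarithms, each factor contributes $\ln(t-i) - \ln(t-a-i) = \int_{t-a-i}^{t-i} \frac{dx}{x} \ge \frac{a}{t-i} \ge \frac{a}{t}$; summing over $i \in \{0,\ldots,b-1\}$ gives $\ln\!\big(\binom{t}{a}/\binom{t-b}{a}\big) \ge ab/t$, so the ratio is at least $e^{ab/t} = 2^{ab/(t\ln 2)}$. Combining this with the previous paragraph yields the convenient form $R^{\rightarrow}_\delta(\avoid(t,a,b)) \ge \frac{ab}{t\ln 2} + \log(1-\delta)$.

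There is no deep obstacle here: the argument is the textbook fooling-by-counting bound for one-way protocols. The only points requiring a little care are the passage from public coins to a deterministic protocol under the uniform distribution (a routine averaging argument), and setting up the binomial identity $\binom{t}{a}/\binom{t-b}{a} = \prod_{i=0}^{b-1}\frac{t-i}{t-a-i}$ so that the logarithmic bound comes out in the right direction, which is why one uses $t-i \le t$ in the denominator of each integral estimate rather than the coarser $t-a-i \le t$.
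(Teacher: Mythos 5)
Your proof is correct and follows essentially the same route as the cited source (and as the paper's own adaptation of that argument in the zero-error lower bound): fix the public coins by averaging against the uniform distribution on $\binom{[t]}{a}$, count that each message can be correct on at most $\binom{t-b}{a}$ inputs, and bound the binomial ratio via the telescoping product $\prod_{i=0}^{b-1}\tfrac{t-i}{t-a-i}\ge e^{ab/t}$. The only cosmetic point is to fix a convention so that "message length $\ell$" corresponds to at most $2^{\ell}$ distinct messages (e.g.\ fixed-length or prefix-free encodings), which affects nothing beyond an additive bit.
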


The above lower bound is mainly useful when $a b \in [t,t^2]$. For smaller inputs:

\begin{lemma}\label{lem:lb-avoid-low-range}
  The public-coin $\delta$-error one-way communication complexity of $\avoid(t,a,b)$ satisfies 
  \begin{align*}
    R^{\rightarrow}_\delta(\avoid(t,a,b)) \ge \min\left(\log(a+1), \log \frac{\ln(1/\delta)}{\ln(et/a)} \right) \,.
  \end{align*}
  For the deterministic case, we have $D^\rightarrow(\avoid(t,a,b)) \ge \log(a + 1)$.
\end{lemma}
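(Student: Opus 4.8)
The plan is to lower bound the public-coin randomized complexity via the easy direction of Yao's minimax principle, taking $\mu$ to be the uniform distribution on $\binom{[t]}{a}$, and to get the deterministic bound by the same argument specialized to $\delta = 0$. In both cases the key structural fact is that Bob's message determines a fixed output set, which (since $b\ge 1$) contains at least one fixed element; these ``representative'' elements, being few when $c$ is small, are easy for an adversarial $S$ to swallow.

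First I would do the deterministic bound. Suppose a deterministic one-way protocol has at most $2^c\le a$ distinct Bob-side messages $m$; for each, Bob outputs a fixed $T_m$ with $|T_m|=b\ge 1$, and I pick some $x_m\in T_m$. The set $X=\{x_m\}$ has size at most $a\le t$, so it extends to some $S\in\binom{[t]}{a}$. On input $S$, Alice sends some $m$ and Bob outputs $T_m\ni x_m\in S$, so disjointness fails. Hence $D^{\rightarrow}(\avoid(t,a,b))\ge\log(a+1)$.

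Next, the randomized bound. By averaging over the public coins it suffices to show that a deterministic protocol with $\mu$-error at most $\delta$ has cost $c\ge\min(\log(a+1),\log\tfrac{\ln(1/\delta)}{\ln(et/a)})$. Write $k=2^c$ for the number of messages, let $g$ be Alice's message map, and as above pick $x_m\in T_m$ and set $X=\{x_m\}$, $k'=|X|\le k$. Whenever $X\subseteq S$ the protocol errs, since $x_{g(S)}\in X\subseteq S\cap T_{g(S)}$; therefore
\[ \Pr_{S\sim\mu}[X\subseteq S] \;=\; \frac{\binom{t-k'}{a-k'}}{\binom{t}{a}} \;=\; \prod_{i=0}^{k'-1}\frac{a-i}{t-i} \;\ge\; \prod_{i=0}^{k-1}\frac{a-i}{t-i} \]
is a lower bound on the error, the last step holding because each factor is at most $1$. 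If $k>a$ this product is $0$, but then $c>\log a$, hence $c\ge\log(a+1)$ and we are done; so assume $k\le a$. Using $t-i\le t$ and then an integral comparison,
\[ \sum_{j=a-k+1}^{a}\ln j \;\ge\; \int_{a-k}^{a}\ln x\,dx \;=\; a\ln a-(a-k)\ln(a-k)-k \;\ge\; k\ln(a/e), \]
where the final inequality is just $(a-k)\ln\tfrac{a}{a-k}\ge 0$; exponentiating gives
\[ \prod_{i=0}^{k-1}\frac{a-i}{t-i} \;\ge\; t^{-k}\,\frac{a!}{(a-k)!} \;\ge\; \left(\frac{a}{et}\right)^{k}. \]
Since the error is at most $\delta$, we get $(a/(et))^k\le\delta$, i.e. $k\ln(et/a)\ge\ln(1/\delta)$, i.e. $c\ge\log\tfrac{\ln(1/\delta)}{\ln(et/a)}$. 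Combining the two cases yields the claimed bound.

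I expect the only real obstacle to be the last estimate: individual factors $\tfrac{a-i}{t-i}$ shrink to roughly $1/t$ as $i\to a$, so a naive termwise bound by $a/(et)$ fails, and one must bound the whole product at once — which the integral comparison above handles cleanly, giving the matching exponent. The remaining points — the convention relating cost $c$ to the number $2^c$ of messages, and the edge cases $k>a$ and $k'<k$ — are routine and handled as indicated.
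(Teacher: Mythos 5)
Your proposal is correct and follows essentially the same route as the paper: fix the public coins by averaging, pick one representative element from each message's output set to form a hitting set, observe that any input containing the whole hitting set forces an error, and lower-bound the probability of that event under the uniform distribution, with the same case split on whether the number of messages exceeds $a$. The only cosmetic difference is that you justify $\prod_{i=0}^{k-1}\frac{a-i}{t-i}\ge(a/(et))^k$ via an integral comparison where the paper invokes $a!\ge(a/e)^a$; both yield the identical bound.
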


\begin{proof}
  Say we have a public coin one-way randomized protocol $\Pi$ for $\avoid(t,a,b)$ with error $\delta$; by the averaging argument, there exists a fixing of the randomness of the protocol,
  which is correct on $\ge 1 - \delta$ of the sets in $\binom{[t]}{a}$. Let $\Psi$ be this deterministic protocol, and let $\hat{m}$ be the number of distinct messages sent by $\Psi$. Each message $i \in [\hat{m}]$ corresponds to some set $B_i$ that Bob outputs on receiving the message. Let $E := \{e_1,\ldots,e_m\}$ be a hitting set for $\{B_i\}_{i\in[\hat{m}]}$ of size $m \le \hat{m}$; i.e, for all $B_i$, there is some $e_j \in B_i$. Let $\cC \subseteq \binom{[t]}{a}$ be the set of inputs for which $\Psi$ is correct;
  we note that no inputs in $\cC$ can contain all of $E$, because if $A \supseteq E$, then every $B_i$ intersects $A$, making the protocol fail. Assuming $m \le a$, we have:
  \begin{align*}
    \delta &\ge 1 - |\cC|/\binom{t}{a} \ge |\{A \in \binom{t}{a} : A \supseteq E \} | /\binom{t}{a} \\
        &= \binom{t - m}{a - m} /\binom{t}{a} = \frac{a \cdot (a-1) \cdots (a-m+1)}{t \cdot (t-1) \cdots (t-m+1)} \ge \left(\frac{a / e}{t}\right)^m \,,
  \end{align*}
  where the last step is derived from the well known inequality $a! \ge (a/e)^a$.
  Rearranging gives $m \ge \ln(1/\delta) / \ln(et/a))$. In the case where $m > a$, this argument does not work, because then $\binom{t - m}{a - m} = 0$. Combining the two cases gives: $\hat{m} \ge m \ge \min(a+1, \ln(1/\delta) / \ln(et/a)))$. Thus $R^\rightarrow_\delta(\avoid(t,a,b)) \ge \log(\min(a+1, \ln(1/\delta) / \ln(et/a)))$.
  

  For general deterministic protocols, we reuse the analysis of randomized protocols with $\delta = 0$, concluding that $D^\rightarrow(\avoid(t,a,b)) \ge \log(a+1)$.
\end{proof}

The following lemma is a simple variation of Chernoff's and Azuma's inequalities; for completeness, we present a proof in \Cref{sec:appendix}.

\begin{lemma}[Modified Azuma's inequality] \label{lem:azumanoff}
  Let $X_1,\ldots,X_n$ be $\{0,1\}$ 
  random variables, with $\EE[X_i \mid X_1 = x_1,\ldots,X_{i-1} = x_{i-1}] \le p$
  for all $i$ and all $x_1,\ldots,x_n \in \{0,1\}^n$. Then
  \begin{align*}
    \Pr\left[\sum_{i=1}^n{X_i} \ge n p (1 + \delta)\right] \le \left(\frac{e^\delta}{(1+\delta)^{1+\delta}}\right)^{n p} \le e^{ -\frac{\delta^2}{2+\delta} n p } \,.
  \end{align*}
\end{lemma}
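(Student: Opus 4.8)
The plan is to run the standard exponential-moment (Chernoff) argument, with the only twist being that the usual independence hypothesis is replaced by the one-sided martingale-type condition, which still suffices to bound the moment generating function one coordinate at a time.

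First I would fix a parameter $t > 0$ and prove by induction on $k$ that $\EE\bigl[e^{t\sum_{i=1}^{k} X_i}\bigr] \le (1 + p(e^t-1))^k$. The base case $k = 0$ is the trivial identity $1 = 1$. For the inductive step, I condition on $X_1 = x_1, \ldots, X_{k-1} = x_{k-1}$: since $X_k \in \{0,1\}$ we have $\EE[e^{tX_k}\mid X_1,\ldots,X_{k-1}] = 1 + (e^t-1)\Pr[X_k = 1\mid X_1,\ldots,X_{k-1}]$, and because $t > 0$ makes $e^t - 1 > 0$, the hypothesis $\Pr[X_k = 1\mid X_1,\ldots,X_{k-1}]\le p$ gives $\EE[e^{tX_k}\mid X_1,\ldots,X_{k-1}] \le 1 + p(e^t-1)$. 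This bound is deterministic, so it pulls out of $\EE\bigl[e^{t\sum_{i=1}^{k-1}X_i}\cdot\EE[e^{tX_k}\mid X_1,\ldots,X_{k-1}]\bigr]$, completing the induction. Using $1 + x \le e^x$ then yields $\EE\bigl[e^{t\sum_{i=1}^{n}X_i}\bigr] \le e^{np(e^t-1)}$.

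Next I would apply Markov's inequality to $e^{t\sum X_i}$, obtaining $\Pr[\sum X_i \ge np(1+\delta)] \le e^{-tnp(1+\delta)}\cdot e^{np(e^t-1)}$ for every $t > 0$, and then optimize by choosing $t = \ln(1+\delta)$ (legitimate since $\delta > 0$). Substituting gives exactly $\bigl(e^\delta/(1+\delta)^{1+\delta}\bigr)^{np}$, the first claimed bound. The second inequality $e^\delta/(1+\delta)^{1+\delta} \le e^{-\delta^2/(2+\delta)}$ is equivalent to $(1+\delta)\ln(1+\delta) - \delta \ge \delta^2/(2+\delta)$, a standard single-variable fact: both sides, together with their first derivatives, vanish at $\delta = 0$, and a short computation shows the second derivative of the left side, $1/(1+\delta)$, dominates that of the right side, $8/(2+\delta)^3$, for all $\delta \ge 0$ (i.e. $(2+\delta)^3 \ge 8(1+\delta)$), so the desired inequality propagates by integrating twice. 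I do not anticipate any real obstacle; the single point requiring care is the direction of the one-sided estimate on $\Pr[X_k = 1\mid\cdots]$, which works out precisely because we restrict to $t > 0$, and the degenerate cases $p = 0$ or $n = 0$ make the statement vacuous.
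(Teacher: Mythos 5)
Your proposal is correct and follows essentially the same route as the paper's proof in the appendix: bound the moment generating function by peeling off conditional expectations one coordinate at a time (your induction is just a reorganization of the paper's nested-expectation computation), apply Markov's inequality, set $t = \ln(1+\delta)$, and finish with the standard inequality $\delta - (1+\delta)\ln(1+\delta) \le -\delta^2/(2+\delta)$. The only difference is that you verify that last inequality by a second-derivative comparison where the paper simply cites it, and your check is correct.
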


A number of versions of Brouwer's fixed point theorem have been proven; in this paper, we will use the following, which is equivalent to Corollary 2.15 of \cite{Hatcher-book}.

\begin{lemma}[Brouwer's fixed point theorem]\label{lem:brouwer-fixed-point}
  Every continuous map from a space homeomorphic to an $n$ dimensional-ball to itself has a fixed point.
\end{lemma}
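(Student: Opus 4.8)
The final statement is Brouwer's fixed point theorem, which the paper evidently intends to invoke rather than prove; nonetheless, here is how I would prove it.

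Since having a fixed point is preserved under conjugation by homeomorphisms, the plan is first to reduce to one convenient model of the $n$-ball. Given a homeomorphism $h$ from the space $X$ onto the standard closed ball --- equivalently, onto the standard $n$-simplex $\Delta^n$, since the two are homeomorphic --- a continuous self-map $f$ of $X$ induces the continuous self-map $h \circ f \circ h^{-1}$ of $\Delta^n$, and any fixed point of the latter pulls back under $h^{-1}$ to a fixed point of $f$. So it suffices to show that every continuous $g \colon \Delta^n \to \Delta^n$ has a fixed point, and I would establish this combinatorially via Sperner's lemma.

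Concretely, I would write points of $\Delta^n$ in barycentric coordinates $x = (x_0, \dots, x_n)$ and write $g(x) = (g_0(x), \dots, g_n(x))$. For any $x$ that is not already a fixed point, the identities $\sum_i x_i = \sum_i g_i(x) = 1$ force some index $i$ to satisfy $x_i > 0$ and $g_i(x) < x_i$; choosing such an $i$ defines a color $c(x) \in \{0, \dots, n\}$. (If any vertex encountered below happens to be a fixed point we are already done, so assume not.) The point of this definition is that $c$, restricted to the vertices of any triangulation of $\Delta^n$, is a Sperner coloring: a vertex lying in the face spanned by a subset $S$ of the coordinates has $x_i = 0$ outside $S$, so its color lies in $S$. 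Sperner's lemma then supplies, in every triangulation, a rainbow simplex whose vertices carry all $n+1$ colors.

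Next I would take successive barycentric subdivisions, whose mesh tends to $0$, obtaining rainbow simplices of vanishing diameter; by compactness of $\Delta^n$ I can pass to a subsequence along which, for each color $i$, the $i$-colored vertex converges, necessarily to a common limit $x^\star$. Continuity of the $g_i$ gives $g_i(x^\star) \le x^\star_i$ for all $i$, and comparing with $\sum_i g_i(x^\star) = \sum_i x^\star_i = 1$ forces equality coordinatewise, so $g(x^\star) = x^\star$. The only genuinely nontrivial ingredient is Sperner's lemma itself, which I would prove by the standard parity (``door-counting'') induction on $n$; the limiting argument is routine given compactness. The honest assessment of the main obstacle is that Brouwer's theorem simply is not elementary: one must either reproduce Sperner's lemma as above, or, as the cited reference does, argue by contradiction --- a fixed-point-free $g$ would yield a retraction $\Delta^n \to \partial \Delta^n$, contradicting the homology computation $H_{n-1}(\partial \Delta^n) \cong \ZZ \ne 0 = H_{n-1}(\Delta^n)$. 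Since the paper only needs to cite the theorem, invoking the homology version as a black box is the most economical choice; if a self-contained account were wanted, I would take the Sperner route.
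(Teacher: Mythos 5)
Your proof sketch is correct, but it follows a genuinely different route from the paper, which does not prove this lemma at all: it simply cites it as (an equivalent of) Corollary 2.15 of Hatcher's textbook, where the theorem is obtained from the no-retraction theorem via the homology computation $H_{n-1}(\partial D^n) \cong \ZZ \ne 0 = H_{n-1}(D^n)$ --- exactly the alternative you mention at the end. Your Sperner-based argument is the standard elementary proof, and each step checks out: the conjugation reduction to $\Delta^n$ is sound; the labeling $c(x)$ is well defined because $\sum_i x_i = \sum_i g_i(x) = 1$ and $g(x) \ne x$ force some coordinate to strictly decrease (hence be positive); the face condition makes it a valid Sperner coloring; and the compactness/continuity limit of rainbow simplices under vanishing mesh yields $g_i(x^\star) \le x^\star_i$ for all $i$, which with the sum constraint forces equality. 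What your route buys is self-containedness modulo Sperner's lemma (itself an elementary parity induction), avoiding homology entirely; what the citation buys is brevity, which is all the paper needs since the theorem is used only as a black box to produce the fixed-point distributions $\nu_{s,i}$ in the white-box lower bound. Your assessment that citing is the economical choice here is the right call.
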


\section{Classical model}\label{sec:classical}

\begin{theorem}\label{thm:lb-classical}
  For any $\delta \le 1/(2n)$, the space complexity for an algorithm solving $\mif(n,r)$ with error $\le \delta$ is
  $S_\delta(\mif(n,r)) \ge S^{\textsc{det}} (\mif(\ceil{\frac{n}{t}}, \floor{\frac{r}{t}} ))$, for $t = \ceil{ \frac{r \log n}{\log \frac{1}{2 \delta}} }$. If we apply the upcoming lower bound from Theorem \ref{thm:lb-deterministic} on the deterministic space complexity of $\mif$, we get:
  \begin{align*}
    S_\delta(\mif(n,r)) &\ge \Omega\left( \sqrt{\min\left(r, \frac{\log(1/\delta)}{\log n}\right)} + \min\left(r, \frac{\log(1/\delta)}{\log n}\right) \frac{1}{1 + \log(n/r)} \right)
  \end{align*}
\end{theorem}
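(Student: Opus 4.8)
The plan is to prove the structural inequality $S_\delta(\mif(n,r)) \ge S^{\textsc{det}}(\mif(\ceil{n/t},\floor{r/t}))$ by a standard "static-to-deterministic'' reduction: convert a random-oracle algorithm solving the large instance with error $\le\delta$ into a deterministic algorithm solving the small instance, by hard-wiring a good oracle string found via a union bound over the (few) inputs of the small instance. The stated corollary then follows by plugging in \Cref{thm:lb-deterministic} and simplifying.

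\emph{The reduction.} Write $n'=\ceil{n/t}$, $r'=\floor{r/t}$, let $A$ solve $\mif(n,r)$ with error $\le\delta$ and space $s=S_\delta(\mif(n,r))$, and let $A_R$ be $A$ with the oracle string fixed to $R$. Fix "blocks'' $P_1,\dots,P_t\subseteq[n]$, each of size exactly $n'$, whose union is $[n]$ (possible since $t n'\ge n$, allowing consecutive blocks to overlap), and bijections $\phi_j:[n']\to P_j$. Define a deterministic streaming algorithm $B_R$ for $\mif(n',r')$: it simulates a single copy of $A_R$, and on receiving $a_i\in[n']$ it feeds $A_R$ the $t$ elements $\phi_1(a_i),\dots,\phi_t(a_i)$ (so $A_R$ sees $t r'\le r$ elements, padded up to $r$ by repeating $\phi_1(a_1)$); on query it reads the output $x$ of $A_R$, picks any $j$ with $x\in P_j$, and outputs $\phi_j^{-1}(x)$. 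Note $B_R$ stores only the state of $A_R$, hence uses $\le s$ bits. If $A_R$ is correct on the stream it was fed (whose element set is $\bigcup_j\phi_j(\{a_1,\dots,a_{r'}\})$), then $x$ misses every $\phi_j(\{a_i\})$, so $\phi_j^{-1}(x)\notin\{a_1,\dots,a_{r'}\}$ and $B_R$ is correct. Thus for every fixed $\sigma\in[n']^{r'}$, $\Pr_R[B_R \text{ errs on }\sigma]\le\Pr_R[A_R \text{ errs on the induced stream}]\le\delta$.

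\emph{Derandomization.} By the choice $t=\ceil{r\log n/\log\tfrac1{2\delta}}$ we have $r'\log n'=\floor{r/t}\log\ceil{n/t}\le (r/t)\log n\le\log\tfrac1{2\delta}$, so the number of possible inputs to $\mif(n',r')$ is $(n')^{r'}\le\tfrac1{2\delta}$. Hence $\Pr_R[\exists\,\sigma:B_R\text{ errs on }\sigma]\le(n')^{r'}\delta\le\tfrac12<1$, and we may fix $R^\ast$ making $B_{R^\ast}$ correct on all inputs. (The hypothesis $\delta\le1/(2n)$ is used only to ensure $t\le r$, so that $r'\ge1$.) This gives a deterministic $\mif(n',r')$ algorithm of space $\le s$, i.e. $S^{\textsc{det}}(\mif(n',r'))\le S_\delta(\mif(n,r))$. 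Finally, substituting \Cref{thm:lb-deterministic} for $S^{\textsc{det}}(\mif(n',r'))$ and using $r'=\Theta(\min(r,\log(1/\delta)/\log n))$ (valid since $\delta\ge1/n^r$) together with $n'/r'=\Theta(n/r)$, one obtains $\sqrt{r'}+r'/(1+\log(n'/r'))=\Omega\big(\sqrt{\min(r,\log(1/\delta)/\log n)}+\min(r,\log(1/\delta)/\log n)/(1+\log(n/r))\big)$, as claimed.

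\emph{Main obstacle.} The one genuinely non-obvious point is that $B$ must be a bounded-space streaming algorithm: the natural reduction that feeds a separate copy of the whole input $\sigma$ into each of the $t$ blocks would require $B$ to remember $\sigma$, defeating the purpose. Interleaving — feeding the $t$ block-images of each incoming element as soon as it arrives — removes the need for any memory beyond the state of $A_R$. The remaining ingredients (the exact value of $t$, the union-bound arithmetic, and the floor/ceiling and block-overlap bookkeeping, including checking $1+\log(n'/r')=\Theta(1+\log(n/r))$) are routine.
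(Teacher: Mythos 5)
Your proposal is correct and follows essentially the same route as the paper: reduce the small deterministic instance to the large randomized one by expanding each incoming element into its $t$-element fiber on the fly, then fix a good oracle string via a union bound over the at most $1/(2\delta)$ inputs of $\mif(\ceil{n/t},\floor{r/t})$. The only difference is cosmetic — the paper uses the map $f(x)=\floor{x/t}$ and feeds $f^{-1}(e)$ where you use $t$ parallel blocks and feed $\phi_1(a),\dots,\phi_t(a)$ — and the paper pads with arbitrary elements at the start of the stream rather than at the end.
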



\begin{proof}
  Let $t$ be an integer satisfying $\ceil{\frac{n}{t}}^\floor{\frac{r}{t}} < \frac{1}{\delta}$; setting $t = \ceil{ \frac{r \log n}{\log \frac{1}{2\delta} } }$ suffices, because
  \begin{align*}
    \log \left(\ceil{\frac{n}{t}}^\floor{\frac{r}{t}} \right) \le \floor{\frac{r}{t}} \log \ceil{\frac{n}{t}} \le \frac{r}{t} \log n \le \frac{r}{\ceil{r \log n / \log(1/2\delta) } } \log n \le \frac{\log(1/2\delta)}{\log n} \log n < \log \frac{1}{\delta} \,.
  \end{align*}
  Note also that because $\delta \le 1/(2n)$, $t \le r$, and $\floor{\frac{r}{t}} \ge 1$.

  Given a randomized algorithm $\Pi$ that solves $\mif(n,r)$ with error $\le \delta$ on any input stream, we will show how to construct a randomized algorithm $\Psi$ which solves $\mif(\ceil{n/t},\floor{r/t})$ with the same error probability. As there are only $\ceil{n/t}^\floor{r/t}$ possible input streams for the $\mif(\ceil{n/t},\floor{r/t})$ task, the probability (over randomness used by $\Psi$) of the event $E$ than an instance $A$ of $\Psi$ succeeds on \emph{any} of the streams in $[\ceil{n/t}]^\floor{r/t}$ is $\ge 1 - \delta \ceil{\frac{n}{t}}^\floor{\frac{r}{t}} > 0$. Therefore, by fixing the random bits of $\Psi$ to some value for which the event $E$ occurs, we obtain a deterministic protocol $\Phi$ for $\mif(\ceil{n/t},\floor{r/t})$.

  We now explain the construction of $\Psi$ given $\Pi$. Let $f : [n] \mapsto [\ceil{n/t}]$ be the function given by $f(x) = \floor{x / t}$. For any $y \in [\ceil{n/t}]$, we have that $f^{-1}(y)$ is a nonempty set of size $\le t$. The protocol $\Psi$ starts by initializing an instance $A$ of $\Pi$, and sending it $r - t \floor{r/t}$ arbitrary stream elements.

  When $\Psi$ receives an element $e \in [\ceil{n/t}]$, it sends a sequence of $t$ elements of $[n]$ to $A$, namely, the elements of $f^{-1}(e)$, in arbitrary order, repeating elements if $|f^{-1}(e)| < t$. To output an element, $\Psi$ queries $A$ to obtain $i \in [n]$, and reports $f(i)$. Assuming $A$ did not fail, $f(i)$ is guaranteed to be a correct answer. If we assume for sake of contradiction that $f(i) = e$ for some element $e$ sent to $\Psi$ earlier, then $A$ must have been sent all elements in $f^{-1}(e)$ -- which implies that $i \in f^{-1}(e)$ and that $A$ gave an incorrect output, contradicting the assumption that $f(i) = e$.
  Thus, we have proven that $\Psi$ fails with no greater probability than $\Pi$, which is all that is needed to complete this part of the proof.

  Having shown that $S_\delta(\mif(n,r)) \ge S^{\textsc{det}} (\mif(\ceil{\frac{n}{t}}, \floor{\frac{r}{t}} ))$, we now substitute in the lower bound from Theorem \ref{thm:lb-deterministic}.
  \begin{align*}
   S_\delta(\mif(n,r)) &\ge  S^{\textsc{det}}\left(\mif\left(\ceil{\frac{n}{t}},\floor{\frac{r}{t}}\right)\right) 
      \ge \Omega\left(\max\left(\sqrt{\floor{\frac{r}{t}}}, \frac{\floor{r/t}}{1 + \log(\ceil{n/t}/\floor{r/t})} \right)\right)
  \end{align*}
  Because $\floor{r/t} = \Theta\left(\min\left(r, \frac{\log(1/\delta)}{\log n}\right)\right)$, and $\ceil{n/t}/\floor{r/t} = \Theta(n/r)$, and $\Omega(\max(a,b))=\Omega(a+b)$, this simplifies to:
  \begin{align*}
    S_\delta(\mif(n,r)) = \Omega\left( \sqrt{\min\left(r, \frac{\log(1/\delta)}{\log n}\right)} + \min\left(r, \frac{\log(1/\delta)}{\log n}\right) \frac{1}{1 + \log(n/r)} \right)
  \end{align*}
\end{proof}

\subsection{Upper bound: a sampling algorithm}

\begin{figure}[ht]
  \centering
  \includegraphics[width=10cm]{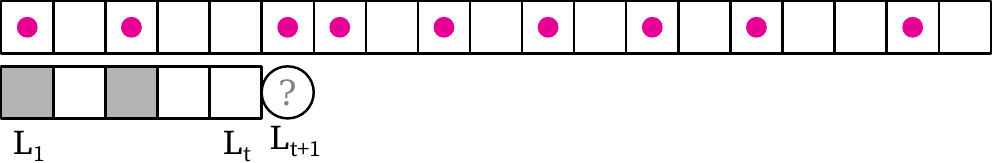}
  \caption{This diagram shows the behavior of \Cref{alg:classical} on an example input. The top row of squares corresponds to the set $[n]$, ordered so that the leftmost squares corresponds to the elements $L_1$, $L_2$, $\ldots$, $L_{t+1}$ from \Cref{alg:classical}. In the top row, cells contain a pink dot if the corresponding element has already been seen in the stream. In the bottom row, each of the cells is shaded dark if the corresponding entry in the vector $x$ is equal to $1$ -- except for $L_{t+1}$, whose state \Cref{alg:classical} does not track.}
\end{figure}

\begin{theorem}\label{thm:ub-classical}
 \Cref{alg:classical} solves $\mif(n,r)$ with error $\le \delta$ on any fixed input stream, and uses $s \le \min(r, \frac{\log(1/\delta)}{\log(n/r)})$ bits of space. (This assumes oracle access to $O((s+1) \log n)$ random bits.)
\end{theorem}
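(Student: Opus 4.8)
The plan is to design a sampling-based algorithm that picks a short "target list" of $t+1$ candidate answers in $[n]$, tracks which of the first $t$ of them have appeared in the stream, and outputs the first one that is still available. The key design point is choosing $t$ large enough that, on any fixed stream, the probability that all of the first $t$ list elements get hit is at most $\delta$. Since the stream has only $r$ elements, the adversary (who is oblivious here, as this is the static setting) can hit at most $r$ distinct values; if we choose the list $L_1,\dots,L_{t+1}$ uniformly at random (say as a random injection, or $t+1$ independent uniform samples with a negligible collision correction), then for a fixed stream of $r$ elements the chance that a particular $L_i$ lands inside the stream's element set is at most $r/n$ (or $r/(n-O(t))$ after conditioning), and these events are near-independent, so the chance all $t$ are hit is roughly $(r/n)^t$. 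Setting this $\le \delta$ requires $t \gtrsim \log(1/\delta)/\log(n/r)$, which matches the claimed space bound; and we also cap $t$ at $r$, since with a list of length $r+1$ the pigeonhole principle guarantees one element survives deterministically, recovering the trivial bound $\min(r,\cdot)$.

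The steps I would carry out, in order: (1) State \Cref{alg:classical} precisely — sample $L_1,\dots,L_{t+1}$ from the random oracle (so storing them is free), maintain a bit vector $x \in \{0,1\}^{[t]}$ recording which of $L_1,\dots,L_t$ have been seen, and on a query output $L_j$ for the least $j$ with $x_j=0$, or $L_{t+1}$ if no such $j$ exists. (2) Correctness/no-failure structure: by construction the output is never an element equal to any $a_i$ unless \emph{every} $L_1,\dots,L_t$ was hit \emph{and} $L_{t+1}$ was also hit; so it suffices to bound the probability of the bad event $B$ that $\{L_1,\dots,L_{t+1}\} \subseteq \{a_1,\dots,a_r\}$. (3) Probability bound: fix the stream; let $A = \{a_1,\dots,a_r\}$, so $|A| \le r$. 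If the $L_i$ are i.i.d.\ uniform on $[n]$, then $\Pr[B] \le (|A|/n)^{t+1} \le (r/n)^{t+1} \le \delta$ for the chosen $t$ — a clean direct computation, handling repetitions trivially since $|A|\le r$ regardless; one has to double-check the edge case $r \ge n$ doesn't arise (it doesn't, since $r<n$) and that when $t=r$ the list has $r+1$ distinct slots so pigeonhole forces success — here I'd either sample a random injection or just note $r+1 > |A|$ isn't automatic with i.i.d.\ samples, so for the $t=r$ regime I would instead take $L$ to be (a prefix of) a fixed enumeration or a random permutation, making $B$ impossible. (4) Space accounting: the mutable state is the $t$-bit vector $x$ plus a counter, i.e.\ $O(t) = O(\min(r,\log(1/\delta)/\log(n/r)))$ bits, while $L$ costs $(t+1)\lceil\log n\rceil$ random bits drawn from the oracle and hence is not charged; state the final bound $s \le \min(r,\log(1/\delta)/\log(n/r))$ with the $O((s+1)\log n)$ oracle-randomness caveat.

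The main obstacle is not any deep argument but getting the two regime boundaries to mesh cleanly: when $\log(1/\delta)/\log(n/r) \ge r$ we must fall back to the deterministic pigeonhole list of length $r+1$ (where the failure probability is genuinely zero), and when it is smaller we use the randomized list of length $t+1$ with $t = \lceil \log(1/\delta)/\log(n/r)\rceil$ or so; I would phrase \Cref{alg:classical} so that $t = \min(r, \lceil \log(1/\delta)/\log(n/r) \rceil)$ and verify that in the first case $t=r$ makes the list cover more slots than any stream can hit (using that entries can be taken distinct), while in the second case the $(r/n)^{t}$-type bound does the job. A secondary subtlety is making the informal "near-independence" rigorous despite possible collisions among the sampled $L_i$; the cleanest fix is to observe that i.i.d.\ sampling only helps (collisions make $B$ less likely, not more) so the naive $(r/n)^{t+1}$ bound is valid as stated, or alternatively to sample without replacement and bound $\binom{|A|}{t+1}/\binom{n}{t+1} \le (r/n)^{t+1}$. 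I expect the write-up to be short once these case distinctions are pinned down.
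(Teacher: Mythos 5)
Your proposal matches the paper's proof essentially exactly: \Cref{alg:classical} samples $L$ as a uniformly random sequence of $t+1$ distinct elements (sampling without replacement, which handles both your collision worry and the $t=r$ pigeonhole case), tracks the first $t$ with a bit vector, and bounds the failure probability by $\binom{r}{t+1}/\binom{n}{t+1} \le (r/n)^{t+1} \le \delta$, with the $t=r$ case giving zero error since no set of size $r$ contains a set of size $r+1$. No substantive differences.
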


\begin{algorithm}[!ht]
  \caption{A streaming algorithm for $\mif(n,r)$ with error rate $\le \delta$ on any input stream 
    \label{alg:classical}}

  \begin{algorithmic}[1]
  \Statex Let $t = \min(r, \floor{\log(1/\delta)/\log(n/r)})$
  \Statex 
  
  \Statex \ul{\textbf{Initialization}}:
    \State Let $L = \{L_1,\ldots,L_{t+1}\}$ be a fixed sequence of elements in $[n]^{t+1}$ without repetitions, chosen uniformly at random. \textcolor{darkgray}{(This can be stored explicitly using $O((t+1) \log n)$ bits, or computed on demand as a function of $O((t+1) \log n)$ oracle random bits.)} 
    \State $x \gets \{0,\ldots,0\}$, a vector in $\{0,1\}^t$
    
  \Statex
  \Statex \ul{\textbf{Update}($e \in [n]$)}:
    \If{$\exists j \in [t] : L_j = e$}
      \State $x_j \gets 1$
    \EndIf
    
  \Statex
  \Statex \ul{\textbf{Query}}:
    \If{$\exists j \in [t] : x_j = 0$}
      \State \textbf{output}: $L_j$\label{line:ret-tracked}
    \Else
      \State \textbf{output}: $L_{t+1}$\label{line:ret-untracked}
    \EndIf
  \end{algorithmic}
  
\end{algorithm}

\begin{proof}
  First, we observe that \Cref{alg:classical} gives an incorrect output only when the input stream $\sigma = (e_1,\ldots,e_r)$ contains every element of $L$. Otherwise,
  either the first $t$ elements of $L$ are in $\sigma$, and $L_{t+1}$ isn't -- in which case Line \ref{line:ret-untracked} returns $L_{t+1}$ -- or there is some $j \in [t]$ where $L_j$ has not been seen in the stream so far, in which case Line \ref{line:ret-tracked} correctly returns $L_j$. Given a fixed input stream $\sigma \in [n]^r$, the probability that \Cref{alg:classical} fails is:
  \begin{align*}
    \Pr[L \subseteq \sigma] = \binom{|\sigma|}{t+1} \big/ \binom{n}{t+1} \le \binom{r}{t+1} \big/ \binom{n}{t+1} = \frac{r(r-1)\cdots(r-t)}{n(n-1)\cdots(n-t)} \le \left(\frac{r}{n}\right)^{t+1}
  \end{align*}
  Thus $\Pr[L \subseteq \sigma]$ is $\le \delta$ when $t = \floor{\log(1/\delta)/\log(n/r)}$, and is equal to $0$ when $t = r$, because no set of size $r$ can contain a set of size $r+1$.
\end{proof}

\section{Adversarially robust model}\label{sec:advrobust}

\begin{theorem}\label{thm:lb-advrobust}Any algorithm which solves $\mif(n,r)$ against adaptive adversaries with total error $\delta$ requires $\ge \log(\binom{n}{\ceil{r/2}} / \binom{n - \ceil{r/2}}{\floor{r/2} + 1}) + \log(1-\delta)$ bits of space; or less precisely, $\Omega(r^2/n + \log(1-\delta))$.
\end{theorem}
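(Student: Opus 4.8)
The plan is to reduce from the communication game $\avoid(n,\ceil{r/2},\floor{r/2}+1)$ and apply the lower bound of Lemma~\ref{lem:avoid-lb}. The intuition is that an adaptive adversary for $\mif(n,r)$ can spend the first half of the stream feeding in a set $S$ of size $\ceil{r/2}$ chosen by Alice, then read the algorithm's output; because the algorithm is robust, it must keep producing valid missing items, so by probing it a few times (at most $\floor{r/2}+1$ additional stream elements) the adversary extracts a set $T$ of size $\floor{r/2}+1$ disjoint from $S$. The algorithm's state after the first half of the stream is exactly the ``message'' in the communication game, so $S^{AR}_\delta(\mif(n,r))$ is at least the one-way communication complexity of this $\avoid$ instance.

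More concretely: First I would set $a = \ceil{r/2}$, $b = \floor{r/2}+1$ and describe the reduction. Alice, on input $S \in \binom{[n]}{a}$, runs the robust algorithm on the stream consisting of the elements of $S$ (in arbitrary order, using the public coins as the algorithm's random oracle), and sends Bob the resulting state $s_a$, which costs $S^{AR}_\delta(\mif(n,r))$ bits. Bob then continues the simulation: he queries the algorithm to get an output $x_1$, feeds $x_1$ back into the stream, queries again to get $x_2$, feeds it in, and so on, collecting $x_1,\ldots,x_b$ after $b$ further steps (the stream has total length $a+b \le r$, so this is legal). Bob outputs $T = \{x_1,\ldots,x_b\}$. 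The second step is correctness: I claim that, conditioned on the algorithm never erring on this adversarially-generated stream (probability $\ge 1-\delta$), every $x_i$ avoids all previously seen elements, hence avoids $S$ and avoids $x_1,\ldots,x_{i-1}$; so $|T| = b$ and $T \cap S = \emptyset$, i.e.\ Bob succeeds. Thus the protocol solves $\avoid(n,a,b)$ with error $\le \delta$, giving $S^{AR}_\delta(\mif(n,r)) \ge R^{\rightarrow}_\delta(\avoid(n,a,b))$.

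The third step is to plug in Lemma~\ref{lem:avoid-lb}: $R^{\rightarrow}_\delta(\avoid(n,a,b)) \ge \log\!\big(\binom{n}{a}/\binom{n-b}{a}\big) + \log(1-\delta)$, which with $a = \ceil{r/2}$, $b = \floor{r/2}+1$ is the stated bound. For the ``less precise'' form, the lemma's convenient corollary gives $R^{\rightarrow}_\delta(\avoid(n,a,b)) \ge \frac{ab}{n\ln 2} + \log(1-\delta) = \Omega(r^2/n) + \log(1-\delta)$, since $ab = \Theta(r^2)$.

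I expect the only genuinely delicate point to be the bookkeeping around the semantics of ``the algorithm produces an output after each element'' versus ``the algorithm is queried,'' and making sure the feedback loop Bob runs is a legitimate adaptive adversary in the sense of the adversarial setting (it chooses $e_i$ as a function of the transcript $o_0,e_1,\ldots,o_{i-1}$ — here $e_i = o_{i-1}$, which is allowed). One should also double-check that repeated elements in $T$ are handled: if the algorithm ever repeats an output, that output already appeared in the stream, so the algorithm erred — hence on the good event all $x_i$ are distinct and $|T| = b$ exactly, as needed for the $\avoid$ game. None of this requires more than the definitions already given, so the proof is short.
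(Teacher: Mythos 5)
Your proposal is correct and is essentially identical to the paper's proof: reduce from $\avoid(n,\ceil{r/2},\floor{r/2}+1)$ by having Alice stream her set, send the state, and having Bob repeatedly query and feed back outputs, then apply Lemma~\ref{lem:avoid-lb}. One tiny bookkeeping fix: $a+b = r+1$, not $\le r$, but the last output $x_b$ never needs to be fed back into the stream, so only $a+(b-1)=r$ elements are actually inserted and the reduction is legal as the paper carries it out.
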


\begin{proof}
  We prove this by reducing the communication task $\avoid(n, \ceil{r/2}, \floor{r/2} + 1)$ (see Section \ref{sec:prelim}) to $\mif(n,r)$.
  
  Say Alice is given the set $A \subseteq [n]$ of size $\ceil{r/2}$. They instantiate an instance $\cX$ of the given algorithm for $\mif(n,r)$, and runs it on the partial stream of length $\ceil{r/2}$ containing the elements of $A$ in some arbitrary order. Alice then sends the state of $\cX$ to Bob; since this is a public coin protocol, all randomness can be shared for free. Bob then runs the following adversary against $\cX$; it queries $\cX$ for an element $b_0$, and then provides that element back to $\cX$,
  repeating this process $\floor{r/2} + 1$ times to recover elements $b_0,b_1,\ldots,b_{\floor{r/2} }$. The instance will fail to give correct answers to this adversary with total probability $\le \delta$. If it succeeds, then by the definition of the Missing Item Finding problem, $b_0 \notin A$, $b_1 \notin \{b_1\} \cup A$, and so on;
  thus Bob can report $B := \{b_0,\ldots,b_{\floor{r/2}+1}\}$ as a set of $\floor{r/2}+1$ elements which are disjoint from $A$.
  
  This \avoid protocol implementation uses the same number of bits of communication as $\cX$ does of space. By Lemma \ref{lem:avoid-lb}, it follows $\cX$ needs:
  \begin{align*}
    &\ge \log\left(\binom{n}{\ceil{r/2}} / \binom{n - \floor{r/2} - 1}{\ceil{r/2}}\right) + \log(1-\delta) \\
    &\ge \frac{\ceil{r/2}(\floor{r/2}+1)}{n \ln 2} + \log(1 - \delta) \ge \frac{r^2}{4 n \ln 2} + \log(1-\delta) \,,
  \end{align*}
  bits of space.
\end{proof}

\subsection{Upper bound: the hidden list algorithm}

\begin{figure}[ht]
  \centering
  \includegraphics[width=10cm]{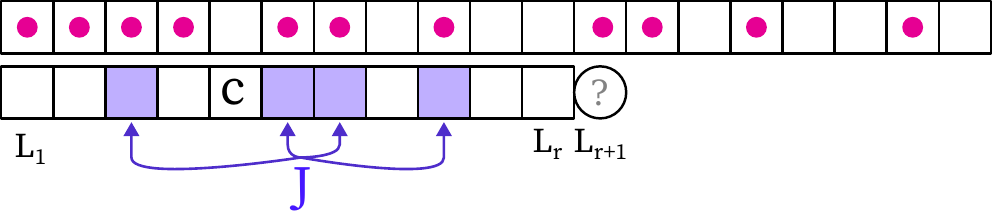}
  \caption{This diagram shows the behavior of \Cref{alg:hidden-list} on an example input. The top row of squares corresponds to the set $[n]$, ordered so that the leftmost squares corresponds to the elements $L_1$, $L_2$, $\ldots$, $L_{r+1}$ from \Cref{alg:hidden-list}. In the top row, cells contain a pink dot if the corresponding element has already been seen in the stream. In the bottom row, the letter C indicates the cell corresponding to $L_c$. Cells that are shaded dark blue indicate the values contained in $J$. The third cell from the left is included in $J$ because, at the time the element $L_3$ was added by the adversary, $c$ was less than or equal to $2$.}
\end{figure}

\begin{theorem}\label{thm:ub-advrobust}
 Algorithm \ref{alg:hidden-list} solves $\mif(n,r)$ against adaptive adversaries, with error $\delta$, and can be implemented using $O(\min(r, \left(1 + \frac{r^2}{n} + \ln\frac{1}{\delta}\right) \cdot \log r))$ bits of space. (It assumes oracle access to $(r+1) \log n$ random bits.)
\end{theorem}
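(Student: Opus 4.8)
The plan is to show that \Cref{alg:hidden-list} maintains a counter $c\in[r+1]$ together with a set $J$ of list positions lying strictly ahead of $c$, is always correct while its space is not artificially capped, and overruns a space budget of $O((1+r^2/n+\ln\tfrac1\delta)\log r)$ with probability at most $\delta$ against any adaptive adversary; capping it at that budget then gives the claimed algorithm. The $\min$ with $r$ is handled up front: if $(1+r^2/n+\ln\tfrac1\delta)\log r\ge r$ --- which in particular holds once $r\ge n/3$, since then the left side exceeds $\tfrac r3\log r$ --- one runs \Cref{alg:trivial} instead in $O(r)$ deterministic space, so henceforth we may assume $r\le n/3$.

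\emph{Correctness.} One verifies by induction on the stream that $c$ is always the least index with $L_c$ not yet in the stream and $J=\{j>c:L_j\text{ has appeared}\}$: on input $e$, if $e\notin\{L_1,\dots,L_{r+1}\}$ or $e=L_j$ with $j<c$ nothing changes; if $e=L_j$ with $j>c$ one inserts $j$ into $J$; and if $e=L_c$ one advances $c$ to the least larger index not in $J$ and deletes the skipped indices from $J$. Since $L$ has $r+1$ distinct entries and the stream has length $r$, some $L_j$ is always unseen, so the output $L_c$ is always valid. The only mutable state is $c$ and the list of elements of $J$, each an index in $[r+1]$, so the space is $O((1+|J|)\log r)$ and it remains to control $|J|$.

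\emph{Bounding $|J|$.} Let $X_i=1$ if processing the $i$-th input inserts a new index into $J$ --- equivalently, if that input is an as-yet-unseen list entry $L_j$ with $j>c$ --- so that $|J|\le\sum_{i=1}^r X_i$ at all times. I would analyze $L$ by deferred decisions, exposing list entries only as they are touched: at every moment the exposed positions are exactly $\{1,\dots,c\}\cup J$, and the remaining entries form a uniformly random injection into the pool $P$ of still-available elements of $[n]$. Conditioned on the history before step $i$, the input $e_i$ lies in the image of the unexposed part of $L$ with probability at most $(\#\text{unexposed positions})/|P|$. Now $\{1,\dots,c\}$, $J$, and the unexposed positions partition $[r+1]$, so there are at most $r$ unexposed positions and at most $r+1$ exposed values; the ``killed'' elements (inputs discovered not to be in $L$) number at most $r$; hence $|P|\ge n-(r+1)-r=n-2r-1$. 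With $r\le n/3$ this gives $\Pr[X_i=1\mid\text{history before step }i]\le p:=\tfrac{r}{\,n-2r-1\,}=O(r/n)$, and since the history refines $\sigma(X_1,\dots,X_{i-1})$, the hypothesis of \Cref{lem:azumanoff} holds with this $p$.

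\emph{Conclusion.} Choosing the deviation parameter in \Cref{lem:azumanoff} so that its right-hand side is at most $\delta$ (a constant suffices once $rp\ge\ln\tfrac1\delta$, and $\Theta(\ln\tfrac1\delta/(rp))$ otherwise) yields $\Pr\big[\sum_{i=1}^r X_i\ge C(1+r^2/n+\ln\tfrac1\delta)\big]\le\delta$ for a suitable constant $C$. Running \Cref{alg:hidden-list} but declaring failure whenever $|J|$ would exceed $C(1+r^2/n+\ln\tfrac1\delta)$, the worst-case space is $O((1+r^2/n+\ln\tfrac1\delta)\log r)$, failure occurs with probability $\le\delta$, and whenever there is no failure every output is correct; combined with the $r\le n/3$ reduction to \Cref{alg:trivial}, this gives the stated bound, with $L$ a function of $(r+1)\log n$ oracle-random bits. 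The delicate step is the deferred-decisions argument: one must pin down precisely which list entries the adversary can have learned --- the outputs it has seen and the entries it has itself sent --- and verify that the exposed-plus-killed set has size $O(r)$ \emph{independently of the current value of $|J|$}, which is exactly what keeps $p=O(r/n)$ rather than something that grows as the algorithm records more intersections.
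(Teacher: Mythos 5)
Your proposal is correct and follows essentially the same route as the paper's proof: the same invariant on $c$ and the collision set $J$, the same conditioning argument showing each stream element enters $J$ with probability $O(r/n)$ regardless of the adaptive history (the paper conditions explicitly on $L_{\le c_i}$ and the past $X_j$'s rather than phrasing it as deferred decisions, and gets $2r/n$ under $r\le n/2$ where you get $r/(n-2r-1)$ under $r\le n/3$), and the same application of \Cref{lem:azumanoff} followed by capping $|J|$. The only cosmetic differences are that you realize the $\min(r,\cdot)$ term by falling back to \Cref{alg:trivial} whereas the paper encodes $J$ as a length-$r$ bit vector in that regime.
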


\begin{algorithm}[!ht]
  \caption{An adversarially robust algorithm for $\mif(n,r)$ with error $\le \delta$
    \label{alg:hidden-list}}

  \begin{algorithmic}[1]
  \Statex Let $t = \min(r, \ceil{3 \frac{r^2}{n} + \ln\frac{1}{\delta}})$
  \Statex 
  
  \Statex \ul{\textbf{Initialization}}:
    \State Let $L = \{L_1,\ldots,L_{r+1}\}$ be a fixed sequence of elements in $[n]^{r+1}$ without repetitions, chosen uniformly at random. (Assuming oracle access to $O(r \log n)$  random bits, the value of $L$ can be computed on demand, instead of stored.)
    \State $c \leftarrow 1$, an integer in the range $\{1,\ldots,r+1\}$
    \State $J \leftarrow \emptyset$, a subset of $\{L_1,\ldots,L_r\}$ of size $\le t$
    
  \Statex
  \Statex \ul{\textbf{Update}($e \in [n]$)}:
    \While{$e = L_c$ or $L_c \in J$}
      \State $c \gets c + 1$
    \EndWhile
    \If{$e \in \{L_{c+1},\ldots,L_r\}$}\label{line:hl-jext}
      \State $J \gets J \cup \{e\}$
    \EndIf
    \If{$|J| > t$}
      \State \textbf{abort}
    \EndIf
  
  \Statex
  \Statex \ul{\textbf{Query}}:
    \State \textbf{output}: $L_c$
  
  \end{algorithmic}
    
\end{algorithm}

\begin{proof}
  First, we observe that the only way that \Cref{alg:hidden-list} can fail is if it aborts. At any point in the stream, the set $J$ includes the intersection of the earlier elements from the stream, with the list $\{L_{c+1},\ldots,L_{r}\}$ of possible future outputs. The while loop ensures that the element $L_c$ emitted will neither be equal to the current element nor collide with any past stream elements (those in $J$). It is not possible for $c$ to go out of bounds, because each element in the stream can lead to an increase in $c$ of at most one; either immediately when the element arrives, if $e = L_c$; or delayed slightly, if $e \in \{L_{c+1},\ldots,L_{r}\}$. Since the stream contains $r$ elements, $c$ will increase by at most $r$, to a value of $r + 1$. Note that if $c$ has reached the value $r + 1$, then the entire stream was a permutation of $\{L_1,\ldots,L_r\}$, making $L_{r+1}$ is a safe output.
  
  This algorithm needs $\log(r + 1)$ bits to store $c$, but the main space usage is in storing $J$. We will show that $|J| \le t$ with probability $\ge 1 - \delta$, in which case $J$ can be stored as either a bit vector of length $r$, or a list of $\le t$ indices in $[r]$, using $O(\min(r, t \log r))$ bits of space.
  
  
  We observe that after $i - 1$ elements have been received (and up to $i$ distinct elements emitted), the probability that the $i$th element chosen by the adversary will be newly stored in $J$ will be $\le 2 \frac{r}{n}$, no matter what the earlier elements were or what the adversary picks. If $r \ge n/2$, this is immediate. Otherwise, write $E_{i-1}$ for the set containing the first $i-1$ elements of the stream, $e_i$ for the $i$th element, and let $c_i$ be the value of the variable $c$ as of Line \ref{line:hl-jext}. Let $X_i$ denote the indicator random variable for the event that $e_i$ was not in $J$ before, but has been added now.
  
  Because the adversary has only been given outputs deriving from $L_{\le c_i} := ( L_1,\ldots,L_{c_i} )$,
  if we condition on the random variable $L_{\le c_i}$, then the element $e_i$ and set $E_{i-1}$ are independent of $L_{> c_i} := \{L_{c_i+1},\ldots,L_{r} \}$. Given $E_{i-1}$, the values $X_1,\ldots,X_{i-1}$ determine whether or not each element of $E_{i-1}$ is in $L_{> c_i}$. Then, conditioning on $L_{\le c_i}, e_i, E_{i-1}$, and $X_1,\ldots,X_{i-1}$, we have that $L_{> c_i} \setminus E_{i-1}$ is a set of size $r - c_i - |L_{> c_i} \cap E_{i-1}|$ chosen uniformly at random from $[n] \setminus L_{\le c_i} \setminus E_{i-1}$. Thus, if $e_i \notin L_{\le c_i} \cup E_{i-1}$, the probability that $X_i = 1$ is precisely the probability that $e_i$ is contained in $L_{> c_i} \setminus E_{i-1}$, so:
  \begin{align*}
    \Pr\left[X_i = 1 \mid (X_j)_{j=1}^{i-1}, e_i, E_{i-1}, L_{\le c_i}, \{e_i \notin L_{\le c_i} \cup E_{i-1}\}\right] &= \frac{r - c_i - |L_{> c_i} \cap E_{i-1}|}{n - c_i - |E_{i-1} \setminus L_{\le c_i}|} \\
      &\le \frac{r - c_i}{n - c_i - r} \le \frac{r}{n - r} \le \frac{2 r}{n} \,.
  \end{align*}
  On the other hand, the event $e_i \in L_{\le c_i} \cup E_{i-1}$, implies $X_i = 0$ always. Together, these imply $\Pr[X_i = 1 \mid (X_j)_{j=1}^{i-1}] \le 2r / n$.
  
  Then applying the (modified, see Lemma \ref{lem:azumanoff}) Azuma's inequality bound, we find that with $z := \max\{1, \frac{3 n}{2 r^2} \ln{\frac{1}{\delta}}\}$:
  \begin{align*}
    \Pr[\sum_{i\in[r]} X_i &\le \frac{2r^2}{n} (1 + z)] \le e^{-\frac{z}{2 + z} z \frac{2 r^2}{n}} \\
      &\le e^{- z \frac{2 r^2}{3 n}} && \text{since $z \ge 1$} \\
      &\le e^{- \ln{\frac{1}{\delta}}} = \delta \,. && \text{since $z \ge \frac{3 n}{2 r^2} \ln{\frac{1}{\delta}}$}
  \end{align*}
  This implies that the probability that $|J|$ exceeds $2 r^2 / n + 3 \ln(1/\delta)$ will be $\le \delta$. Consequently, our bound for the total space usage of the algorithm is:
  \begin{align*}
    O(\log r) &+ O(\min(r, \left(\frac{r^2}{n} + \ln\frac{1}{\delta}\right) \log r)) \\
      &= O(\min(r, \left(1 + \frac{r^2}{n} + \ln\frac{1}{\delta}\right) \cdot \log r))
  \end{align*}
\end{proof}

While it is possible to reduce the space usage of \Cref{alg:hidden-list} by removing all elements from the set $J$ that are less or equal than $c$, this only changes the constant factor.

\section{Zero error model}\label{sec:zero-error}

\begin{theorem}\label{thm:lb-zero-error} All algorithms solving $\mif(n,r)$ with zero error on any stream require $\Omega(r^2 / n)$ bits of space, in expectation over the randomness of the algorithm.
\end{theorem}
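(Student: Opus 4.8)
\section*{Proof plan for \Cref{thm:lb-zero-error}}

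The plan is to adapt the communication reduction from the proof of \Cref{thm:lb-advrobust} so that it controls \emph{expected} communication rather than worst-case communication. Fix a zero-error algorithm $\Pi$ for $\mif(n,r)$ with random oracle $R$ and expected space $S_0(\mif(n,r))$. Since $\Pi$ errs with probability exactly $0$ on each of the finitely many streams in $[n]^r$, for almost every fixing $R=\rho$ the induced deterministic algorithm $\Pi_\rho$ is correct on \emph{every} length-$r$ stream; condition on this. Put $a=\ceil{r/2}$ and $b=\floor{r/2}+1$, and note $a+b=r+1\le n$ (as $r<n$), so $\avoid(n,a,b)$ is nontrivial. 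The reduction will turn $\Pi$ into a zero-error public-coin one-way protocol for $\avoid(n,a,b)$ whose expected communication is at most $S_0(\mif(n,r))$, after which an argument in the spirit of \Cref{lem:avoid-lb} forces that expected communication to be $\Omega(ab/n)=\Omega(r^2/n)$.

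For the reduction: on input $A\in\binom{[n]}{a}$, Alice feeds $\Pi$ the elements of $A$ in a fixed (say sorted) order and sends the resulting state to Bob, who then runs the adaptive adversary of \Cref{thm:lb-advrobust} — query, feed back the answer, repeat — to extract a set $B\in\binom{[n]}{b}$ disjoint from $A$. Three observations make this work. (i) The state Alice transmits is a deterministic function of $\rho$ and $A$, so its bit-length is a well-defined function $c_\rho(A)$. (ii) The stream handled by $\Pi_\rho$ here (Alice's $a$ symbols, then Bob's feed-backs) has length $r$, and $\Pi_\rho$ is correct on every length-$r$ stream, so $B$ is always valid and the protocol is zero-error. (iii) Fix any length-$r$ stream $\sigma_A$ extending Alice's prefix — for instance pad with $r-a$ copies of $\min(A)$, which keeps it a legal $\mif(n,r)$ stream. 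The transmitted state is $\Pi_\rho$'s state after the first $a$ symbols of $\sigma_A$, so its bit-length is at most $\max_t|\mathrm{state}_t(\sigma_A)|=:\mathrm{space}_\rho(\sigma_A)$; taking expectations, $\EE_\rho[c_\rho(A)]\le\EE_\rho[\mathrm{space}_\rho(\sigma_A)]\le S_0(\mif(n,r))$ for every $A$.

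It remains to lower-bound $\EE_\rho[c_\rho(A)]$ for a uniformly random $A$. Fix $\rho$: the deterministic protocol $\Pi_\rho$ partitions $\binom{[n]}{a}$ into message classes, and on a given message $\mu$ Bob outputs a fixed $b$-set $B_\mu$; correctness forces every $A$ in that class to lie in $\binom{[n]\setminus B_\mu}{a}$, so each class contains at most $\binom{n-b}{a}$ inputs. Writing $K:=\binom{n}{a}\big/\binom{n-b}{a}$ and using that there are fewer than $2^L$ binary strings shorter than $L$ bits, at most $2^{L}\binom{n-b}{a}$ inputs receive a message shorter than $L$ bits; choosing $L\approx\log_2 K$ shows that at least half of all $A$ have $c_\rho(A)\ge\log_2 K-O(1)$, hence $\EE_A[c_\rho(A)]\ge\tfrac12\log_2 K-O(1)$. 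Averaging over $\rho$ and combining with (iii),
\begin{align*}
  S_0(\mif(n,r))\ \ge\ \EE_A\,\EE_\rho[c_\rho(A)]\ \ge\ \tfrac12\log_2 K-O(1)\,.
\end{align*}
Finally, the binomial estimate in \Cref{lem:avoid-lb} gives $\log_2 K\ge\frac{ab}{n\ln 2}\ge\frac{r^2}{4n\ln 2}$, so $S_0(\mif(n,r))=\Omega(r^2/n)$ (which is nontrivial only once $r\gtrsim\sqrt n$, consistent with the statement).

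The main obstacle is bridging worst-case versus expected cost on both sides at once: \Cref{lem:avoid-lb} is a worst-case communication bound, so one cannot quote it directly and must instead run the "a constant fraction of inputs are expensive" counting argument above; and one must check that a zero-error streaming algorithm whose cost is the \emph{expected} maximum space, when truncated to $a$ of its $r$ steps and then continued by an adaptive Bob, really does give a protocol with expected communication $\le S_0(\mif(n,r))$ — which is precisely why step (iii) pads Alice's prefix to a \emph{fixed} length-$r$ stream instead of letting Bob's randomness-dependent continuation define the stream.
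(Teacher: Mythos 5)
Your proposal is correct and follows essentially the same route as the paper: reduce $\avoid(n,\ceil{r/2},\floor{r/2}+1)$ to the zero-error streaming algorithm exactly as in \Cref{thm:lb-advrobust}, then lower-bound the \emph{expected} message length of the resulting zero-error protocol. The only (immaterial) difference is in that last step, where the paper bounds the entropy of the message ($p_m \le \binom{n-b}{a}/\binom{n}{a}$ for every message $m$) and invokes the source coding theorem, whereas you run the equivalent direct counting argument that at most $2^L\binom{n-b}{a}$ inputs can receive a message shorter than $L$ bits.
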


\begin{proof}
First, we prove that if there is a zero-error algorithm $\Phi$ for $\mif(n,r)$ using exactly $s$ bits, in expectation, then there is a communication protocol for $\avoid(n, \ceil{r/2}, \floor{r/2}+1)$ using prefix-encoded messages with an expected length of $s$ bits. The construction is the same as for Theorem \ref{thm:lb-advrobust}. Alice, on being given a set $A \subseteq [n]$ of size $\ceil{r/2}$, initializes an instance $X$ of $\Phi$, and runs it on an input stream $\alpha$ of length $\ceil{r/2}$ containing each element of $A$ in some arbitrary order. Any random bits used by $X$ are shared publicly with Bob. They send the encoding of $X$'s state to Bob, who queries $X$ to find an element $b_0 \notin \alpha$, updates $X$ with $b_0$, queries it to find $b_1 \notin \alpha \cup \{b_0\}$, and so on until Bob has recovered $B = \{b_0,\ldots,b_{\floor{r/2}}\}$. Because the algorithm is guaranteed to never fail on any input stream, it must in particular succeed on Bob's adaptively chosen continuation of $\alpha$. This ensures that $B \cap A = \emptyset$ holds with probability $1$.

Next, we prove that any zero error randomized communication protocol $\Pi$ for $\avoid(t, a, b)$
requires $\ge a b / (t \ln 2)$ bits \emph{in expectation}. Following the argument from Lemma 6 of \cite{ChakrabartiGS22}, we observe that there must exist a fixing of the public randomness of $\Pi$ for which the expected number of bits used when inputs $A$ are drawn uniformly at random from $\binom{[t]}{a}$, is at least as large as when $\Pi$ is run unmodified. Let $\Upsilon$ be the deterministic protocol with this property, and let $M$ be the set of all messages sent by $\Upsilon$. 
Each message $m \in M$ has a length $|m|$, probability (over the random choice of $A$) $p_m$ of being sent, and makes Bob output the set $B_m$. For all $m \in M$, we have:
\begin{align*}
  p_m = \Pr[\text{$m$ is sent}] \le  \Pr[\text{$B_m$ is a correct output}] = \Pr[A \cap B_m = \emptyset] \le \binom{t - a}{b} / \binom{t}{a} \le 2^{-\frac{a b}{t \ln 2}}.
\end{align*}
Let $\Upsilon(A) \in M$ be the message sent by $\Upsilon$ for a given value of $A$. Then the entropy
\begin{align*}
  H(\Upsilon(A)) = \sum_{m \in M} p_m \log\frac{1}{p_m} = \EE_{A \in \binom{[t]}{a}} \log\frac{1}{p_{\Upsilon(A)}} \ge \EE_{A \in \binom{[t]}{a}} \frac{a b}{t \ln 2} = \frac{a b}{t \ln 2} \,.
\end{align*}
By the source coding theorem,
\begin{align*}
  \EE_{A \in \binom{[t]}{a}} \EE|\Upsilon(A)| \ge H(\Upsilon(A)) \ge \frac{a b}{t \ln 2} \,.
\end{align*}

Applying the above lower bound to the task $\avoid(n, \ceil{r/2}, \floor{r/2 + 1})$, we conclude that $\Phi$ requires $\ge \frac{r^2}{4 n \ln 2}$ bits of space in expectation.

\end{proof}

\begin{theorem}\label{thm:ub-zero-error}
There is an algorithm solving $\mif(n,r)$ with zero error against adaptive adversaries, which uses $O((1 + r^2 /n) \log r)$ bits of space, in expectation over the randomness of the algorithm.
\end{theorem}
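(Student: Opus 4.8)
The plan is to reuse Algorithm \ref{alg:hidden-list}, with the abort step deleted (equivalently, run it with the threshold parameter $t$ set to $r$, so that the test ``$|J| > t$'' never fires, since $|J|\le r$ always). Call this the no-abort variant. We must check two things: that the no-abort variant never errs on any stream, and that its expected space is $O((1+r^2/n)\log r)$.

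For correctness, recall that the proof of Theorem \ref{thm:ub-advrobust} shows that the \emph{only} way Algorithm \ref{alg:hidden-list} can produce an incorrect output is by aborting. So long as it does not abort, the while loop guarantees that the emitted element $L_c$ is neither equal to the current element nor equal to any already-stored colliding past element (the set $J$ always contains the intersection of the past stream with $\{L_{c+1},\dots,L_r\}$), and the counter $c$ never exceeds $r+1$, with $L_{r+1}$ a safe output in the boundary case. Hence the no-abort variant produces a correct output after every update, on every stream, for every outcome of the random oracle; in particular it never fails against any adaptive adversary, so it is a zero-error algorithm.

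For the space bound, the state consists of the counter $c$, costing $O(\log r)$ bits, together with the set $J$, which we store as a (prefix-free encoded) list of the at most $|J|$ indices in $[r]$ of its elements, costing $O((|J|+1)\log r)$ bits. (If one also wants a non-catastrophic worst case, switch to an $r$-bit characteristic vector once $|J|$ exceeds $r/\log r$; this caps the worst-case cost at $O(r)$ without affecting the expectation asymptotically.) Since $J$ only grows over the course of the stream, the maximum space used along the run equals $O((|J_{\mathrm{fin}}|+1)\log r)$, where $J_{\mathrm{fin}}$ is the final value of $J$. Writing $X_i$ for the indicator that the $i$th update inserts a new element into $J$, we have $|J_{\mathrm{fin}}| = \sum_{i=1}^{r} X_i$, and the calculation in the proof of Theorem \ref{thm:ub-advrobust} establishes $\Pr[X_i = 1 \mid X_1,\dots,X_{i-1}] \le 2r/n$ uniformly over the adversary's choices. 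By linearity of expectation, $\EE[|J_{\mathrm{fin}}|]\le 2r^2/n$, so the expected maximum space is $O\bigl((1+\EE[|J_{\mathrm{fin}}|])\log r\bigr) = O((1+r^2/n)\log r)$.

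There is no real obstacle here; the one point requiring care is the cost model. ``Expected space'' means the expectation, over the algorithm's randomness, of the maximum space used along the run against the worst adversary, so one must observe that $J$ is monotone (hence the maximum is attained at the end) and that the per-update bound $\Pr[X_i = 1\mid\cdots]\le 2r/n$ is the adversary-uniform statement already proved for Theorem \ref{thm:ub-advrobust}. Everything else is linearity of expectation; unlike the worst-case bound of Theorem \ref{thm:ub-advrobust}, we do not even need the Azuma/Chernoff concentration estimate.
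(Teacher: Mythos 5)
Your proposal is correct and matches the paper's own proof essentially exactly: both run Algorithm \ref{alg:hidden-list} with the threshold $t$ set to $r$ so it never aborts, reuse the correctness argument and the per-step bound $\Pr[X_i=1\mid X_1,\dots,X_{i-1}]=O(r/n)$ from Theorem \ref{thm:ub-advrobust}, and conclude by linearity of expectation with a prefix-free encoding of $c$ and $J$. No further comment is needed.
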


\begin{proof}
  We use a slight variation of Algorithm \ref{alg:hidden-list}, in which internal parameter $t$ is instead set to $r$.
  This ensures that the algorithm will never abort; the proof of Theorem \ref{thm:ub-advrobust} 
  has established that Algorithm \ref{alg:hidden-list} will then always give a correct output for the $\mif(n,r)$ task.

  The counter $c$ can be encoded in binary using at most $\ceil{\log(r+1)}$ bits. We encode the set $J$
  by concatenating the binary value of $|J|$, followed by the binary values of the indices $i_1,\ldots,i_{|J|}$ in $[r]$
  for which $L_{i_k}$ is equal to the $k$th smallest element of $J$. (As both the encodings of $c$ and $J$ are prefix codes, so too is the encoding of the algorithm's state formed by concatenating them.) The total space $S$ used by the algorithm (excluding random bits) is then:
  \begin{align*}
    S = \ceil{\log(r+1)} + \ceil{\log{r}} (1 + |J|) \,.
  \end{align*}
  As in the proof of  Theorem \ref{thm:ub-advrobust}, let $X_i$ be the indicator random variable for the event that the $i$th element that the adversary chooses for the stream is stored in $J$; we showed that for all $i \in [r]$, $\Pr[X_i = 1 \mid X_{i-1},\ldots,X_1] \le \frac{r-1}{n}$, which implies $\EE[X_i] \le \frac{r-1}{n}$. By linearity of expectation, 
  \begin{align*}
    \EE S &= \ceil{\log(r+1)} + \ceil{\log{r}} \left(1 + \EE \sum_{i \in [r]} X_i \right) \\
          &\le \ceil{\log(r+1)} + \ceil{\log{r}} \left(1 + \frac{r (r-1)}{n}\right) = O((1 + \frac{r^2}{n}) \log r) \,.
  \end{align*}
\end{proof}

\section{Deterministic model}\label{sec:deterministic}

\subsection{Lower bound: an embedded instance of Avoid}

\begin{figure}[ht]
  \centering
  \includegraphics[width=10cm]{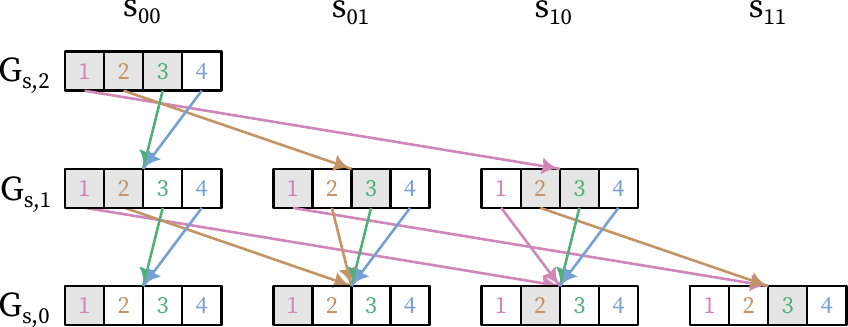}
  \caption{
    In the proof of \Cref{thm:lb-deterministic}, the quantities $F_\sigma$ (defined in Eq. \ref{eq:fsigma-def}) are entirely determined by the values of $\Sigma[\sigma]$ and $r - |\sigma|$. More precisely, we
    have $F_{\sigma} = G_{\Sigma[\sigma],r - |\sigma|}$, where $G_{s,i} := \{\omega_x : \exists \alpha \in [n]^i : \tau(s, \alpha) = x\}$. This diagram shows the values of $G_{s,i}$ for \Cref{alg:trivial} solving the $\mif(4,2)$ problem. The sets $G_{s,i}$ are represented by the dark squares in the array of four cells. The transition function between states is indicated by the colored arrows; for example, green colored arrows (those emitting from squared numbered with a 3) correspond to transitions where the next stream element is a 3, i.e, from state $s$ to state $s' = \tau(s, 3)$. 
  }
  \label{fig:det-diagram}
\end{figure}

\begin{theorem}\label{thm:lb-deterministic}
Every deterministic streaming algorithm for $\mif(n,r)$ requires $\Omega(\sqrt{r} + \frac{r}{1 + \log(n/r)})$ bits of space.
\end{theorem}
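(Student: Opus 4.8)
The plan is to prove the two terms of the lower bound separately, and in both cases to exhibit, inside the state graph of a deterministic $\mif(n,r)$ algorithm, an embedded deterministic protocol for an instance of $\avoid$, then invoke the lower bounds of Lemma~\ref{lem:avoid-lb} and Lemma~\ref{lem:lb-avoid-low-range}. The natural bookkeeping device (as the caption of Figure~\ref{fig:det-diagram} suggests) is, for a state $s$ and a number $i$ of remaining stream steps, the set $G_{s,i} := \{\omega(\tau(s,\alpha)) : \alpha \in [n]^i\}$ of outputs still reachable; equivalently, for a prefix $\sigma$, $F_\sigma := G_{\Sigma[\sigma],\,r-|\sigma|}$, where $\Sigma[\sigma]$ is the state after reading $\sigma$. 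Correctness of the algorithm forces $F_\sigma \cap \{\text{elements of }\sigma\} = \emptyset$, so $F_\sigma$ is a ``guaranteed-disjoint'' set that only shrinks as the stream grows and that can be read off from the current state. The key structural fact is that if $|F_\sigma| \ge b$ for some prefix $\sigma$ of length $r-b$, then the map ``state after reading the first $r-b$ symbols'' $\mapsto$ (any $b$-subset of $F_\sigma$) is a legal Bob-strategy for $\avoid(n, r-b, b)$: Alice reads her $(r-b)$-set into the algorithm, sends the state, Bob outputs $b$ elements of $F_\sigma$, which by construction avoid Alice's set.

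For the $\tfrac{r}{1+\log(n/r)}$ term I would iterate the missing-pigeonhole principle, exactly as the introduction advertises for the \emph{algorithm} but now run against it. Partition $[n]$ into $r$ blocks of size $\approx n/r$; after the first $\approx r - \lceil r/2\rceil$ stream symbols, consider which blocks still contain a reachable output. If many blocks are ``alive'' the state must be large (there are many distinct reachable-output-sets); if few are alive, we have learned a lot, and we recurse inside the surviving structure on a shorter stream. Quantitatively, each of $\Theta(1+\log(n/r))$ stages that partitions into $r$ parts can be made to cost $\Omega(r)$ states worth of information unless the algorithm ``commits'' a block, and committing a block costs one stream symbol of budget; balancing, the algorithm is forced through $\Omega(r/(1+\log(n/r)))$ bits. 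Concretely this is cleanest phrased as: either $S^{\textsc{det}}(\mif(n,r)) \ge \Omega(r)$, or there is a reachable state from which the residual problem is an instance of $\mif(n', r')$ with $n' \le n/2$ (say) and $r' \ge r - O(r/\log(n/r))$, giving a recursion $S^{\textsc{det}}(\mif(n,r)) \ge S^{\textsc{det}}(\mif(n/2, r'))$ that unrolls to the claimed bound; the base case $n \le 2r$ uses Lemma~\ref{lem:avoid-lb} with $t=n$, $a=b=\Theta(r)$, giving $\Omega(r^2/n) = \Omega(r)$.

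For the $\sqrt{r}$ term the point is that $\avoid(n,a,b)$ with $a,b \ll \sqrt n$ already costs $\Omega(\log\min(a,b))$ by Lemma~\ref{lem:lb-avoid-low-range} (the deterministic clause, $D^\rightarrow(\avoid(t,a,b)) \ge \log(a+1)$), but one application is too weak. Instead I would chain $\approx \sqrt r$ such embedded \avoid-instances along a single stream: split the $r$ stream steps into $\sqrt r$ epochs of length $\sqrt r$ each, and argue that in each epoch either the state's reachable-output set $F_\sigma$ shrinks by a factor that forces many distinct states overall, or the adversary can, within that epoch, drive the algorithm through an embedded $\avoid(n,\sqrt r,\sqrt r)$-type step that it can only survive by having $\ge c$ states ``available.'' Summing the information revealed across the $\sqrt r$ epochs — using that $F_\sigma$ starts with $\ge n - r \ge n/2$ elements and must stay nonempty — yields $\Omega(\sqrt r)$ bits. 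The final bound is the max (hence, up to constants, the sum) of the two arguments.

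The main obstacle I anticipate is making the two recursions genuinely ``deterministic-protocol-clean'': one must verify that the residual object after fixing a prefix really is a bona fide $\mif$ instance on a smaller universe (the universe shrinks because committed/seen blocks are removed, and one must check the algorithm, restricted to streams living in the surviving blocks, still never errs), and that the ``many alive blocks $\Rightarrow$ many states'' step correctly counts distinct states rather than merely distinct output-sets (two prefixes with the same $F$ but different future behavior must be separated — this is where one leans on $F_\sigma = G_{\Sigma[\sigma],r-|\sigma|}$ being a function of the state, so distinct $F$'s force distinct states, and that suffices). Handling the boundary regime $r \le n \le 2r$ and the floors/ceilings in the partition sizes is routine but needs care so the two terms combine without losing logarithmic factors beyond what the statement allows.
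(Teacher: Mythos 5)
Your bookkeeping object $F_\sigma = G_{\Sigma[\sigma],r-|\sigma|}$ is exactly the paper's, and ``embed an \avoid instance at a well-chosen prefix'' is the right strategy, but both of your two separate arguments have real gaps. First, the \avoid instance you actually set up is $\avoid(n, r-b, b)$ on the \emph{full} universe $[n]$; by Lemma~\ref{lem:avoid-lb} this only yields $\Omega((r-b)b/n) = O(r^2/n)$, and the deterministic clause of Lemma~\ref{lem:lb-avoid-low-range} only yields $O(\log r)$ — far short of either term. The missing idea is to shrink the universe to $F_\sigma$ itself: one first finds (by an iterative process) a prefix $\sigma$ such that \emph{no} block of $t$ further inputs can cut $|F_{\sigma}|$ in half — the process must terminate within $q \approx \log(m/r)$ rounds, where $m = \min(|\Sigma|,n) \ge |F_\epsilon|$, because otherwise $|F_\sigma|$ drops below the remaining stream budget and the adversary breaks the algorithm outright by enumerating $F_\sigma$. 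At such a $\sigma$ one reduces from $\avoid(|F_\sigma|,\,t,\,\lceil|F_\sigma|/2\rceil)$, where Bob's output is a \emph{constant fraction} of the universe, so Lemma~\ref{lem:avoid-lb} gives $\log|\Sigma| \ge t/(2\ln 2) = \Omega(r/\log(m/r))$. Your ``many alive blocks $\Rightarrow$ many states, else recurse'' dichotomy does not substitute for this: as concretely phrased (either $\Omega(r)$ bits, or recurse to $\mif(n/2,r')$ with $r' = r - O(r/\log(n/r))$, base case $\Omega(r)$), it would unroll to an unconditional $\Omega(r)$ lower bound, contradicting the paper's own $O(\sqrt{r\log r} + r\log r/\log n)$ upper bound (Theorem~\ref{thm:ub-deterministic}); so the claimed per-stage dichotomy cannot hold with those parameters.

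Second, for the $\sqrt{r}$ term, ``summing the information revealed across the $\sqrt r$ epochs'' is not justified: each embedded \avoid instance separately lower-bounds the \emph{same} quantity $\log|\Sigma|$ (the state is re-sent each time), and such bounds do not add without a direct-sum argument you have not supplied. The paper gets $\sqrt r$ with no chaining at all: the single bound $\log|\Sigma| = \Omega(r/(1+\log(m/r)))$ is self-referential because $m \le |\Sigma|$, yielding $\log|\Sigma| \ge \frac{r/5}{2 + \log|\Sigma| - \log r}$, and solving this quadratic inequality in $\log|\Sigma|$ gives $\log|\Sigma| \ge \sqrt{r/5}$; substituting $m \le n$ instead gives the other term. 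So the two terms come from one argument, and the place where your proposal most needs repair is the choice of \avoid universe and the termination/stopping argument that licenses it.
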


\begin{proof}
  Let $\Sigma$ be the set of states of the algorithm, and let $s_\text{init}$ be the initial state. Let $\tau : \Sigma \times [n]^\star \mapsto \Sigma $ be the transition function of the algorithm, where $\tau(s, e_1,\ldots,e_k) = x$ means that if the algorithm is at state $s$, and the next $k$ elements in the stream are $e_1,\ldots,e_k$, then after processing those elements the algorithm will reach state $x$. For each partial stream $\sigma \in [n]^\star$, abbreviate $\tau(s_\text{init}, \sigma)$ as $\Sigma[\sigma]$. For each state $s \in \Sigma$, we
  associate the output $\omega_s \in [n]$ which the algorithm would emit if the state is 
  reached at the end of the stream. (If there is no stream of length $r$ leading to state $s$, we let $\omega_s$ be arbitrary.)
  
  For each partial stream $\sigma \in [n]^\star$, let 
  \begin{align}
    F_\sigma = \{i : \exists  x \in \Sigma, \exists \alpha \in [n]^{r - |\sigma|}: \tau(\Sigma[\sigma], \alpha) = x \land \omega_x = i\} \label{eq:fsigma-def}
  \end{align}
  be the set of possible outputs of the algorithm when $\sigma$ is extended to a stream of length $r$. Because there are only $|\Sigma|$ states, and only $[n]$ possible output values, $|F_s^i| \le m$, where $m = \min(|\Sigma|, n)$.
  
  Let $t,q$ be integers chosen later, so that 
  \begin{align}
    t q \le r - m / 2^q \,.\label{eq:detlb-tq}
  \end{align} We claim that there exists a partial stream $\sigma \in [n]^\star$ satisfying
  $\forall \alpha \in [n]^t : |F_{\sigma.\alpha}| \ge \frac{1}{2} |F_\sigma|$.
  
  Such a state can be found by an iterative process. Let $\tau_0$ be the empty stream $\epsilon$; for $i = 1,2,3,....$ , if there exists $\alpha \in [n]^t$ for which $|F_{\tau_i.\alpha}| \le \frac{1}{2} |F_{\tau_i}|$, let $\tau_{i+1} = \tau_i.\alpha$. Otherwise, stop, and let $\sigma = \tau_{i}$. This process must terminate before $i = q$, because otherwise we would have $|F_{\tau_q}| \le m / 2^q \le r - q t$. Then letting $\gamma \in [n]^{r-q t}$ be a sequence of elements containing every element of $F_{\tau_q}$, we observe that the
  algorithm cannot possibly output a correct answer for the stream $\tau_q.\gamma$. By the definition of $F_{\tau_q}$, we must have $\omega_{\tau_q.\gamma} \in F_{\tau_q}$; but to be a correct missing item finding solution, we need $\omega_{\tau_q.\gamma} \notin \gamma$, hence $\omega_{\tau_q.\gamma} \notin F_{\tau_q}$, a contradiction. Thus, $\sigma = \tau_{i}$ for some $i \le q - 1$. Thus $|\sigma| \le (q - 1) t \le r - t$, which ensures that the terms $\sigma.\alpha$ are streams of length $\le r$ and therefore well defined. Finally, the stopping condition of the process implies $\forall \alpha \in [n]^t : |F_{\sigma.\alpha}| \ge \frac{1}{2} |F_\sigma|$.
  
  We will now construct a deterministic protocol for $\avoid(|F_\sigma|, t, \ceil{\frac{1}{2} |F_\sigma|})$ using $\le \log |\Sigma|$ bits of communication. Alice, on being given a set $A \in \binom{F_\sigma}{t}$, 
  arbitrarily orders it to form a sequence $\alpha$ in $(F_\sigma)^t$; and then sends the state $s' = \tau(\Sigma[\sigma], \alpha)$ to Bob. This can be done using $\log |\Sigma|$ bits of space.
  Bob uses the encoded state to find $F_{\sigma.\alpha}$, by evaluating $\omega_{\tau(s', \beta)}$ for all sequences $\beta \in [n]^{|\sigma| - t}$, and reports the first $\ceil{\frac{1}{2} |F_\sigma|}$ elements of this set as $B$. This protocol works because
  as claimed above, we are guaranteed $|F_{\sigma.\alpha}| \ge |F_\sigma|$; and furthermore, $F_{\sigma.\alpha}$ must be disjoint from $A$; if it is not, then there exists some continuation of $\sigma$ concatenated with $\alpha$ which leads the algorithm to a state $z$ with $\omega_z \in A$, contradicting the correctness of the $MIF$ protocol.
  Finally, applying the communication lower bound from Lemma \ref{lem:avoid-lb}, we find
  \begin{align}
    \log |\Sigma| \ge \frac{1}{\ln 2} t \ceil{\frac{1}{2} |F_\sigma|} / |F_\sigma| \ge t / (2 \ln 2) \label{eq:ldet-lb-rec}
  \end{align}

  We now determine values of $t$ and $q$ satisfying Eq. \ref{eq:detlb-tq}. We can set
  \begin{align*}
    q = \ceil{1 + \log(m/r)} \qquad \text{ and } \qquad t = \floor{ \frac{1}{q}\left( r - \frac{m}{2^q}\right)}
  \end{align*}
  We must have $m \ge r + 1$, as otherwise $|F_\epsilon| \le m \le r$, in which case we could easily make
  the algorithm give an incorrect output by running it on a stream $\gamma \in [n]^r$ which contains all elements
  of $F_\epsilon$. Thus $\log(m/r) \ge 0$, and hence $q \ge 1$, making $t$ well defined. Since $m = \min(|\Sigma|, n)$, we are also guaranteed $\log |\Sigma| \ge \log(r + 1)$. Combining this with Eq. \ref{eq:ldet-lb-rec} gives:
  \begin{align}
    \log |\Sigma| &\ge \max\left(\log(r+1), \frac{1}{2 \ln 2} \floor{ \frac{1}{q}\left( r - \frac{m}{2^q}\right)}\right) \nonumber\\
      &\ge \max\left(1, \frac{1}{2 \ln 2} \floor{ \frac{r}{2 q} }\right) && \hspace{-1cm}\text{since $2^q \ge 2 m /r$ and $r \ge 1$} \nonumber\\ 
      &\ge \frac{1}{1 + 2 \ln 2} \cdot \frac{r}{2 q} && \hspace{-1cm}\text{since $\min(1,(z-1)/y) \ge \frac{z}{1+y}$} \nonumber\\
      &\ge \frac{r}{10 + 5 \log(m/r)} \,. && \hspace{-1cm}\text{since $1 + 2 \ln 2 \le 5/2$} \label{eq:detlb-sigma-ge-mr}
  \end{align}
  As $m = \min(|\Sigma|, n)$, we have $m \le |\Sigma|$, so
  \begin{align*}
    \log |\Sigma| \ge \frac{r/5} {2 + \log |\Sigma| - \log r} \qquad \implies \qquad (\log |\Sigma|)^2 + (2 - \log r) \log |\Sigma| - r/5 \ge 0 \,.
  \end{align*}
  Solving the quadratic inequality gives:
  \begin{align*}
    \log |\Sigma| \ge \sqrt{\frac{r}{5} + \left(1 - \frac{\log(r)}{2}\right)^2} - \left(1 - \frac{\log(r)}{2}\right)
        \ge 
        \begin{cases}
          \sqrt{r/5} & \text{if $r \ge 4$} \\
          0 & \text{otherwise}
        \end{cases}
  \end{align*}
  As $\log |\Sigma| \ge \log(r+1) \ge \sqrt{r/5}$ also holds for $r \le 4$, it follows that $\log |\Sigma| \ge \sqrt{r/5}$ for all values of $r$. Combining this result, Eq. \ref{eq:detlb-sigma-ge-mr}, and the inequality $m \le n$, we conclude:
  \begin{align*}
    \log |\Sigma| \ge \max\left(\sqrt{\frac{r}{5}}, \frac{r}{10 + 5 \log(n/r)}\right) = \Omega\left(\sqrt{r} + \frac{r}{1 + \log(n/r)} \right) \,.
  \end{align*}
\end{proof}

Note: instead of associating "forward" looking sets of outputs $F_s$ with each state $s \in \Sigma$, we could instead use "backward" looking states $B_s$ defined (roughly) as $[n] \setminus \{i : \exists \text{$\sigma$ leading to $s$ with $i \in \sigma$} \}$.

\subsection{Upper bound: the iterated pigeonhole algorithm}

\begin{figure}[ht]
  \centering
  \includegraphics[width=8cm]{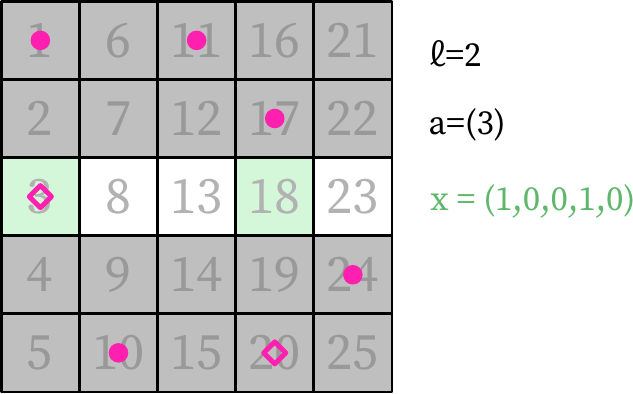}
  \caption{This diagram shows the behavior of \Cref{alg:iterated-pigeonhole}, with $s = 5$ and $t = 2$, on an example input. The pink circles and diamonds mark the elements currently covered by the stream. Cells shaded dark gray are those which are no longer possible outputs due to the current values of $\ell$ and $a$. Cells shaded light green are no longer possible outputs due to the value of the vector $x$. Cells shaded white are possible output values. The algorithm proceeds in $t$ phases; in this example, for the first phase, it maintained a bit vector tracking which of the $s$ rows of the set $[n]$ contained an element from the stream; after the first five elements (1, 10, 11, 17, 24 in some order) arrived, only one row was left available, and the algorithm proceeded to the second phase -- maintaining a bit vector $x$ that records which columns within the chosen row may be unavailable.}
\end{figure}

\begin{theorem}\label{thm:ub-deterministic}
\Cref{alg:iterated-pigeonhole} is a deterministic algorithm that solves $\mif(n,r)$ using $O(\sqrt{r \log r} + \frac{r \log r}{\log n})$ bits of space.
\end{theorem}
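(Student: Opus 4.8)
The plan is to realize the informal description from the introduction: run through $t$ phases, where in each phase we use a bit vector over a partition of the surviving answer space to narrow it down to a single part via the missing-pigeonhole principle, then remember that part and move on. I would set the number of phases to $t = \Theta\!\left(\sqrt{\log r / \log n} + \log r / \log n\right)$ (to be pinned down by the arithmetic at the end), and in phase $k$ partition the current surviving set $P_{k-1} \subseteq [n]$ into $s$ parts of (roughly) equal size, where $s$ is chosen so that $s^t \ge r+1$; concretely $s = \ceil{(r+1)^{1/t}}$. During phase $k$ we maintain an $s$-bit vector recording which of the $s$ parts of $P_{k-1}$ have been hit by a stream element seen \emph{during this phase}; as soon as only one unhit part remains we freeze $P_k$ to be that part (plus we must be careful about parts hit by \emph{earlier} stream elements — see below), discard the bit vector, and start phase $k+1$ on $P_k$. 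At the end we output any element of the final surviving part that no stream element has landed in.

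The correctness argument has two ingredients. First, a counting/pigeonhole step: because the stream has only $r$ elements, and there are $s$ parts with $s^t \ge r+1 > r$, at the moment phase $k$ ends there is indeed at least one part of $P_{k-1}$ untouched by the stream \emph{so far} — not just by the current phase. The subtlety the informal description flags is that a part may have been hit in an earlier phase and we are no longer tracking it; the clean fix is to have phase $k$'s partition refine only the currently-surviving set $P_{k-1}$, so "surviving" always means "untouched by the entire stream so far", and to note that a phase advances precisely when the number of still-untouched parts drops to one. Since each stream element can eliminate at most one part per phase, after $r$ elements at most $r$ eliminations have happened total, and $s^t > r$ guarantees the intersection $\bigcap_k P_k$ of all the frozen parts is nonempty; by construction every element of this intersection avoids the whole stream, so the final output is correct. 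I would also handle the bookkeeping detail that a phase might not fully "complete" (only be whittled down to $\ge 2$ parts by the time the stream ends) — in that case any untouched part works and we output from it directly.

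The space accounting is the second ingredient and is where I would be most careful. At any time we store: the phase index $k$ (costs $O(\log t)$ bits); a description of $P_k$, which since each $P_j$ is one of $\le s$ parts of $P_{j-1}$, is a list of $\le t$ indices each in $[s]$, costing $O(t \log s)$ bits; and the current $s$-bit vector, costing $s$ bits. So the total is $O(t \log s + s)$ up to lower-order terms. With $s = \ceil{(r+1)^{1/t}}$ we have $\log s = O((\log r)/t)$, so $t \log s = O(\log r)$, which is negligible, and the bottleneck is $s \approx (r+1)^{1/t} = 2^{(\log r)/t}$. Choosing $t = \ceil{(\log r)/\log n}$ makes $s = O(n)$-ish — too big; the right balance is $t = \Theta\!\left(\sqrt{\log r} + (\log r)/\log n\right)$ so that $s = 2^{\Theta(\log r / t)} = 2^{\Theta(\sqrt{\log r})} \cdot \text{(poly in }\tfrac{r}{\cdots})$...

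Let me restate the target cleanly: we want $s \le 2^{O(\sqrt{r\log r}/\,??)}$ — actually the theorem's bound $O(\sqrt{r\log r} + \frac{r\log r}{\log n})$ in \emph{bits} suggests $s = O(\sqrt{r\log r} + \frac{r \log r}{\log n})$ as a raw count, i.e.\ we want $s$ itself (not $\log s$) to be of that order, which forces $(r+1)^{1/t}$ to be that large, i.e.\ $t = \Theta\!\left(\frac{\log r}{\log(\sqrt{r\log r})}\right) = \Theta(1)$ in the regime $n = \poly(r)$ and larger when $n$ is large. I would therefore set $t$ so that $(r+1)^{1/t}$ equals (up to constants) $\sqrt{r \log r} + \frac{r\log r}{\log n}$, check $t \ge 1$ always holds, verify $s^t \ge r+1$, and confirm $t \log s + \log t = O(s)$ so the dominant term is $s$. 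The main obstacle is exactly this optimization of $(t, s)$: matching the two-term bound $\sqrt{r\log r} + \frac{r\log r}{\log n}$ requires splitting into the cases $r \le \polylog n$ versus $r$ large (equivalently, whether $\sqrt{\cdot}$ or $\frac{r}{\log n}$-type term dominates), choosing $t$ differently in each, and a short computation showing $s = 2^{\lceil \log(r+1)/t\rceil}$ lands in the claimed range; the algorithmic correctness itself is a routine double application of the missing-pigeonhole principle once the "refine only the survivors" convention is fixed.
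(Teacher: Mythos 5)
Your algorithmic template --- phases, a per-phase $s$-bit vector over a partition, advancing when one unhit part remains, remembering the chosen part --- matches the paper's \Cref{alg:iterated-pigeonhole}, and your space accounting $O(s + t\log s)$ is the right one. The gap is in the correctness condition, and it propagates into the parameter choice. You require $s^t \ge r+1$ and justify correctness by arguing that $r$ stream elements cannot hit all $s^t$ cells of the full product partition. That pigeonhole is the one for a single-phase algorithm that tracks $s^t$ parts with $s^t$ bits; it does not transfer to the phased algorithm, because once phase $k$ commits to a part $P_k$ and discards its bit vector, the remaining stream elements are free to concentrate inside $P_k$. The count that matters is per phase: completing a phase consumes at least $s-1$ stream elements, and the last active phase must still have an unhit part when the stream ends, which forces $t(s-1) \ge r$, i.e.\ $s = \Theta(r/t)$ --- exponentially larger than your $s = \ceil{(r+1)^{1/t}}$. (The separate condition $s^t \le n$ is what keeps every cell nonempty and the output inside $[n]$; you never impose it.) Concretely, with your parameters at $t=2$, $s \approx \sqrt{r}$, both phases are exhausted after roughly $2\sqrt{r}$ elements, and the remaining $r - O(\sqrt{r})$ elements can cover the final part entirely while the algorithm retains no information about which of its elements appeared.

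The closing optimization inherits this error. You eventually notice that $s$ itself (not $\log s$) must be $\Theta\bigl(\sqrt{r\log r} + \frac{r\log r}{\log n}\bigr)$, but you then deduce $t = \Theta(1)$, which still violates $t(s-1)\ge r$ (e.g.\ $2\sqrt{r\log r} < r$ for large $r$). The correct setting is $s = \ceil{r/t}+1$ with $t = \floor{q/\log(r+1)}$ for $q = \min(\sqrt{r\log(r+1)}, \log n)$, so $t$ is polynomially large (about $\sqrt{r/\log r}$ when $n$ is huge, and about $\log n/\log r$ otherwise), and one verifies $s^t \le (r+1)^{q/\log(r+1)} = 2^q \le n$. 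The parameter tuning you deferred as ``a short computation'' is exactly where the two constraints $t(s-1)\ge r$ and $s^t\le n$ trade off; as written, your proposal does not yield a correct algorithm.
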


\begin{algorithm}[!ht]
  \caption{A deterministic algorithm for $\mif(n,r)$
    \label{alg:iterated-pigeonhole}}
  
  \begin{algorithmic}[1]
  \Statex Let $s,t$ be integers satisfying $s^t \le n$, and $t(s-1) \ge r$.
  \Statex 
  
  \Statex \ul{\textbf{Initialization}}:
    \State $x \gets (0,\ldots,0)$ is a vector in $\{0,1\}^s$.
    \State $(\ell,a) \gets (1,0)$ is an element of $\bigcup_{j \in [t]} \{j\} \times \{0,\ldots,s^j-1\}$
  
  \Statex
  \Statex \ul{\textbf{Update}($e \in [n]$)}:
    \State Let $i \gets \left(\floor{(e-1) / s^{\ell-1}} \bmod s\right) + 1$ \label{line:iter-pigeon-extract}
    \State $x_{i} \gets 1$ \label{line:iter-pigeon-set1}
    \If{$\ell < t$ and there is exactly one $y \in [s]: x_y = 0$}\label{line:iter-pigeon-condition}
      \State $x \gets (0,\ldots,0)$ \label{line:iter-pigeon-reset}
      \State $\ell \gets \ell + 1$ \label{line:iter-pigeon-update-ell}
      \State $a \gets a + (y-1) s^{\ell-1}$ \label{line:iter-pigeon-update-a}
    \EndIf
  
  \Statex
  \Statex \ul{\textbf{Query}}:
    \State Let $i$ be the least value in $[s]$ for which $x_i = 0$
    \State \textbf{output}: $ a + (i-1) s^{\ell-1} + 1$\label{line:iter-pigeon-output}
  
  \end{algorithmic}
    
\end{algorithm}

\begin{proof}
  First, we establish that the variables $(\ell,a)$ of the algorithm stay in their specified bounds. The condition in Line \ref{line:iter-pigeon-condition} ensures that $\ell$ will not be increased beyond $t$. At the time Line \ref{line:iter-pigeon-update-a} is executed, $a < s^{\ell - 1}$; since $y \in [s]$, it follows $a + (y-1)s^{\ell -1} < (1 + (s - 1)) s^{\ell - 1} \le s^\ell$, so the pair $(\ell,a)$ stays in $\bigcup_{j \in [t]} \{j\} \times \{0,\ldots,s^j - 1\}$.
  
  Next, we establish that the algorithm is correct; that the output from Line \ref{line:iter-pigeon-output} does not overlap with current stream $e_1,\ldots,e_k$. For each element $e_j$ in the stream, let $\ell_j$ be the value of $\ell$ at the time the element was added (i.e., at the start of the Update function). For all $h \in [t]$, define $C_h := \{ j \in [t] : \ell_j = h \}$ to indicate the elements for which $\ell_j = h$. Because Line \ref{line:iter-pigeon-condition} only triggers when $x$ has one zero entry, and $x$ is reset to the all zero vector immediately afterwards, and each new element sets at most one entry of $x$ to $1$ (Line \ref{line:iter-pigeon-set1}), we have $|C_h| \ge s - 1$ for all $h$ less than or equal to the current value of $\ell$.
  
  Let $c = a + (i-1) s^{\ell-1}$ be the current output of the algorithm (Line \ref{line:iter-pigeon-output}), minus 1. Note that $c \le s^t - 1 \le n - 1$, so the output is in $[n]$. The value of $c$ can be written in base $s$ as $(c_1,\ldots,c_t)$, so that $c = \sum_{j = 1}^t c_j s^{j -1}$. For $h$ less than the current value of $\ell$, $c_h$ is equal to the value of $y$ at the time the condition of Line \ref{line:iter-pigeon-condition} evaluated to true; in other words, at that time, $x_{c_h} = 0$. Now, for each $j \in C_h$, consider the value $e_j - 1$ written in base $s$ as $(b_1,\ldots,b_t)$. When $e_j$ was added, Line \ref{line:iter-pigeon-extract} set $i$ equal to $b_h$, and so Line \ref{line:iter-pigeon-set1} ensured $x_{b_h} = 1$. Since $x_{c_h} = 0$ held afterwards, when the condition of Line \ref{line:iter-pigeon-condition} evaluated to true, it follows $b_h \ne c_h$. This implies $e_j - 1\ne c$ holds for all $j \in C_h$. A similar argument will establish that for $j \in C_\ell$, we have $e_j - 1\ne c$; since $C_1 \cup \ldots \cup C_\ell$ contains the entire stream so far, it follows the current output of the algorithm does not equal any of the $\{e_j\}_{j =1}^k$, and is thus correct.
  
  
  Finally, we determine values of $s$ and $t$ which for which the algorithm uses little space. The vector $x$ can be stored using $s$ bits; since there are $\sum_{i=0}^{t-1} s^i \le s^t$ possible values of $(\ell,a)$, this algorithm can be implemented using $\le s + t \log s + 1$ bits of space.
  
  Now let 
  \begin{align*}
    q = \min\left(\sqrt{r \log(r+1)}, \log n\right) \qquad \text{and} \qquad t = \floor{\frac{q}{\log(r + 1)}} \qquad \text{and} \qquad s = \ceil{\frac{r}{t}} + 1 \,,
  \end{align*}
  Because $r \ge \log(r+1)$, and $\log n \ge \log(r + 1)$, it follows $t \ge 1$. This implies $s \le r + 1$. Then $t (s - 1) = t \ceil{r / t} \ge r$, and
  \begin{align*}
    s^t \le (r+1)^{\floor{q/\log(r+1)}} \le (r+1)^{q/\log(r+1)} \le 2^q \le n \,,
  \end{align*}
  so the values of $s$ and $t$ satisfy the required conditions $s^t \le n$ and $t(s-1) \ge r$. With these parameters, the space usage of the algorithm is:
  \begin{align*}
    s + t \log s + 1 &\le \ceil{\frac{r}{t}} + 2 +
                      \floor{\frac{q}{\log(r+1)} } \log(  \ceil{\frac{r}{t}} + 1) \\
          &\le \frac{r}{\floor{q / \log(r+1)}} + 3 + \frac{q}{\log(r+1)}\log(r + 1) \\
          &\le \frac{2 r \log(r+1)}{q} + q + 3 \\
          &= \max\left(2 \sqrt{r \log(r+1)}, \frac{2 r \log(r+1)}{\log n} \right) + \min\left(\sqrt{r \log(r+1)}, \log n\right) + 3 \\
          &= O\left(\sqrt{r \log r} + \frac{r \log r}{\log n}\right) \,.
  \end{align*}
\end{proof}

\section{White box model}\label{sec:whitebox}

\begin{figure}[ht]
  \centering
  \includegraphics[width=10cm]{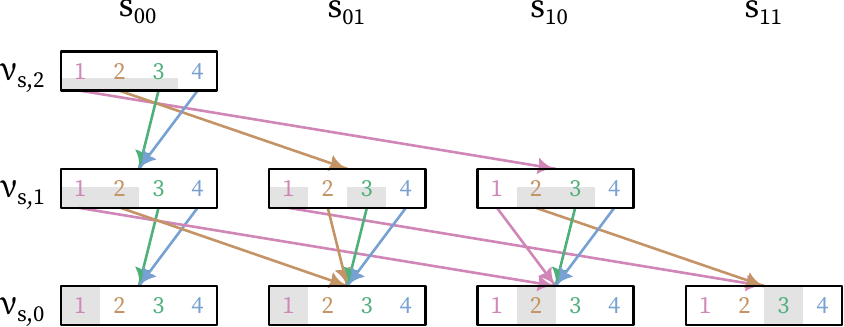}
  \caption{
    In the proof of \Cref{thm:lb-whitebox}, the quantities $\nu_{\sigma,i}$ defined as fixed points of Eq.  \ref{eq:fixedpt-def} are shown for the state diagram of \Cref{alg:trivial} for the problem $\mif(4,2)$.
    The distributions $\nu_{\sigma,i}$ are represented by the
    gray bar charts in each rectangle; for example, the distribution $\nu_{s_{01},1}$ has weight $0.5$ on value $1$ and weight $0.5$ on value $3$. The transition function between states is indicated by the colored arrows; for example, green colored arrows (those emitting from squared numbered with a 3) correspond to transitions where the next stream element is a 3, i.e, from state $s$ to state $s' = \tau(s, 3)$.
  }\label{fig:wb-diagram}
\end{figure}

\begin{theorem}\label{thm:lb-whitebox}
  Every streaming algorithm for $\mif(n,r)$ which has error $\delta \le 1/\left(16 n\right)^{2 \log n + 7} = 1/2^{\Omega(\log n)^2}$ against white-box adversaries requires $\Omega\left(\frac{r}{(\log n)^4}\right)$ bits of space.
\end{theorem}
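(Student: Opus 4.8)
The plan is to build an adaptive white-box adversary from a recursive family of fixed-point distributions and then run a potential argument on their $\ell_p$-norms. Suppose $\cA$ is a white-box-robust algorithm for $\mif(n,r)$ with state set $\Sigma$, using $s$ bits, so $|\Sigma| \le 2^s$; assume toward a contradiction that $s$ is below the claimed bound. Fix a recursion depth $k = \Theta(\log n)$, a block length $m = \floor{r/(k+1)}$, and an exponent $p = 1 + \Theta(1/\log n)$. The adversary will run in $k+1$ rounds: in round $j$ (for $1 \le j \le k$) it inspects the current state $s_j$ and feeds $m$ i.i.d.\ samples from a distribution $\nu^{(k+1-j)}_{s_j} \in \triangle[n]$ defined below; in round $k+1$ it pads the stream out to exactly $r$ symbols in a fixed way and triggers the query.

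For $s \in \Sigma$ and $i \in \{0,\dots,k\}$ I would define $\nu^{(i)}_s \in \triangle[n]$ by downward induction on $i$: let $\nu^{(0)}_s$ be the distribution of $\cA$'s output when queried at state $s$ (after the fixed padding), and for $i \ge 1$ let $P^{(m)}_s(\mu)(s')$ denote the probability that $\cA$ moves from $s$ to $s'$ after reading $m$ i.i.d.\ samples from $\mu$; this is a polynomial, hence continuous, function of $\mu \in \triangle[n]$. The map $\mu \mapsto \sum_{s'} P^{(m)}_s(\mu)(s')\,\nu^{(i-1)}_{s'}$ is then a continuous self-map of the simplex $\triangle[n]$, which is homeomorphic to a ball, so by \Cref{lem:brouwer-fixed-point} it has a fixed point; take $\nu^{(i)}_s$ to be one (this is the recursive, nested use of Brouwer). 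An induction on $i$ then gives the structural identity: \emph{conditioned on $\cA$ being in state $s$ with $i$ rounds left, the final output of $\cA$ is distributed exactly as $\nu^{(i)}_s$}; in particular the output is $\nu^{(k)}_{s_{\mathrm{init}}}$-distributed, and conditioned on reaching $s_j$ at the start of round $j$ it is $\nu^{(k+1-j)}_{s_j}$-distributed. The fixed-point property is exactly what makes the induction close: the output distribution, being the $P^{(m)}_s(\nu^{(i)}_s)$-average of the next-level output distributions $\nu^{(i-1)}_{s'}$, equals $\nu^{(i)}_s$.

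The technical core is a dichotomy: there should be $\eta = \Theta(1/n)$ and $\gamma = \Theta(1/\log n)$ so that for every reachable $(s,i)$ with $i \ge 1$, either (i) conditioned on reaching $s$ with $i$ rounds left, $\cA$ outputs an element of the stream with probability $\ge \eta$, or (ii) with probability $\ge \eta$ over a block of $m$ i.i.d.\ $\nu^{(i)}_s$-samples, $\cA$ moves to a state $s'$ with $\norm{\nu^{(i-1)}_{s'}}_p \ge (1+\gamma)\norm{\nu^{(i)}_s}_p$. The easy half is when $\nu^{(i)}_s$ is near a point mass: if it puts weight $\ge 2/3$ on some $x$, then by the structural identity the output equals $x$ with probability $\ge 2/3$, while $x$ occurs in the round-$(k+1-i)$ block with probability $\ge 1-(1/3)^m$, so $\cA$ errs with probability $\ge 1/3$, which is case (i). Otherwise one uses that $\nu^{(i)}_s = \sum_{s'} w_{s'}\nu^{(i-1)}_{s'}$ is a convex combination with $w=P^{(m)}_s(\nu^{(i)}_s)$: convexity of $\norm{\cdot}_p$ gives $\EE_{s'\sim w}\norm{\nu^{(i-1)}_{s'}}_p \ge \norm{\nu^{(i)}_s}_p$, and if (ii) failed then, by the quantitative strict convexity of $\ell_p$ (modulus of order $p-1$), the $\nu^{(i-1)}_{s'}$ carrying almost all of the weight $w$ would have to be tightly clustered around $\nu^{(i)}_s$. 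Ruling out this "spread, clustered, and stuck'' configuration for a small-space $\cA$ is the main obstacle: morally, if the next-level output distribution barely depends on the random block, then $\cA$ has failed to record the block, and — using that the white-box adversary reads the state and that there are only $2^s$ states — one should be able to exhibit a stream on which $\cA$ errs with probability $\ge\eta$, contradicting the negation of (i) (together with a coupon-collector estimate to cover the regime where the fixed points saturate below the point-mass threshold but still have effective support $O(r)$). This step is where the exponent $4$ in $(\log n)^4$ is produced, through the interaction of $m=\Theta(r/\log n)$, $\gamma=\Theta(1/\log n)$, and the loss in the clustering argument, and it is why one wants an $O(\log n)$-round adversary rather than the $2$-party reduction the paper notes is too lossy for $\mif$.

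To assemble the bound, track the potential $\Phi := \log_2 \norm{\nu}_p$ evaluated at the current state and level along the random trajectory of states. Initially $\Phi \ge \log_2 n^{1/p-1} = -\Theta(\log n)$ (the minimum of $\norm{\cdot}_p$ over $\triangle[n]$), and $\Phi \le 0$ always, with case (i) becoming available once $\Phi$ crosses a fixed threshold — which takes at most $k$ successful applications of case (ii), since each raises $\Phi$ by $\log_2(1+\gamma) = \Theta(1/\log n)$. Chaining case (ii) along the trajectory until case (i) triggers, each step succeeding with probability $\ge \eta$, shows that with probability $\ge \eta^{O(k)} = \eta^{O(\log n)}$ the process reaches a round in case (i), whereupon $\cA$ errs with conditional probability $\ge \eta$; hence $\cA$'s overall error is at least $\eta^{O(\log n)}$. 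Choosing the constants so that $\eta = 1/(16n)$ and the exponent is at most $2\log n + 7$ makes this strictly exceed $\delta \le 1/(16n)^{2\log n+7}$, a contradiction. Finally, unwinding the constraints under which $\eta = \Theta(1/n)$ and $\gamma = \Theta(1/\log n)$ are simultaneously achievable with $m = \Theta(r/\log n)$ shows that they fail once $s = O(r/(\log n)^4)$, so no such $\cA$ can exist below that space bound.
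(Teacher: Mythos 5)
Your overall architecture is the same as the paper's: recursive Brouwer fixed points $\nu^{(i)}_s$ per state and level, an adversary that feeds i.i.d.\ blocks from the current fixed-point distribution, and a round-by-round induction showing that either the algorithm errs with probability $\mathrm{poly}(1/n)$ or the $\ell_p$-norm (with $p = 1+\Theta(1/\log n)$) of the next fixed point grows by a factor $2^{p-1}$, which can happen at most $O(\log n)$ times. The fixed-point identity $\EE_{x\sim\nu^t,\,R}\,M(x,R)=\nu$ and the "point mass forces an error" observation are exactly as in the paper.

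The genuine gap is the step you yourself flag as "the main obstacle": the quantitative link between small space and the failure of your dichotomy. This is the paper's Lemma \ref{lem:whitebox-concentration}, and the mechanism you propose for it --- quantitative strict convexity of $\norm{\cdot}_p$ forcing the next-level distributions $\nu^{(i-1)}_{s'}$ to cluster around $\nu^{(i)}_s$ when no norm increase occurs --- is not how the argument goes, and I do not think it can be repaired in that form: two distributions can have identical $p$-norms while being supported on disjoint sets, so "the average of the $\nu^{(i-1)}_{s'}$ equals $\nu^{(i)}_s$ and none has larger norm" does not force clustering, and even if it did, clustering by itself does not produce an error. The paper instead exploits the \emph{correctness constraint}: if the algorithm is not to err, then the next-level output distribution $M(x,R)$ reached after block $x$ must assign mass at most $\delta$ to every coordinate of $x$; and a single distribution $\mu$ that is not much more concentrated than $\nu$ (i.e.\ $\norm{\mu}_p^p \le D^{p-1}\norm{\nu}_p^p$) can have its "light" set $L_\mu = \{i : \mu(i)\le\delta\}$ capture at most a $1 - \Omega(1/(D(\log n)^2))$ fraction of $\nu$'s mass --- this is the content of \Cref{claim:distribution-helper}, proved by partitioning $[n]$ into level sets of $\nu$ and repeated applications of H\"older. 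Hence each candidate next-level distribution is compatible with at most a $(1-\Omega(1/(D(\log n)^2)))^t$ fraction of blocks $x\sim\nu^t$, and a counting argument forces $\log|\range(M)| = \Omega(t/(D(\log n)^2))$, i.e.\ the state count must be large. Without this counting step (or a worked-out substitute), your dichotomy has no connection to the space bound, and the theorem does not follow.
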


This proof relies on the following Lemma, whose proof we will defer for later.

\begin{lemma}\label{lem:whitebox-concentration}
  Let $\nu$ be a distribution on $[n]$, and $p = 1 + 1 / \log(n)$. Let $\delta \le \frac{1}{n^3}$. Let $R$ be a random variable with values in $\Omega$. If there is a map $M : [n]^t \times \Omega \rightarrow \triangle[n]$ so that $\EE_{x \sim \nu^t, R} M(x, R) = \nu$, and:
  \begin{align}
    \Pr_{x \sim \nu^t, R}\left[\Vert M(x, R) \Vert_p^p \le D^{p-1} \norm{\nu}_p^p \land \left(\forall j \in [t]: M(x, R)(x_j) \le \delta \right)  \right] \ge 1 - \frac{1}{2^{6} n^2} \label{eq:conc-lem-prob-cond}
  \end{align}
  Then $\log |\range(M)| \ge  \frac{t}{2^{9} D (\log n)^2} - 2 \log(n) - 6$.
\end{lemma}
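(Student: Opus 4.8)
The plan is to prove the contrapositive: if $\log|\range(M)|$ is small, then we can find a single value $a \in [n]$ such that conditioning on $M$ outputting something ``close to'' $a$ forces the probability bound in Eq.~\eqref{eq:conc-lem-prob-cond} to fail, or else force $\norm{M(x,R)}_p^p$ to be large relative to $D^{p-1}\norm{\nu}_p^p$. The intuition is that the fixed-point constraint $\EE_{x\sim\nu^t,R}M(x,R) = \nu$ means the ``average'' output distribution is exactly $\nu$, so if the range of $M$ is small, then on many inputs $M(x,R)$ must be quite concentrated (since an average of few distributions equalling $\nu$, combined with the $\ell_p$-spreading requirement, is restrictive); and a concentrated output $M(x,R)$ with most of its mass on some $a$ will, with decent probability over $x\sim\nu^t$, have $a = x_j$ for one of the $t$ coordinates, violating the $M(x,R)(x_j)\le\delta$ clause.

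The key steps, in order. First, I would set up the right potential: for each value $v$ in the (small) range of $M$, $v$ is a distribution on $[n]$; define its ``heavy element'' to be any $a$ with $v(a)$ large, say $v(a) \ge 1 - \eta$ for a suitably chosen $\eta = 1/\poly(n)$. Partition $\range(M)$ into the ``concentrated'' distributions (having such a heavy element) and the ``spread'' ones. Second, I would bound the total $\nu^t\times R$-mass landing on spread distributions: because each spread $v$ has $\norm{v}_p^p$ bounded away from $1$ in a way that, summed against the fixed-point identity and using convexity of $x\mapsto x^p$ / the power-mean inequality, forces $\norm{\nu}_p^p$ to be small unless most mass is on concentrated distributions — and separately, the $\ell_p$-boundedness clause $\norm{M(x,R)}_p^p \le D^{p-1}\norm{\nu}_p^p$ limits how concentrated $M(x,R)$ can be, giving a two-sided squeeze. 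The quantitative heart is that $p-1 = 1/\log n$ is chosen precisely so that $\norm{\cdot}_p^p$ of a point mass is $1$ while $\norm{\cdot}_p^p$ of the uniform distribution on $[n]$ is $n^{1-p} = 1/e$ roughly; this small gap is what lets the counting of $|\range(M)|$ enter. Third — the crucial combinatorial step — for each concentrated $v$ with heavy element $a$, estimate $\Pr_{x\sim\nu^t}[\exists j: x_j = a \mid M(x,R)=v]$. Since $\EE M = \nu$ and $v$ puts mass $\approx 1$ on $a$, averaging shows $a$ must have been a reasonably likely draw from $\nu$ conditioned on the relevant event, hence across the $t$ independent coordinates the chance that none equals $a$ is bounded away from $1$ by roughly $t\cdot\nu(a)$-type quantities; summing over the $\le|\range(M)|$ concentrated values, this failure mass must be $\le 1/(2^6 n^2)$ by hypothesis, which forces $t/\log|\range(M)|$ to be small unless the heavy mass is spread over many distinct heavy elements $a$ — but there are only $n$ of those. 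Collecting the inequalities and optimizing the free parameter $\eta$ yields $\log|\range(M)| \ge t/(2^9 D(\log n)^2) - 2\log n - 6$.

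I expect the main obstacle to be Step 3: correctly handling the \emph{conditioning} on $M(x,R) = v$ when relating $\EE M = \nu$ to the probability that some coordinate $x_j$ hits the heavy element $a$. The subtlety is that $x$ and the event $\{M(x,R)=v\}$ are not independent — the adversary/map $M$ looks at all of $x$ — so one cannot naively say each $x_j$ is distributed as $\nu$ conditioned on the event; instead I would argue via $\EE_{x,R}[M(x,R)(a)] = \nu(a)$ restricted to the fiber, combined with a union bound over coordinates and a careful split according to whether $\nu(a)$ is itself tiny (in which case $a$ being heavy in $v$ is already ``surprising'' and contributes to the failure mass) or not (in which case some $x_j = a$ is likely). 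A secondary technical annoyance will be pushing the $\ell_p$ bound and the point-mass/uniform gap through cleanly enough to get the stated constants $2^9$ and $D$ in the right places; I would keep $p - 1 = 1/\log n$ symbolic as long as possible and only plug in at the end, using $n^{1/\log n} = 2$ and $(1-\eta)^p \ge 1 - p\eta$ type estimates.
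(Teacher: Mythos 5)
Your high-level intuition is in the right place---the tension is indeed between the $\ell_p$-boundedness clause (which limits how concentrated each output distribution can be) and the avoidance clause $M(x,R)(x_j)\le\delta$ (which forces each output distribution to dodge all $t$ sampled coordinates)---but the specific mechanism you propose does not work, and the quantitative engine of the proof is missing. The ``heavy element'' dichotomy (a single $a$ with $v(a)\ge 1-\eta$, $\eta = 1/\poly(n)$) is essentially vacuous here: with $D=2$ and $p-1=1/\log n$, condition (a) caps $\norm{M(x,R)}_p^p$ at $2^{1/\log n}\norm{\nu}_p^p$, which is only a $(1+O(1/\log n))$ multiplicative factor above $\norm{\nu}_p^p$; a near-point-mass has $\norm{\cdot}_p^p\approx 1$ while, e.g., uniform $\nu$ has $\norm{\nu}_p^p=1/2$, so no distribution in your ``concentrated'' class can satisfy (a) unless $\nu$ itself is nearly a point mass. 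The outputs of $M$ are barely more concentrated than $\nu$, and the relevant notion of concentration is not ``mass on one point'' but ``mass squeezed onto a set of small $\nu$-measure.'' Relatedly, your Step 3 estimates $\Pr[\exists j: x_j=a]$ for a single element $a$, which is roughly $t\nu(a)$ and can be as small as $t/n$---far too weak. The correct object is the entire avoidance set $L_{\mu}=\{i:\mu(i)\le\delta\}$ (note $\delta=1/n^3$ is a tiny threshold, so $L_\mu$ is typically almost all of $[n]$): the event in (b) forces all $t$ coordinates of $x$ into $L_{\mu}$, contributing $(\nu(L_\mu))^t$ per output distribution, and the whole game is to show $\nu(L_\mu)\le 1-\Omega(1/(D(\log n)^2))$ for every admissible $\mu$, so that summing $(\nu(L_\mu))^t$ over the range forces $|\range(M)|\ge e^{\Omega(t/(D(\log n)^2))}$.

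Proving that bound on $\nu(L_\mu)$ is where the real work lies, and your sketch has no replacement for it. The paper's route is: (i) decompose $\nu$ into $O((\log n)/\beta)$ level sets $H_i$ on which $\nu$ is uniform up to a factor $2^\beta$, with a lower bound on $\sum_i m_\nu(H_i)^{p-1}\nu(H_i)^p/\norm{\nu}_p^p$ proved via H\"older and Cauchy--Schwarz; (ii) use $\EE_{x,R}M(x,R)=\nu$ together with Jensen and a reverse-Markov argument to show that with probability $\ge 1/(2^5n^2)$ the output distribution retains, level set by level set, a $(1-1/n)$ fraction of this potential---this is how the fixed-point hypothesis actually enters, not through conditioning on fibers of $M$; and (iii) observe that if $\nu(L_\mu)$ were too close to $1$, then $\mu$ would be supported (up to $\delta$-mass) on $[n]\setminus L_\mu$, a set of $\nu$-measure $1-\nu(L_\mu)$, and the level-set structure plus H\"older would force $\norm{\mu}_p^p>(D^{p-1})\norm{\nu}_p^p$, contradicting (a). Without a decomposition of $\nu$ into near-uniform pieces, the passage from ``$\mu$ avoids a set of large $\nu$-measure'' to ``$\norm{\mu}_p^p$ is large'' fails for non-uniform $\nu$, so your proposal as written cannot be completed along the lines you describe.
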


\begin{proof} Proof of \Cref{thm:lb-whitebox}.

  We can safely assume that $r \ge \log n + 1$, as for any $r = O((\log n)^3)$ the claimed lower bound is trivial.
  
  Let $\ell = \ceil{\log n + 1}$, $t = \floor{\frac{r}{\ell}}$, and let $\hat{r} = t \ell$. We can use a protocol for $\mif(n,r)$ to solve $\mif(n,\hat{r})$ instead, by padding the start of the stream with a fixed sequence of $r - \hat{r}$ arbitrary inputs. Let $\cA$ be this new algorithm.

  Let $\Sigma$ be the set of all states of $\cA$, and let $\tau : \Sigma \times [n] \rightarrow \Sigma$ be the \emph{randomized} transition function between states.
  For each state $s \in \Sigma$, let $\omega_s$ be the distribution over $[n]$ from which
  the final output value is drawn when the final state of the algorithm is $s$.
  (If $s$ can never occur at the end of the stream, we let $\omega_s$ be arbitrary.)
  To each pair $(s,i) \in \Sigma \times \{0,\ldots,\ell\}$, we will associate a distribution $\nu_{s,i}$ over $[n]$. These distributions are recursively defined; if $i = \ell$, we let $\nu_{s,\ell} = \omega_s$, i.e., the output distribution for state $s$. For $i < \ell$, define $f_{s,i} : \triangle[n] \rightarrow \triangle[n]$ as:
  \begin{align}
    f(\phi) = \EE_{x \sim \phi^t} \EE_{s' \sim \tau(s,x)} \nu_{s',i+1} = \sum_{x \in [n]^t} \left(\prod_{i \in [t]} \phi(x_i) \right) \sum_{s' \in \Sigma} \Pr[\tau(s,x) = s'] \nu_{s',i+1} \label{eq:fixedpt-def}
  \end{align}
  Because this function is continuous, and $\triangle[n]$ is homeomorphic to an $(n-1)$-dimensional ball, we can apply Brouwer's fixed point theorem (\Cref{lem:brouwer-fixed-point}) to find a distribution $\nu_{s,i} \in \triangle[n]$ satisfying $\nu_{s,i} = f_{s,i}(\nu_{s,i})$.
  
  With the distributions $\nu_{s,i}$ as defined above, we can define an adversary which, we can prove, will trick $\cA$ into outputting an element that was present in the stream with probability $\ge \frac{1}{(16 n)^{2 \log n + 7}}$. The adversary proceeds in $\ell$ rounds: for each $i \in \{0,\ldots,\ell-1\}$, they identify the current state $s_i$ of the algorithm, sample $\alpha \sim \nu_{s_i,i}$, and send $\alpha$ to $\cA$.
  
  Let $p = 1 + 1 / \log n$; the quantity $\norm{\nu_{s,i}}_p^p$ is a measure of the concentration of the output distribution associated with $s$ and $i$. Assume for sake of contradiction that $\log |\Sigma| \le \frac{t}{2^9 (\log n)^2} - 2 \log n - 6$. Then we shall prove by induction, for all $i \in \{0,\ldots,\ell\}$, the statement $P(i)$ that for all $s \in \Sigma$, if $\norm{\nu_{s,i}}_p^p \ge 2^{i (p-1)}/n^{p-1}$, then the probability that $\cA$ will give an incorrect answer when the remaining $(\ell - i) t$ elements of the stream are provided by the adversary is $\ge 1 / (16 n)^{2(\ell - i) + 3})$. The base case of the induction, at $i = \ell$, holds vacuously, because $\norm{\nu_{s,i}}_p^p \ge 2^{i (p-1)}/n^{p-1} \ge 2^{(\ceil{\log n} + 1) (p-1)} /n^{p-1} \ge 2^{p-1} > 1$ is never true.
  
  Now, for the induction step. Assume $P(i+1)$ holds; we would like to prove $P(i)$ is true. Assume the current state $s$ of the algorithm satisfies $\norm{\nu_{s,i}}_p^p \ge 2^{i (p-1)}/n^{p-1}$. The adversary samples $x \sim \nu_{s,i}^t$ and sends it to the algorithm, which transitions to the state $s' \sim \tau(s, x)$. If it is the case that
  \begin{align}
    \Pr[\text{$\nu$ too concentrated}] := \Pr[\norm{\nu_{s',i+1}}_p^p \ge 2^{(i+1) (p-1)}/n^{p-1}] \ge \frac{1}{2^{7} n^2} \,, \label{eq:wb-conc-path}
  \end{align}
  then, by applying $P(i+1)$, it follows:
  \begin{align*}
     \Pr[\text{$\cA$ fails}] 
     &\ge \Pr[\text{$\cA$ fails} \mid \text{$\nu$ too concentrated}] \Pr\left[\text{$\nu$ too concentrated}\right] \\
     & \ge \frac{1}{(16 n)^{2(\ell - i - 1) + 3}} \cdot \frac{1}{2^{7} n^2} \ge \frac{1}{(16 n)^{2 (\ell - i) + 3}}
  \end{align*}
  It remains to prove $P(i+1)$ assuming Eq. \ref{eq:wb-conc-path} does not hold. If that is the case, let $M : [n]^t \times \Omega \rightarrow \triangle[n]$ be the randomized map in which $M(x, R) = \nu_{s',i+1}$ where $s'$ is randomly chosen according to $ \tau(s,x)$; the random variable $R$ encapsulates the randomness of $\tau$.
  Note that 
  $\EE_{x \sim \nu_{s,i}^t, R} M(x,R) = \nu_{s,i}$, by the definition of $\nu_{s,i}$. Applying Lemma \ref{lem:whitebox-concentration} to $M$, $\nu_{s,i}$, $p$, $D=2$, and $\delta = 1/n^3$
  we observe that since we have assumed that $|\Sigma| \ge \log|\range(M)|$ is smaller than the Lemma guarantees, and $\EE_{x \sim \nu_{s,i}^t} \EE M(x) = \nu_{s,i}$ holds, it must be that Eq. \ref{eq:conc-lem-prob-cond} is incorrect. Thus:
  \begin{align*}
    \Pr_{x \sim \nu^t}\left[ \norm{M(x, R)}_p^p \le 2^{p-1} \norm{\nu_{s,i}}_p^p \land (\forall j \in [t]: M(x, R)(x_j) \le \frac{1}{n^3}) \right] \le 1 - \frac{1}{2^{6} n^2}
  \end{align*}
  and since Eq. \ref{eq:wb-conc-path} does not hold, we have 
  \begin{align*}
    \Pr_{x \sim \nu^t} \left[\norm{\nu_{s',i+1} }_p^p \le 2^{p-1} \norm{\nu_{s,i}}_p^p \right]
      \ge \Pr_{x \sim \nu^t} \left[\norm{\nu_{s',i+1} }_p^p \le 2^{(i+1)(p-1)} / n^{p-1} \right]
      \ge 1 - \frac{1}{2^7 n^2}
  \end{align*}
  which implies:
  \begin{align*}
     \Pr_{x \sim \nu^t, s' \sim \tau(s,x)}\left[\exists j \in [t]:  \nu_{s',i+1} (x_j) \ge \frac{1}{n^3}\right] \ge \frac{1}{2^{7} n^2} \,.
  \end{align*}
  The definition of $\nu_{s',i+1}$ ensures that $\nu_{s',i+1}$ is precisely the distribution of output values if the algorithm and adversary are run for $t (\ell - i - 1)$ steps starting from state $s'$. The probability that the algorithm fails because the final output overlaps with $x$ is then 
  \begin{align*}
    \Pr_{x \sim \nu_{s,i}^t, s' \sim \tau(s,x), y \sim \nu_{s',i+1} }[\exists j \in [t]: x_j = y] &= \EE_{x \sim \nu_{s,i}^t, s' \sim \tau(s,x)} \Pr_{y \sim \nu_{s',i+1}} [\exists j \in [t]: x_j = y] \\
      &= \EE_{x \sim \nu_{s,i}^t, s' \sim \tau(s,x)} \sum_{j \in [t]} \nu_{s',i+1}(x_j) \\
      &\ge \EE_{x \sim \nu_{s,i}^t, s' \sim \tau(s,x)} \max_{j \in [t]} \nu_{s',i+1}(x_j) \\
      &\ge \frac{1}{n^3} \Pr_{x \sim \nu_{s,i}^t, s' \sim \tau(s,x)}\left[ \max_{j \in [t]} \nu_{s',i+1}(x_j) \ge \frac{1}{n^3}\right]\\
      &\ge \frac{1}{n^3}  \cdot \frac{1}{2^{7} n^2} = \frac{1}{2^{7} n^5}
  \end{align*}
  Thus, the failure probability of the algorithm as of $(s,i)$ is $\ge 1 / (2^7 n^5) \ge 1 / (16 n)^5 \ge 1 / (16 n)^{2 (\ell - i) + 3}$; this completes the proof of $P(i)$.
  
  With the proof by induction complete, the statement $P(0)$ implies that for any $s \in \Sigma$, because $\norm{\nu_{s,0}}_p^p \ge \frac{1}{n^{p-1}}$ always holds, the probability that $\cA$ gives an incorrect answer when run against the adversary on a stream of length $t \ell = \hat{r}$ is $\ge 1/(16 n)^{2 \ell + 3} \ge \frac{1}{(16 n)^{2 \log n + 7}}$. This contradicts the given fact that $\cA$'s error is less than this, so the assumption that $\log |\Sigma| \le \frac{t}{2^9 (\log n)^2} - 2 \log n - 6$ must be incorrect; and instead we must have
  \begin{align}
    \log |\Sigma| &\ge \frac{t}{2^9 (\log n)^2} - 2 \log n - 6 \ge \frac{\floor{ r / \ceil{\log n + 1} }}{2^9 (\log n)^2} - 2 \log n - 6 \nonumber\\
      &= \Omega(r / (\log n)^3 - \log n) \,. \label{eq:wb-result}
  \end{align}
  
  To handle the case of small $r$, we note that a white-box algorithm $\cB$ for $\mif(n,r)$ with error $\delta \le 1/\left(16 n\right)^{2 \log n + 7}$
  can be used to solve the $\avoid(n, r, 1)$ communication task. Here, Alice, 
  on being given a set $A \subseteq \binom{[n]}{r}$, runs an instance of $\cB$ on a sequence containing the elements of $A$ in some order; she then sends the state
  of the instance to Bob, who queries the instance for an output, and reports that value. This communication protocol has the same error probability as $\cB$; by \Cref{lem:lb-avoid-low-range}, it requires
  \begin{align*}
    \ge \min\left(\log(r+1), \log \frac{\log{1/\delta}}{\log{e n / r}}\right) \ge \log (\min(r + 1, 2 \log n + 7)) \ge 1
  \end{align*}
  bits of communication; thus $\cB$ requires at least one  bit of state. Since $\max(1, z / a - b) \ge z / (a (1 + b))$, this lets us find a more convenient corollary for Eq. \ref{eq:wb-result}; that $\log |\Sigma| = \Omega(r / (\log n)^4)$.
\end{proof}

We will now prove \Cref{lem:whitebox-concentration}. It relies on the following technical claim about probability distributions; which \emph{roughly} implies that when a distribution is split into a small number of regions on which it is approximately uniform, a specific sum of powers of the weight and density of each region has a lower bound.

\begin{claim}\label{claim:distribution-helper}
  Define $m_\phi(K)$ to be the minimum value of distribution $\phi$ on the set $K$, so $m_\phi(K) := \min_{i \in K} \phi(i)$.

  Let $p > 1$, $\beta \in (0, 1]$, and $n \ge 2$. For any distribution $\nu$ on $[n]$, there exists a collection of disjoint sets
  $\{H_i\}_{i \in J}$ for some $|J| \le \frac{3}{\beta} \log n$ where:
  \begin{align}
    \sum_{i \in J} (m_\nu(H_i))^{p-1} \frac{(\nu(H_i))^p}{\Vert \nu  \Vert_{p}^{p}}
    & \ge \frac{(1-\frac{1}{n})^p}{2^{\beta(2p-1)} |J|^{(p - 1) p / (2p-1)} n^{(p-1)^2 / (2p-1)}} \\
    & \ge \frac{(1-\frac{1}{n})^p}{2^{\beta(2p-1)} |J|^{(p - 1)} n^{(p-1)^2}} \,.\label{eq:distrib-helper-main}
  \end{align}
  Furthermore, we have $\max_{i \in J} m_\nu(H_i) \ge 1 / (n 2^\beta)$, and $\min_{i \in J} m_\nu(H_i) \ge 1/n^2$.
\end{claim}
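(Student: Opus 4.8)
The plan is to take the $H_i$ to be \emph{dyadic weight classes}. First discard every $j\in[n]$ with $\nu(j)<1/n^2$; these carry total mass $<n\cdot n^{-2}=1/n$, which is exactly what produces the factor $(1-1/n)^p$ and immediately gives $\min_i m_\nu(H_i)\ge 1/n^2$. On the survivors (whose values lie in $(1/n^2,1]$) define, for $i=1,\dots,\lceil 2\log n/\beta\rceil$, the class $H_i$ of all $j$ with $\nu(j)\in(2^{\beta(i-1)}/n^2,\,2^{\beta i}/n^2]$, and keep the nonempty ones; since $\beta\le 1$ and $n\ge 2$ there are at most $\frac3\beta\log n$ of them, so $|J|\le\frac3\beta\log n$. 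Each class is ``flat'' up to a factor $2^\beta$, and the class containing $\arg\max_j\nu(j)$ has minimum value at least $\max_j\nu(j)/2^\beta\ge 1/(n2^\beta)$, which is the remaining ``furthermore'' claim. Writing $m_i:=m_\nu(H_i)$, $w_i:=\nu(H_i)$, $n_i:=|H_i|$, disjointness gives $\sum_i n_i\le n$ and discarding leaves $\sum_i w_i\ge 1-1/n$.

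Next I would record the local estimates forced by $2^\beta$-flatness: $n_i m_i\le w_i\le 2^\beta n_i m_i$, and $\sum_{j\in H_i}\nu(j)^p$ lies between $m_i^{p-1}w_i$ and $2^{\beta(p-1)}m_i^{p-1}w_i$ (equivalently, between $n_i m_i^p$ and $2^{\beta p}n_i m_i^p$). Consequently $\|\nu\|_p^p\le 2^{\beta p}\sum_i n_i m_i^p + n^{-(2p-1)}$, with the last term bounding the discarded mass's contribution, while the quantity to be lower-bounded obeys $\sum_i m_i^{p-1}w_i^p\ge\sum_i n_i^p m_i^{2p-1}$. So it is enough to lower bound $S:=\sum_i m_i^{p-1}w_i^p/\|\nu\|_p^p$, and for this it suffices to control $\frac{\sum_i n_i^p m_i^{2p-1}}{\sum_i n_i m_i^p + n^{-(2p-1)}}$ in terms of $|J|$ and $n$; note that $\sum_i n_i m_i\ge 1-1/n$ with $m_i\le\max_j\nu(j)$ already forces $\max_j\nu(j)\ge(1-1/n)/n$, so the additive $n^{-(2p-1)}$ term will always be dominated.

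The core, and the delicate part, is to combine Hölder's inequality with the elementary concentration bound $\|\nu\|_p^p\le(\max_j\nu(j))^{p-1}$. Applying Hölder to $\sum_i w_i=\sum_i(n_i^p m_i^{2p-1})^{1/(2p-1)}\cdot n_i^{(p-1)/(2p-1)}$ with exponents $2p-1$ and $\tfrac{2p-1}{2p-2}$, then Cauchy--Schwarz with $\sum_i n_i\le n$, gives $\sum_i n_i^p m_i^{2p-1}\ge(1-1/n)^{2p-1}/(|J|\,n)^{p-1}$, and with $\|\nu\|_p^p\le(\max_j\nu(j))^{p-1}$ this yields a bound on $S$ that is strong when $\max_j\nu(j)$ is small. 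A complementary family of bounds comes from keeping only a few top classes: the class of the heaviest element has both weight and minimum value at least $\max_j\nu(j)/2^\beta$, so its single term already gives $S\ge(\max_j\nu(j))^p/2^{\beta(2p-1)}$; and a Hölder step on $\sum_i w_i m_i^{p-1}=\sum_i(w_i^p m_i^{p-1})^{1/p}(m_i^{p-1})^{(p-1)/p}$, together with the geometric-series bound $\sum_i m_i\le\max_j\nu(j)/(1-2^{-\beta})$, improves this to $S\gtrsim(\max_j\nu(j))^{(p-1)^2}$ up to a factor depending only on $\beta$ and $p$. Taking the worse bound over the unknown value $x=\max_j\nu(j)\in[(1-1/n)/n,1]$, the minimum sits at the crossover of the Hölder bound with the $(\max\nu)^{(p-1)^2}$ bound, and this is where the exponent $\frac{p-1}{2p-1}$ appears.

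I expect the main obstacle to be pushing this balancing to land exactly on the claimed $|J|^{(p-1)p/(2p-1)}n^{(p-1)^2/(2p-1)}$ and folding all the $2^{O(\beta p)}$ slack into the single factor $2^{\beta(2p-1)}$: a first pass of the argument above gets to within a factor $n^{(p-1)^3/(p(2p-1))}$ of the target, and closing this last gap should use the two structural facts not yet exploited --- that the $m_i$ are geometrically separated (so $\sum_i m_i^{-1}$ is $O_\beta(m_{\min}^{-1})$, sharpening the Cauchy--Schwarz step once low-weight classes are pruned), and that pruning classes of weight below $(1-1/n)/(2|J|)$ costs at most half the mass but never more than $n^{-(2p-1)}$ of $\|\nu\|_p^p$, so that each surviving class has $n_i\le w_i/m_i$ bounded in terms of the threshold. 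Once the stated exponents are in hand, the weaker bound $|J|^{p-1}n^{(p-1)^2}$ follows immediately since $p\ge 1$ makes $\tfrac{p}{2p-1}\le 1$ and $\tfrac{1}{2p-1}\le 1$, and the two ``furthermore'' inequalities are already established by the construction in the first paragraph.
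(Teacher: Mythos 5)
Your setup matches the paper's: the same dyadic weight classes $H_i=\{j:\nu(j)\in[2^{i\beta}/n^2,2^{(i+1)\beta}/n^2)\}$, the same discarding of elements with $\nu(j)<1/n^2$ to get the $(1-1/n)$ losses and the bound $|J|\le\frac{3}{\beta}\log n$, and the same flatness estimates $n_i m_i\le w_i\le 2^\beta n_i m_i$; the two ``furthermore'' statements and the reduction to bounding $\sum_i n_i^p m_i^{2p-1}\big/\sum_i n_i m_i^p$ are all correct. Your first Hölder step (on $\sum_i n_i m_i$, with exponents $2p-1$ and $\tfrac{2p-1}{2p-2}$, followed by Cauchy--Schwarz) is also exactly one of the two inequalities the paper uses, and yields $\sum_i n_i^p m_i^{2p-1}\ge (1-1/n)^{2p-1}2^{-\beta(2p-1)}/(|J|n)^{p-1}$.

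The gap is in how you handle the denominator. You bound $\norm{\nu}_p^p\le(\max_j\nu(j))^{p-1}$ and then balance against complementary bounds that are strong when $\max_j\nu(j)$ is large; as you yourself compute, the crossover of these bounds lands at exponent $(p-1)^2/p$ rather than $(p-1)^2/(2p-1)$, i.e.\ short of the target by $n^{(p-1)^3/(p(2p-1))}$, and the ideas you list for closing this (geometric separation of the $m_i$, pruning light classes) are not developed into an argument and are not what is needed. The missing ingredient is a \emph{second} Hölder application to the denominator itself, with conjugate exponents $\tfrac{2p-1}{p}$ and $\tfrac{2p-1}{p-1}$:
\begin{align*}
  \sum_{i\in J} n_i m_i^p \;=\; \sum_{i\in J}\left(n_i^p m_i^{2p-1}\right)^{\frac{p}{2p-1}}\left(n_i^{-(p-1)}\right)^{\frac{p-1}{2p-1}}
  \;\le\; \left(\sum_{i\in J} n_i^p m_i^{2p-1}\right)^{\frac{p}{2p-1}} |J|^{\frac{p-1}{2p-1}}\,,
\end{align*}
using only $n_i\ge 1$. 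Multiplying this inequality by the $(p-1)$st power of your bound on $\sum_i n_i m_i$ makes the exponents of $\sum_i n_i^p m_i^{2p-1}$ add to $\tfrac{p}{2p-1}+\tfrac{p-1}{2p-1}=1$, and rearranging gives
\begin{align*}
  \frac{\sum_i n_i^p m_i^{2p-1}}{\sum_i n_i m_i^p}\;\ge\;\frac{\left(\sum_i n_i m_i\right)^{p-1}}{|J|^{(p-1)p/(2p-1)}\, n^{(p-1)^2/(2p-1)}}\,,
\end{align*}
which is exactly the claimed bound with no case analysis on $\max_j\nu(j)$ and no residual loss. As written, your argument does not prove the claim for general $p>1$ (e.g.\ at $p=2$ the shortfall is $n^{1/6}$), so the core inequality remains unestablished.
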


\begin{proof} (Of \Cref{lem:whitebox-concentration}.)
  In order to avoid awkward expressions like $M(x,R)(i)$, we define $\tmu_x := M(x, R)$. We also use the notation $a^+ := \max(0, a)$. Throughout the proof we shall assume $n \ge 2$, as in the case $n = 1$ it is easy to prove that no such map $M$ exists.
  
  This proof has two main stages. The first establishes that, for a small fraction of vectors $x$ drawn from $\nu^t$, the distribution $\tmu_x$ will probably have significant mass in the same area as $\nu^t$, while not being much more concentrated (according to $\norm{\cdot}_p^p$) than $\tmu_x$, and avoids $x$. The second part will show that such distributions can avoid only small fraction of vectors sampled from $\nu^t$; together, these stages imply the range of $M$ must be large.

  Given a real random variable $W$, with $\EE W \ge y$, and $0 \le W \le \eta y$,
  we have
  \begin{align}
    \Pr[W \ge \alpha y] = 1 - \Pr[W \le \alpha y] \ge 1 - \Pr[(\eta y - W) \ge (\eta - \alpha) y] \ge 1 - \frac{\eta y - y}{(\eta - \alpha) y} \ge \frac{1 - \alpha}{\eta} \,.  \label{eq:reverse-markov}
  \end{align}
  Let $\beta \in (0,1]$ be a parameter chosen later. Apply \Cref{claim:distribution-helper} to $\nu$ with this $\beta$ and the given $p$, producing disjoint sets $\{H_i\}_{i\in J}$.  For any $i \in J$, we have $\EE_{X \sim \nu^t, R} \tmu_X = \nu(H_i)$. Now applying Jensen's inequality to convex functions of the form $f(a) = ((a - b)^+)^p$ gives:
  \begin{align*}  
    \EE_{X \sim \nu^t, R} ((\tmu_X(H_i) - \delta|H_i|)^+)^p &\ge ((\nu(H_i) - \delta|H_i|)^+)^p \qquad \text{which implies} \\
    \EE_{X \sim \nu^t, R}[ \sum_{i \in J} m_\nu(H_i)^{p-1} ((\tmu_X(H_i) - \delta |H_i|)^+)^p ] &\ge \sum_{i \in J} m_\nu(H_i)^{p-1} ((\nu(H_i) - \delta |H_i|)^+)^p \,.
  \end{align*}
  Next, for any $x \in [n]^t$,
  \begin{align*}
    \sum_{i \in J} m_\nu(H_i)^{p-1} ((\tmu_x(H_i) - \delta|H_i|)^+)^p) \le \max_{i \in J} m_\nu(H_i)^{p-1} \le (4 n)^p \sum_{i \in J} m_\nu(H_i)^{p-1} (\nu(H_i) - \delta |H_i|)^p) \,,
  \end{align*}
  because as noted in \Cref{claim:distribution-helper}, for the $i$ maximizing
  $m_\nu(H_i)$, we have $\nu(H_i) \ge |H_i| \frac{1}{n 2^\beta} \ge \frac{|H_i|}{2 n}$, so
  $\nu(H_i) - \delta |H_i| \ge |H_i| (\frac{1}{2 n} - \frac{1}{n^3}) \ge 1/4 n$. Note that $(4 n)^p \le 4^2 (n^{1 + 1/\log n}) = 2^5 n$.
  Applying Eq. \ref{eq:reverse-markov} thus yields:
  \begin{align*}
    \Pr_{ X \sim \nu ^t, R }\left[
      \begin{aligned}  &\sum_{i \in J} m_\nu(H_i)^{p-1} ((\tmu_X(H_i) - \delta|H_i|)^+)^p \quad\ge\\
         &\quad \left(1 - \frac{1}{n}\right) \sum_{i \in J} m_\nu(H_i)^{p-1} ((\nu(H_i) - \delta|H_i|)^+)^p
      \end{aligned}
    \right] \ge \frac{1}{n} \cdot \frac{1}{(4 n)^p} \ge \frac{1}{2^{5} n^2} \,.
  \end{align*}
  Intersecting this event with that of Eq. \ref{eq:conc-lem-prob-cond} implies the probability that all three of the following conditions hold is $\ge \frac{1}{2^{6} n^2}$:
  \begin{align*}
      \text{(a)}:\qquad & \norm{\tmu_X}_p^p \le D^{p-1} \norm{\nu}_p^p \\
      \text{(b)}:\qquad & \forall j \in [t]: \tmu_X(X_j) \le \delta \\
      \text{(c)}:\qquad & \sum_{i \in J} m_\nu(H_i)^{p-1} ((\tmu_X(H_i) - \delta|H_i|)^+)^p \ge (1 - \frac{1}{n}) \sum_{i \in J} m_\nu(H_i)^{p-1} ((\nu(H_i) - \delta|H_i|)^+)^p \,.
  \end{align*}
  By the averaging argument, there must exist a value $R' \in \Omega$ for which, when $R = R'$, the above three conditions hold with at least the same probability. In other words, when replacing $\tmu_X$ with $\mu_X := M(x,R')$, the conditions still holds with probability $\ge 1/{2^6 n^2}$. Now let $G := \{\mu_x : x\in[n]^t \text{ satisfies (a),(c)} \}$ and define $L_\pi := \{ i \in [n] : \pi(i) \le \delta \}$. Therefore,
  \begin{align}
    \frac{1}{2^6 n^2} &\le \Pr_{X \sim \nu^t}[\mu_X \in G \land (\forall j \in [t]: \mu_X(X_j) \le \delta)] \nonumber \\
                      &= \sum_{y \in G} \Pr_{X \sim \nu^t}[\mu_X = \mu_y \land  (\forall j \in [t]: \mu_X(X_j) \le \delta)] \nonumber \\
                      &\le \sum_{y \in G} \Pr_{X \sim \nu^t}[(\forall j \in [t]: \mu_y(X_j) \le \delta)] \nonumber \\
                      &= \sum_{y \in G} \prod_{j \in [t]} \Pr_{X_j \sim \nu}[\mu_y(X_j) \le \delta] = \sum_{y \in G} (\nu(L_{\mu_y}))^t \label{eq:dist-conc-link}
  \end{align}
  We will now prove an upper bound on $\nu(L_{\mu_y})$ for any given $y \in G$. Observe that for any sequence $a_1,\ldots,a_\ell$ of nonnegative real numbers, $\sum_{i = 1}^\ell a_i^p \ge (\sum_{i = 1}^\ell a_i)^p / \ell^{p-1}$; this follows from Hölder's inequality. As the sets $H_i \setminus L_{\mu_y}$ are disjoint,
  \begin{align*}
    \norm{\mu_y}_p^p = \sum_{i \in [n]} \mu_y(i)^p \ge \sum_{i \in J} \frac{\mu_y(H_i \setminus L_{\mu_y})^p}{|H_i \setminus L_{\mu_y}|^{p-1}} \,.
  \end{align*}
  The definition of $L_{\mu_y}$ implies $\mu_y(H_i \setminus L_{\mu_y}) \ge \max(0, \mu_y(H_i) - \delta |H_i|)$. Also, because the minimum value of $\nu$ on $H_i \setminus L_{\mu_y}$ is at least $m_\nu(H_i)$, we have 
  \begin{align*}
    |H_i \setminus L_{\mu_y}| \le \frac{\nu(H_i \setminus L_{\mu_y})}{m_\nu(H_i)} \le \frac{1 - \nu(L_{\mu_y})}{m_\nu(H_i)}
  \end{align*}
  Therefore, 
  \begin{align*}
    \norm{\mu_y}_p^p &\ge \sum_{i \in J} \frac{((\mu_y(H_i) - \delta |H_i|)^+)^p } {(1 - \nu(L_{\mu_y}))^{p-1} / m_\nu(H_i)^{p-1}}\\
      &= \frac{1}{(1 - \nu(L_{\mu_y}))^{p-1}} \sum_{i \in J} m_\nu(H_i)^{p-1} ((\mu_y(H_i) - \delta |H_i|)^+)^p \\
      & \ge  \frac{1 - 1/n}{(1 - \nu(L_{\mu_y}))^{p-1}} \sum_{i \in J} m_\nu(H_i)^{p-1} ((\nu(H_i) - \delta |H_i|)^+)^p \qquad \text{by condition (c)} \\
      & \ge  \frac{1 - 1/n}{(1 - \nu(L_{\mu_y}))^{p-1}} \sum_{i \in J} m_\nu(H_i)^{p-1} \left(\left(1 - \frac{1}{n}\right)\nu(H_i)\right)^p
  \end{align*}
  The last step uses the fact that for all $i \in J$, $\delta |H_i| \le \frac{1}{2 n^3} |H_i| \le \frac{1}{n} |H_i| \min_{j \in \bigcup H_i} \nu(j) \le \frac{1}{n} \nu(H_i)$.
  We now apply condition (a), and divide both sides by $\norm{\nu}_p^p$:
  \begin{align*}
    D^{p-1} &\ge \frac{\norm{\mu_y}_p^p}{\norm{\nu}_p^p} \ge \frac{(1 - 1/n)^{p+1}}{(1 - \nu(L_{\mu_y}))^{p-1}} \sum_{i \in J} m_\nu(H_i)^{p-1} \frac{(\nu(H_i))^p}{\norm{\nu}_p^p} \\
      &\ge \frac{(1 - 1/n)^{p+1}}{(1 - \nu(L_{\mu_y}))^{p-1}} \frac{(1 - 1/n)^{p}}{2^{\beta (2p -1)} (\frac{2}{\beta}\log n)^{(p-1)} n^{(p-1)^2} } \,. \qquad \text{by \Cref{claim:distribution-helper}}
  \end{align*}
  Rearranging this inequality to isolate $\nu(L_{\mu_y})$ reveals:
  \begin{align}
    \nu(L_{\mu_y}) \le 1 - \frac{1}{D} \left( \frac{(1 - 1/n)^{(2 p -1)/(p-1)}}{2^{\beta (2p - 1)/(p-1)} (\frac{2}{\beta}\log n) n^{(p-1)} }\right) 
    \label{eq:ub-on-luy}
  \end{align}
  The right hand side is close to its minimum when $\beta = p - 1 = 1 / \log n$: Thus:
  \begin{align*}
    \nu(L_{\mu_y}) &\le 1 - \frac{1}{D} \left( \frac{(1 - 1/n)^{2 + \log n}}{2^{(2 + \log n) / \log n} (2 (\log n)^2) 2^{\log n / \log n} }\right) \\
      & \le 1 - \frac{1}{D} \left(\frac{(1 - 1/n)^{2 + \log n}}{16 \cdot 2^{2 / \log n}}\right)\frac{1}{(\log n)^2}\\
      & \le 1 - \frac{1}{2^{9} D (\log n)^2}
  \end{align*}
  since $(1 - 1/n)^{2 + \log n} / (16 \cdot 2^{2 / \log n} )$ is increasing in $n$, and when evaluated at $n=2$ gives $2^{-9}$.
  Now we are in a position to simplify Eq. \ref{eq:dist-conc-link}; with this upper bound.
  \begin{align*}
    \frac{1}{2^6 n^2} \le \sum_{y \in G} \left(1 - \frac{1}{2^{9} D (\log n)^2} \right)^t = |G| \left(1 - \frac{1}{2^{9} D (\log n)^2} \right)^t \le |G| \exp\left(- \frac{t}{2^{9} D (\log n)^2}\right) \,.
  \end{align*}
  Since $G$ is a subset of $\range(M)$, we have $\log |\range(M)| \ge \log |G|$; rearranging the above to isolate $|G|$ gives:
  \begin{align*}
    \log |\range(M)| &\ge \log |G| \ge \frac{t}{2^{9} (\ln 2) D (\log n)^2}- \log(2^6 n^2) \\
      & \ge \frac{t}{2^{9} D (\log n)^2} - 2 \log(n) - 6 \,.
  \end{align*}
\end{proof}

Finally, we prove \Cref{claim:distribution-helper}:

\begin{proof}
  For all $i \in \ZZ_{\ge 0}$, let $w_i := 2^{i\beta} / n^2$, and let
  $H_i := \{ j \in [n] : \nu(j) \in [w_i,w_{i+1})$. Define $J := \{ i \in \ZZ_{\ge 0} : H_i \ne \emptyset \}$. We first prove some basic properties of $J$ and the $H_i$.
  \begin{itemize}
    \item If $i \ge \floor{\frac{3}{\beta}\log n}$, then $w_i \ge 2^{\floor{\frac{3}{\beta}\log n} \beta } / n^2 > 2^{\frac{2}{\beta}(\log n) \beta} / n^2 \ge 1$; since $\max_{j\in[n]} \nu(j) \le 1$, it follows such $H_i$ must be empty. Thus $J \subseteq \{0,\ldots,\floor{\frac{3}{\beta}\log n} - 1\}$, and hence $|J| \le \frac{3}{\beta}\log n$.
    \item Because $\min_{j \in [n]} \nu_(j) \ge 1/n$, we are guaranteed that for some $i$ with $w_i \ge 1/(n 2^\beta)$, $H_i \ne \emptyset$. Thus $\max_{i \in J} m_\nu(H_i) \ge 1 / (n 2^\beta)$. Similarly, $\min_{i \in J} m_\nu(H_i) \ge \min_{i \in J} w_i \ge 1/n^2$.
  \end{itemize}
  
  Now, to prove the main part of the result, Eq. \ref{eq:distrib-helper-main}. Let $K = [n] \setminus  \bigcup_{i\in J} H_i = \{j \in [n] : \nu(j) < 1/n^2 \}$. First, we observe that the contribution of the $j \in K$ to $\norm{\nu}_p^p$ is small and can be easily be accounted for:
  \begin{align*}
      n \frac{1}{n^p} &\le \sum_{j \in [n]} \nu(j)^p = \sum_{j \in  \bigcup_{i\in J} H_i} \nu(j)^p + \sum_{j \in K} \nu(j)^p \le \sum_{j \in \bigcup_{i\in J} H_i} \nu(j)^p + n \left(\frac{1}{n^2}\right)^p \\
        &\le  \sum_{j \in\bigcup_{i\in J} H_i} \nu(j)^p + \frac{1}{n^p} \sum_{j \in [n]} \nu(j)^p \,.
  \end{align*}
  This implies $\norm{\nu}_p^p \le (1 - 1/n^p)^{-1}  \sum_{j \in\bigcup_{i\in J} H_i} \nu(j)^p \le (1 - 1/n)^{-1} \sum_{j \in\bigcup_{i\in J} H_i} \nu(j)^p $.
    
  Now, writing $n_i = |H_i|$, we have $n_i w_i \le \nu(H_i) \ge 2^\beta n_i w_i$, and so:
  \begin{align}
    \sum_{i \in J} (m_\nu(H_i))^{p-1} \frac{(\nu(H_i))^p}{\norm{\nu}_p^p} \ge 
      \frac{(1-1/n) \sum_{i \in J} w_i^{p-1} (w_i n_i)^p}{\sum_{i \in J} n_i (2^\beta w_i)^p} = \frac{1 - 1/n}{2^{\beta p}} \frac{\sum_{i \in J} w_i^{2p-1} n_i^p}{\sum_{i \in J} w_i^{p} n_i} \,.\label{eq:conc-core-niwi}
  \end{align}
  We shall later need the following inequality:
  \begin{align}
    \sum_{i \in J} n_i w_i \ge 2^{-\beta} \sum_{j \in \bigcup_{i\in J} H_i} \nu(j) \ge 2^{-\beta} (\sum_{j\in[n]} \nu(j) - \sum_{j\in K} \nu(j)) \ge 2^{-\beta} \left(1 - \frac{1}{n}\right) \,.\label{eq:niwi-gt-apx-1}
  \end{align}
  With this and properties of the $n_i$, we can lower bound Eq. \ref{eq:conc-core-niwi} by using Hölder's inequality several times:
  \begin{align}
    \sum_{i \in J} w_i^p n_i &= \sum_{i \in J} \left(w_i^{2p-1} n_i^{p}\right)^{\frac{p}{2p-1}} \left(n_i^{-(p-1)} \right)^{\frac{p-1}{2p-1}} \nonumber \\ 
      &\le  \left(\sum_{i \in J} w_i^{2p-1} n_i^{p}\right)^{\frac{p}{2p-1}} \left(\sum_{i \in J} n_i^{-(p-1)} \right)^{\frac{p-1}{2p-1}} \nonumber && \text{by Hölder} \\
      &\le \left(\sum_{i \in J} w_i^{2p-1} n_i^{p}\right)^{\frac{p}{2p-1}} |J|^{\frac{p-1}{2p-1}} && \text{since $n_i \ge 1$} \label{eq:lb-wipni}  \\
    \sum_{i \in J} w_i n_i &= \sum_{i \in J} \left(w_i^{2p-1} n_i^{p}\right)^{\frac{1}{2p-1}} \left(n_i^{1/2} \right)^{\frac{2p-2}{2p-1}} \nonumber \\ 
      &\le \left(\sum_{i \in J} w_i^{2p-1} n_i^{p}\right)^{\frac{1}{2p-1}} \left(\sum_{i \in J}  n_i^{1/2} \right)^{\frac{2p-2}{2p-1}} \nonumber && \text{by Hölder} \\
      &\le \left(\sum_{i \in J} w_i^{2p-1} n_i^{p}\right)^{\frac{1}{2p-1}} \left(\left(\sum_{i \in J}  n_i\right)^{1/2} |J|^{1/2} \right)^{\frac{2p-2}{2p-1}} \nonumber && \text{by Cauchy-Schwarz} \\
      &\le \left(\sum_{i \in J} w_i^{2p-1} n_i^{p}\right)^{\frac{1}{2p-1}} n^{\frac{p-1}{2p-1}} |J|^{\frac{p-1}{2p-1}}  && \text{since $\sum_{i \in J} n_i \le n$.} \label{eq:lb-wini}
  \end{align}
  Multiplying Eq. \ref{eq:lb-wipni} by Eq. \ref{eq:lb-wini} raised to the $(p-1)$st power gives:
  \begin{align*}
    \left( \sum_{i \in J} w_i^p n_i \right) \left(\sum_{i \in J} w_i n_i \right)^{p-1} 
      & \le \left(\sum_{i \in J} w_i^{2p-1} n_i^{p}\right) |J|^\frac{p-1}{2p-1} (n^\frac{p-1}{2p-1})^{p-1} (|J|^\frac{p-1}{2p-1})^{p-1} \,,
  \end{align*}
  which implies
  \begin{align*}
    \frac{\sum_{i \in J} w_i^{2p-1} n_i^{p}}{\sum_{i \in J} w_i^p n_i } \ge \frac{\left(\sum_{i \in J} w_i n_i \right)^{p-1}}{|J|^{\frac{(p-1) p}{2p-1}} n^\frac{(p-1)^2}{2p-1}} \ge \frac{(1-\frac{1}{n})^{p-1}}{2^{\beta (p-1)} |J|^{\frac{(p-1) p}{2p-1}} n^\frac{(p-1)^2}{2p-1}} \,.
  \end{align*}
  Substituting this into Eq. \ref{eq:conc-core-niwi} gives:
  \begin{align*}
    \sum_{i \in J} (m_\nu(H_i))^{p-1} \frac{(\nu(H_i))^p}{\norm{\nu}_p^p}
      \ge \frac{(1-\frac{1}{n})^p}{2^{\beta (2 p-1)} |J|^{\frac{(p-1) p}{2p-1}} n^\frac{(p-1)^2}{2p-1}} \,.
  \end{align*}
\end{proof}

\section{Random start and pseudo-deterministic models}\label{sec:randstart-and-pd}

\begin{theorem}\label{thm:lb-randstart}
The space needed for an algorithm in the random-start model to solve $\mif(n,r)$ against adaptive adversaries, with error $\le \delta \le \frac{1}{6}$, satisfies $s \ge S^{PD}_{1/3}(n, \floor{r/(2 s+2)})$.
\end{theorem}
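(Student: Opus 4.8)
The goal is to convert a random-start algorithm $\mathcal{A}$ for $\mif(n,r)$ that uses $s$ bits of space and errs with probability $\le \delta \le 1/6$ against adaptive adversaries into a \emph{pseudo-deterministic} algorithm $\mathcal{B}$ for $\mif(n,r')$ with $r' = \floor{r/(2s+2)}$ that also uses $s$ bits and errs with probability $\le 1/3$; this yields $S^{PD}_{1/3}(\mif(n,r')) \le s$. Since a random-start algorithm has deterministic transitions, one instance of $\mathcal{A}$ is a deterministic automaton drawn --- according to its seed, which is encoded in the $\le s$-bit initial state --- from a family of size $\le 2^s$. The algorithm $\mathcal{B}$ will run a single instance of $\mathcal{A}$ with a fresh seed, feed it an adaptively built stream, and finally report $\mathcal{A}$'s current answer, so $\mathcal{B}$ has the same space cost as $\mathcal{A}$.

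$\mathcal{B}$ processes its $r'$ input elements in $r'$ rounds; round $i$ consists of a \emph{probe phase} of at most $2s+1$ feeds to $\mathcal{A}$, followed by the real element $e_i$, so in total at most $r'(2s+2) \le r$ elements are fed to $\mathcal{A}$. The probe phase realizes the dichotomy from the introduction. At any point, condition on the transcript of answers $\mathcal{B}$ has observed so far and let $D$ be the resulting conditional distribution of $\mathcal{A}$'s seed; its support has size between $1$ and $2^s$ and contains the true seed. Say $D$ is \emph{settled} if for every input $x$ the answer $\mathcal{A}$ would next produce on receiving $x$ equals some value except with small probability under $D$ --- informally, ``given the same next input, the algorithm consistently produces the same output''. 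If $D$ is not settled, there is an input $x$ for which feeding $x$ and observing $\mathcal{A}$'s reply shrinks $|\mathrm{supp}(D)|$ by a constant factor no matter which reply is seen; this is the case where ``the adversary learns a lot about the initial random bits''. The probe phase feeds such revealing inputs (for instance it may feed back $\mathcal{A}$'s current answer, which is always a revealing input, and more generally any input the next reply is sensitive to) and stops as soon as $D$ is settled. Because $|\mathrm{supp}(D)|$ starts at $\le 2^s$, never increases, and stays $\ge 1$, only $O(s)$ revealing feeds can occur over the whole run; once the constants are fixed so this total is at most $2s+1$, every probe phase reaches a settled $D$ within budget, and after $|\mathrm{supp}(D)|$ reaches $1$ all remaining probe phases are empty and $\mathcal{A}$ runs deterministically.

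Correctness of $\mathcal{B}$ is immediate: $\mathcal{B}$ acts as an adaptive adversary for $\mathcal{A}$, so with probability $\ge 1-\delta$ every answer $\mathcal{A}$ ever returns --- including $\mathcal{B}$'s final report --- avoids everything fed so far, hence avoids $\{e_1,\dots,e_{r'}\}$. For pseudo-determinism: round $r'$'s probe phase ends at a settled $D$, so feeding $e_{r'}$ makes $\mathcal{A}$ reply with a value that depends only on $D$, except with small probability. The crux --- which I expect to be the main obstacle --- is that $D$ here is itself a function of the seed, through the observed transcript, so this value is a priori not one canonical element of $[n]$. The plan to overcome this is a second, finer use of the potential $\Phi := \log |\mathrm{supp}(D)|$: make the choice of probe input canonical (e.g.\ the lexicographically least witness that $D$ is not settled), track $\Phi$ --- which starts at $\le s$ and drops by a constant at every revealing feed --- and argue that, since the $2s+2$-feed budget per round outruns the possible number of drops, for all but a small fraction of seeds the observed transcript, and hence the value $\mathcal{B}$ reports, equals a fixed $c_\tau$ determined by $\tau$ alone. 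A union bound over the events ``$\mathcal{A}$ errs'' ($\le \delta \le 1/6$), ``the transcript is non-canonical'', and ``$\mathcal{A}$'s reply differs from the settled value'' then gives total error $\le 1/3$; choosing the precise ``settled'' threshold and the precise per-round budget so that these three terms sum below $1/3$ is a piece of bookkeeping I have not fully pinned down, and which, together with the canonical-transcript argument, is where the real work lies.
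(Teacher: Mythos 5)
Your high-level dichotomy is the right one, but as written the proposal has two genuine gaps, one of which is structural and is exactly the point where the paper's proof is organized differently.

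First, the potential $\Phi = \log|\mathrm{supp}(D)|$ does not do what you need. If no reply to a probe has conditional probability $> 2/3$, observing the reply multiplies the conditional \emph{probability mass} of the surviving seeds by at most $2/3$, but it may remove only a single seed from the support (e.g.\ one seed of mass $0.6$ answers $a$ and ninety-nine seeds of total mass $0.4$ answer $b$). So the support can shrink by one element per revealing feed and your ``$O(s)$ revealing feeds total'' bound fails. The correct bookkeeping is the one the paper uses: after $\ell$ revealing probes the observed transcript has probability $\le (2/3)^\ell$, and summing over the at most $2^s$ initial states whose runs are ``revealing all the way'' gives total probability $\le 2^s (2/3)^{2s+2} \le 1/2$. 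This fix is routine.

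Second, and more seriously, the interleaved structure (a probe phase, then one real element, repeated $r'$ times) cannot deliver pseudo-determinism at threshold $2/3$ per step. Settledness gives you only a $\ge 2/3$ guarantee that the \emph{next} reply is canonical; over $r'$ real elements the probability that the entire output sequence follows the canonical transcript degrades like $(2/3)^{r'}$ (or, by a union bound, the deviation probability is up to $r'/3$), so the final report is not a single value with probability $\ge 2/3$. Tightening the settledness threshold to $1 - O(1/r')$ to survive the composition would force $\Theta(s\, r')$ revealing feeds and would only prove the theorem with $r' \approx \sqrt{r/s}$, not $\floor{r/(2s+2)}$. The paper resolves this by changing the granularity: both the probes and the final test are entire blocks $x \in [n]^t$ with $t = \floor{r/(2s+2)}$, ``settled'' means that for \emph{every} $t$-element continuation $z$ there is a single response sequence $y_z$ with conditional probability $\ge 2/3$, and the resulting pseudo-deterministic algorithm solves $\mif(n,t)$ by receiving its whole input as one such block $z$ after the single settled point. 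Only one $2/3$ guarantee is ever invoked, so no composition across rounds is needed; the error budget is then $1/2$ (never reaching a settled point) plus the $1/3$ conditional failure at the settled point, yielding the $1/6$ adversary success probability. You correctly identified this as ``the main obstacle,'' but it is not bookkeeping --- it requires reorganizing the reduction so that the pseudo-deterministic algorithm's entire input stream corresponds to one block, which is also why the stream-length parameter in the theorem is the block length $\floor{r/(2s+2)}$ rather than the number of rounds.
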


This theorem implies that \emph{if} it is the case that $S^{PD}_{1/3}(\mif(n,r)) = \Omega(r^c / \polylog(n))$ for some constant $c > 0$, then it follows that $S^{RS}_{1/6}(\mif(n,r)) = \Omega(r^{c / (1+c)}/\polylog n )$. Specifically, if $s = S^{RS}_{1/6}(\mif(n,r))$, then \Cref{thm:lb-randstart} would imply $s \ge S^{PD}_{1/3}(n, \floor{r/(2 s+2)}) = \Omega((r/s)^c / \polylog(n))$; multiplying both sides by $s^c$ and raising them to the $1/(c+1)$st power gives $s \ge \Omega(r^{c/(1+c)} / \polylog(n))$.

\begin{proof}
  Let $\Sigma$ be the set of all states of the random-start algorithm $\cA$, and let $\cD$ be the distribution of the initial states of the algorithm. Write $B \sim \cA$ to indicate that $B$ is an instance of $\cA$, i.e., with initial state drawn from distribution $\cD$.
  Let $\ell = 2 \ceil{\log(|\Sigma|)} + 2$, and let $t = \floor{r / \ell}$. For any partial stream $\sigma$ of elements, and instance $B$ of $\cA$, we let $B(\sigma)$ be the sequence of $|\sigma|$ outputs made by $B$ after it processes each element in $\sigma$.

  Consider an adversary $E$ which does the following. Given $\sigma$ the stream it has already passed to the algorithm, and $\omega$ the sequence of outputs that $\cA$ produced in response to $\sigma$, the adversary checks if there exists any $x \in [n]^t$ for which
  \begin{align}
    \forall y \in [n]^t : \Pr_{B \sim \cA}[B(\sigma.x) = \omega.y \mid B(\sigma) = \omega] \le \frac{2}{3} \,. \label{eq:wb-splitter}
  \end{align}
  If so, it sends $x$ to $\cA$, appends $x$ to $\sigma$ and the returned $t$ elements to $\omega$, and repeats the process. If no such $x$ exists, then the adversary identifies the $z \in [n]^t$ which maximizes:
  \begin{align}
    \Pr_{B \sim \cA}[B(\sigma.z)\text{ is incorrect} \mid B(\sigma) = \omega] \,. \label{eq:wb-err-prob}
  \end{align}
  and sends it to the algorithm. (The adversary gives up if either the algorithm manages to give a valid output after $z$, or after it has sent $\ell$ sets of $t$ elements to the algorithm.)
  
  We claim that if $\log |\Sigma| < S^{PD}_{1/3}(\mif(n,t))$, then $E$ makes the algorithm fail with probability $\ge 1/6$. There are two ways that $E$ can be forced to give up: if it tries more than $\ell - 1$ times to find a point where there is no $x \in [n]^t$ satisfying Eq. \ref{eq:wb-splitter}, or if the $z$ it sends fails to produce an error.
  
  Assume that the adversary finds a value of $x$ satisfying Eq. \ref{eq:wb-splitter}, for each of the $\ell$ tries it makes. Let $x_1,\ldots,x_\ell \in [n]^t$ be these values, and let $y_1,\ldots,y_\ell  \in [n]^t$ be the outputs of the algorithm. By applying Eq. \ref{eq:wb-splitter} repeatedly, we have:
  \begin{align*}
    \Pr_{B \sim \cA}[B(x_1.\ldots.x_\ell) = y_1.\ldots.y_\ell] &= \Pr_{B \sim \cA}[B(x_1.\ldots.x_\ell) = y_1.\ldots.y_\ell \mid B(x_1.\ldots.x_{\ell-1}) = y_1.\ldots.y_{\ell-1}] \\
          & \qquad \cdot \Pr_{B \sim \cA}[B(x_1.\ldots.x_{\ell-1}) = y_1.\ldots.y_{\ell-1} \mid B(x_1.\ldots.x_{\ell-2}) = y_1.\ldots.y_{\ell-2}] \\
          & \qquad \cdot \Pr_{B \sim \cA}[B(x_1) = y_1] \\
      & \le (2/3)^\ell \,.
  \end{align*}
  Let $C \subseteq \Sigma$ be the set of initial states of the algorithm for which the adversary finds a sequence satisfying Eq. \ref{eq:wb-splitter}, $\ell$ times. Because the algorithm is deterministic after the initial state is chosen, each $s \in C$ has a corresponding transcript $(\sigma_s,\omega_s) \in [n]^{t\ell} \times [n]^{t\ell}$ that occurs when $E$ is run against an instance of $\cA$ started from $s$. Therefore,
  \begin{align*}
    \Pr_{s \sim \cD}[s \in C] &= \sum_{s \in C} \Pr_{s' \sim cD} [s = s'] \le \sum_{s \in C} \Pr_{B \sim \cA}[B(\sigma_s) = \omega_s] \\
      & \le |\Sigma| \left(\frac{2}{3}\right)^\ell \le 2^{\log |\Sigma|} \left(\frac{2}{3}\right)^{2 \log(|\Sigma|) + 2} \le \left(\frac{2}{3}\right)^2 \le \frac{1}{2}
  \end{align*}
  Thus, the chance that $E$ fails to find a point where no $x$ satisfying Eq. \ref{eq:wb-splitter} exists is $\le \frac{1}{2}$.
  
  To bound the second way in which $E$ can fail, we let $(\sigma,\omega)$ be a partial transcript of the algorithm for which no $x \in [n]^t$ satisfies Eq. \ref{eq:wb-splitter}. Assume the probability that $z$ produces an error is $< 1/3$. Then we have:
  \begin{align}
    \forall z \in [n]^t, \exists y_z \in [n]^t: \qquad &\Pr_{B \sim \cA}[B(\sigma.z) = \omega.y_z \mid B(\sigma) = \omega] \ge \frac{2}{3} \label{eq:wb-uniqueness} \\ 
    \forall z \in [n]^t:\qquad &\Pr_{B \sim \cA}[B(\sigma.z)\text{ is correct} \mid B(\sigma) = \omega] \ge \frac{2}{3} \label{eq:wb-correctness} \,.
  \end{align}
  These conditions together imply that $\omega.y_z$ is a correct $\mif(n,r)$ output sequence for $\sigma.z$. As a result, we can use $\cA$'s behavior after $(\sigma,\omega)$ to construct a pseudo-deterministic algorithm $\Psi$ for $MIF(n,t)$. To initialize $\Psi$, we sample an initial state $B \sim \cA$ conditioned on the event that $B(\sigma) = \omega$, and then send the elements of $\sigma$ to $B$. After this, when $\Psi$ receives an element $e$, we send $e$ to $B$, and report the element $B$ outputs as the output of $\Psi$. By Eqs. \ref{eq:wb-uniqueness} and \ref{eq:wb-correctness}, the sequence of outputs produced by $\Psi$ on any input $x$ in $[n]^t$ will, with probability $\ge 2/3$,  be the (valid) output $y_z$. Thus, $\Psi$ solves $MIF(n,t)$ with $\le 1/3$ error -- which, under the assumption that $\log |\Sigma| < S^{PD}_{1/3}(\mif(n,t))$, is impossible.
  Thus the $z$ chosen by the adversary makes the algorithm err with probability $\ge 1/3$, conditional on it having found  $(\sigma,\omega)$ with no $x \in [n]^t$ satisfying Eq. \ref{eq:wb-splitter}. The probability that the the adversary succeeds is then $\ge \nicefrac{1}{3} \cdot \nicefrac{1}{2} = 1/6$; this contradicts the assumption that $\cA$ has error $\le 1/6$ against any adversary, which implies that we must instead have $\log |\Sigma| \ge S^{PD}_{1/3}(\mif(n,t))$.
\end{proof}

We now present a random-start algorithm whose total space with random bits included improves slightly on \Cref{alg:hidden-list}.

\begin{theorem}\label{thm:ub-randstart}
  \Cref{alg:batch-list} solves $\mif(n,r)$ against adaptive adversaries, with error $\delta$, and can be implemented using $O\left(\left(\sqrt{r} + \frac{r^2}{n}\right)\log n\right)$ bits of space, \emph{including} all random bits used.
\end{theorem}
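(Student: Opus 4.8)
The plan is to exhibit \Cref{alg:batch-list} as a batched refinement of the hidden-list algorithm (\Cref{alg:hidden-list}) whose only random bits are a seed of $O(\sqrt r \log n)$ bits; since in the random-start model the seed is part of the (initial) state, this is exactly what must be counted. Concretely, rather than sampling a single list of $r+1$ elements up front, the algorithm partitions the seed into a sequence of short chunks, each chunk being used (only when needed) to generate a fresh ``batch'' of list elements — or, more economically, the seed for a hash family giving a virtual list with enough slots. At any time the algorithm stores only the current batch index, a within-batch counter $c$, and a set $J$ of stream elements that match slots ahead of $c$; when the counter exhausts the current batch, the algorithm discards that batch and activates the next chunk. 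I would first record the running invariant inherited from the analysis of \Cref{thm:ub-advrobust}: the while-loop together with $J$ guarantees that the emitted element $L_c$ differs from every stream element received since the current batch was activated. The genuinely new part is then handling the elements received during earlier batches.

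For the analysis I would isolate the two ways the algorithm can fail. (a) The set $J$ (maintained per-batch, or persistently, as in \Cref{thm:ub-zero-error}) exceeds the space budget. Because each batch's chunk of the seed is independent of everything the adversary has seen when that batch is activated, conditioned on the entire transcript the batch is uniform; hence the probability that the $i$-th stream element received while that batch is active is newly inserted into $J$ is at most (number of live slots)$/n = O((\text{batch length} + |J|)/n)$. Applying the modified Azuma inequality (\Cref{lem:azumanoff}) batch by batch bounds the total number of insertions into $J$ by $O(r^2/n + \log(1/\delta))$ with probability $\ge 1-\delta$, exactly as in \Cref{thm:ub-advrobust} and \Cref{thm:ub-zero-error}; this produces the $r^2/n$ term in the space bound. (b) A freshly activated batch emits an element that already occurred in the stream during an earlier batch. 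Again using that a fresh batch is independent of the earlier stream, the probability that a given slot of a fresh batch coincides with one of the $\le r$ earlier stream elements is $\le r/n$; a union bound over the slots actually used in each batch and over all $\Theta(\sqrt r)$ batches bounds the total contribution, which — with the number of batches and the per-batch slot supply chosen to balance the seed cost against this bound — is at most $\delta$ (and in the persistent-avoid-set variant this event simply cannot occur).

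The space accounting then gives, at any time: $O(\log r)$ bits for the batch index and $c$, $O(\sqrt r \log n)$ bits for the still-unused part of the seed, $O(\log n)$ bits for the current batch's (virtual) list description, and $O((r^2/n + \log(1/\delta))\log n)$ bits for $J$ — a total of $O((\sqrt r + r^2/n)\log n)$ as claimed, absorbing $\log(1/\delta)$, which is $O(\log n)$ in the regimes of interest. The step I expect to be the main obstacle is reconciling robustness with the bounded total randomness: one must choose the number of batches ($\Theta(\sqrt r)$) and the per-batch slot supply so that (i) each individual batch on its own survives the adaptively chosen stream elements it sees — which forces the batch to provide many slots while costing only $\tO(1)$ seed bits, hence a hash family / virtual list rather than an explicit list; (ii) the adversary's knowledge of already-used batches is provably useless against the next batch, which is where the independence of the seed chunks is essential; and (iii) the residual chance that an old stream element hits a future slot of a not-yet-generated batch stays below $\delta$. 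Verifying that one parameter choice meets all three constraints simultaneously, and checking the correctness invariant carefully across a batch boundary, is the technical heart; the remainder is a direct adaptation of the hidden-list analysis.
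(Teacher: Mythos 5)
Your high-level accounting (seed of roughly $\sqrt{r}\cdot\log n$ bits counted as state, a persistent set $J$ of size $O(r^2/n+\log(1/\delta))$ bounded via \Cref{lem:azumanoff} exactly as in \Cref{thm:ub-advrobust}, plus $O(\log n)$ bookkeeping) matches the paper's space analysis, and you correctly identify that the crux is getting many adversary-proof ``slots'' out of few random bits. But the device you propose for that crux --- a per-batch hash family or ``virtual list'' that yields many slots from $\tilde O(1)$ seed bits --- is left entirely unconstructed, and this is a genuine gap rather than a deferrable detail. For the $J$-analysis to give $O(r^2/n)$ insertions you need every not-yet-revealed slot to be (close to) uniform over a large set \emph{conditioned on the adversary's entire transcript}; with a batch seed of only $\polylog(n)$ bits, the outputs already revealed from a batch may pin down its seed and let the adversary predict and deliberately hit all remaining slots, inflating $|J|$ to $\Omega(r)$ (or forcing an abort). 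Making a small hash family provably robust here is essentially the open problem the paper raises about whether its oracle-randomness can be compressed, so your proof cannot rest on it.

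The paper's \Cref{alg:batch-list} avoids pseudorandomness altogether by changing what a ``slot'' is: it partitions $[n]$ into blocks of $w=\Theta(\min(\sqrt{r\log n}, n/(32r), r/\ln(1/\delta)))$ \emph{consecutive} integers, samples once, as the seed, a list $L$ of $t=\Theta(r/w)$ distinct block indices (costing only $t\log n=O(\sqrt{r\log n})$ bits), and inside the current block finds a missing element \emph{deterministically} with a $w$-bit indicator vector --- a block can only be exhausted by $w$ distinct stream elements, so $t$ blocks suffice for a length-$r$ stream. The set $J$ records which future blocks have been hit (so the algorithm never advances into a contaminated block, eliminating your failure mode (b) by construction), and the Azuma argument applies to block-collisions with probability $\le 2t/\lfloor n/w\rfloor$ per element. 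So one $\log n$-bit random choice buys $w$ sub-slots with no pseudorandom object at all; this is the idea your proposal is missing, and without it (or a concrete, analyzed substitute) the argument does not close.
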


\begin{figure}[ht]
  \centering
  \includegraphics[width=10cm]{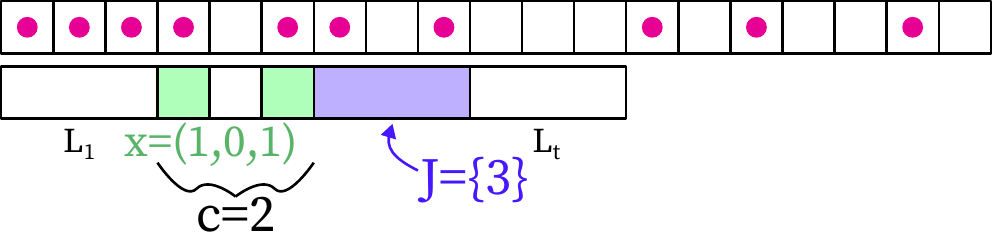}
  \caption{This diagram shows the behavior of \Cref{alg:batch-list} on an example input, if we were to set $w = 3$ and $t = 4$. The top row of squares corresponds to the set $[n]$. In the top row, cells contain a pink dot if the corresponding element has already been seen in the stream. For the bottom row, we have assumed for ease of presentation that $(L_1,L_2,L_3,L_4) = (1,2,3,4)$. The four wide blocks correspond to the sets $\{w (L_j - 1) - 1, \ldots, w L_j\}$ for each $j \in [4]$. The blocks shaded dark blue (here only one) indicate the blocks whose indices are contained in $J$. The vector $x$ tracks which elements in the current block ($L_c$ at $c=2$) were seen in the stream. For indices $> c$, $J$ tracks which blocks contain elements from the stream. If the stream elements in this example had arrived in a different order (say, elements 4 and 6 arriving first), then $J$ might have had the value $\{2,3\}$.
  }
\end{figure}

\begin{algorithm}[!ht]
  \caption{An adversarially robust, "random-start" algorithm for $\mif(n,r)$ with error $\le \delta$
    \label{alg:batch-list}}

  \begin{algorithmic}[1]
  \Statex Assume $r < n / 32$ and $\delta \ge e^{-r / 6}$ -- otherwise, use \Cref{alg:trivial}
  \Statex Let $w = \floor{\min(\sqrt{r \log n}, \frac{n}{32 r}, \frac{r}{6 \ln 1/\delta} )}$, the block size
  \Statex Let $t = \ceil{2 r / w}$
  \Statex 
  
  \Statex \ul{\textbf{Initialization}}:
    \State Let $L = \{L_1,\ldots,L_{t}\}$ be a sequence of $t$ elements from $[\floor{n / w}]$ without repetitions, chosen uniformly at random. 
    \State $c \leftarrow 1$, an integer in the range $\{1,\ldots,t\}$ 
    \State $J \leftarrow \emptyset$, a subset of $[t]$ 
    \State $x \leftarrow (0,\ldots,0)$, a vector in $\{0,1\}^w$ 
    
  \Statex
  \Statex \ul{\textbf{Update}($e \in [n]$)}:
    \State Let $h = \ceil{e / w}$\label{line:block-id}
    \If{$\exists j : L_j = h$ and $j > c$}\label{line:J-increase-cond}
      \State $J \leftarrow J \cup \{j\}$ \label{line:J-increase}
    \EndIf
    \If{$h = L_c$} \label{line:update-x-1}
      \State $x_{e - w (h - 1)} \leftarrow 1$ \label{line:update-x-2}
    \EndIf
    \If{$x = (1,1,\ldots,1)$}
      \State $c \gets c + 1$ \label{line:c-step}
      \While{$c \in J$}
        \State $c \gets c + 1$ \label{line:c-increase}
      \EndWhile
      \State $x \gets (0,0,\ldots,0)$.
    \EndIf
    \If{$c > t$}
      \State \textbf{abort}
    \EndIf
  
  \Statex
  \Statex \ul{\textbf{Query}}:
    \State Let $j$ be the least value in $[w]$ for which $x_j = 0$.
    \State \textbf{output}: $w (L_{c} - 1) + j$
  
  \end{algorithmic}
    
\end{algorithm}

\begin{proof}
  By the $k$th block, we refer to the set $B_k = \{w (L_k - 1) + 1, \ldots, w L_k\}$.

  First, we observe that unless \Cref{alg:batch-list} aborts, it will always output a valid value. At all times, the integer $c$ indicates a value $L_c$ for which the set
  $B_c$ is not entirely contained by the stream.
  The contents of the vector $x$ are updated by Lines \ref{line:update-x-1}
  and \ref{line:update-x-2} to ensure that iff some $e \in B_c$ was given by the stream since the last time $c$ was changed, then the vector entry $x_{e - w(c-1)}$ corresponding to element $e$ has value $1$. On the other hand,
  when the value of $c$ changes, Lines \ref{line:c-step} through \ref{line:c-increase} ensure that for the new value of $c$, no element of $B_c$ was in the stream so far. This works because the variable $J$
  includes (by Lines \ref{line:J-increase-cond} and \ref{line:J-increase}) all
  blocks $B_d$ for $d > c$ which contain a stream element. Thus, after each update,
  $B_c$ always contains at least one element which was not in the stream so far; and since the vector $x$ tracks precisely which elements in $B_c$ were in the stream,
  the query procedure for \Cref{alg:batch-list} always gives a valid result.
  
  Next, we evaluate the probability that \Cref{alg:batch-list} aborts. This only happens if $c > t$. The variable $c$ can increase in two different ways: on Line \ref{line:c-increase}, which can happen at most once every $w$ elements when the vector $x$ fills up; and on Line \ref{line:c-increase}, which occurs at most once for each element in $J$. Thus $c \le 1 + \floor{r/w} + |J|$.
  
  In much the same way that \Cref{thm:ub-advrobust} bounded $|J|$ for \Cref{alg:hidden-list}, we prove here that $|J| \le t - 1 - \floor{r/w}$ with probability $1-\delta$. Without loss of generality, assume that the adversary is deterministic, and picks
  the next element of the stream as a function of the outputs of the algorithm so far.
  Denote the elements of the stream by $e_1,\ldots,e_r$ -- these are random variables
  depending on the algorithm's random choices. Say that $i - 1$ elements have been processed so far, and the algorithm receives the $i$th element. Let $X_i$ be the indicator random variable for the event that the size of $J$ will increase. Matching Line \ref{line:block-id}, let $h_i = \floor{e_i / w}$. Abbreviate $H_{< i} := \{h_1,\ldots,h_{i-1}\}$, $L_{\le c_i} := \{L_1,\ldots,L_{c_i}\}$, $L_{> {c_i}} := \{L_{{c_i}+1},\ldots,L_t\}$; here $c_i$ is the value of the variable $c$ as of Line \ref{line:block-id}. Then by Line \ref{line:J-increase-cond}, $X_i = 1$ iff $h_i \in L_{> c_i} \setminus H_{< i}$. 
  
  Critically, $H_{< i}$ and $h_i$ only depend on what the adversary has seen so far -- algorithm outputs whose computation has only involved $L_1,\ldots,L_{c_i}$ -- and not on the contents of $L_{> c_i}$. Conditioning on $(X_1,\ldots,X_{i-1})$ does constrain
  $L_{> c_i}$, but only in that it fixes the value of $L_{> c_i} \cap H_{< i}$. If we condition on $L_{\le c_i}$ and $(X_1,\ldots,X_{i-1})$ (and hence also on $h_i$ and $H_{< i}$), then the set $L_{> c_i} \setminus H_{< i}$ is a uniform random subset of $[\floor{n/w}] \setminus L_{\le c_i} \setminus H_{< i}$. The probability that $h_i$ is contained in $L_{> c_i} \setminus H_{< i}$ and will be added to $J$ is then:
  \begin{align*}
    \Pr[X_i = 1\mid X_1,\ldots,X_{i-1},L_{\le c_i}] &= 
      \begin{cases}
        0 & h_i \in H_{<i} \cup L_{\le c_i} \\
        \frac{t - c_i - |H_{<i} \cap L_{>c_i}|}{\floor{n/w} - c_i - |H_{<i} \setminus L_{\le c_i}|} & \text{otherwise}
      \end{cases} \\
      &\le \frac{t - |H_{<i} \cap L_{>c_i}|}{\floor{n/w} - |H_{<i} \setminus L_{\le c_i}|} \le \frac{t}{\floor{n/w} - i} \le \frac{2 t}{\floor{n/w}} \,.
  \end{align*}
  In the last step, we used the inequality $i \le r \le 16 r \le \frac{1}{2}\floor{n/w}$.  Taking the (conditional) expectation over $L_{\le c_i}$ yields $\EE[X_i \mid X_1,\ldots,X_{i-1}] \le 2 t / \floor{n/w} \le 4 t w / n$  
  Applying the variant on Azuma's inequality, \Cref{lem:azumanoff}, with $z = \max(1, \frac{3 n}{4 r t w}\ln\frac{1}{\delta})$ gives:
  \begin{align*}
    \Pr\left[ \sum_{i = 1}^{r} X_i \ge r \frac{4 t w}{n} (1 + z) \right] &\le \exp( - z^2 / (2+z) r \frac{4 t w}{n} ) \\
          & \le \exp( - \frac{4 z r t w}{3 n} ) \le \delta \,.
  \end{align*}
  We now observe that:
  \begin{align*}
    \frac{4 r t w (1 + z)}{n} &\le \frac{8 r t w}{n} \max\left(1, \frac{3 n \ln\frac{1}{\delta} }{2 r t w}\right) && \text{defn. of $z$}\\
    & \le \max\left(\frac{8 r t w}{n}, 6 \ln(1/\delta) \right)  \\
    & \le \max\left(\frac{32 r^2}{n}, 6 \ln(1/\delta) \right) &&   \text{since $t \le \frac{4 r}{w}$}   \\
    & \le \max\left(\frac{r}{w}, \frac{r}{w} \right) && \text{since $w \le \frac{n}{32 r}$ and $w \le \frac{r}{6 \ln(1/\delta)}$} \\
    & \le t - \floor{r/w} \,. && \text{since $t \ge \frac{2 r}{w}$}
  \end{align*}
  This implies that at the end of the stream,
  \begin{align*}
    \Pr\left[ |J| \ge  t - \floor{r/w}\right] \le \Pr\left[ |J| = \sum_{i = 1}^{r} X_i \ge  \frac{4 r t w (1 + z)}{n} \right] \le \delta \,,
  \end{align*}
  thereby proving that the algorithm aborts with probability $\le \delta$.

  Finally, we compute the space used by the algorithm. The set $L$ can be stored as a list of integers, using $t \log n$ bits; the counter $c$ with $\log n$ bits; set $J$ with $t$ bits; and vector $x$ with $w$ bits. The total space usage $s$ of the algorithm is then:
  \begin{align}
     s &\le t (1 + \log n) + w + \log n \\
      &\le \frac{10 r \log n}{w} + w && \hspace{-2cm}\text{since $t \le \frac{4 r}{w}$ and $\log n \le \frac{2 r \log n}{w}$} \\
      &\le \frac{20 r \log n}{w} && \hspace{-2cm}\text{since $w \le \sqrt{r \log n}$}\\
      &\le \max\left( 40 \sqrt{r \log n}, \frac{600 r^2 \log n}{n}, 240 \log\frac{1}{\delta}\log n \right) \\
      & = O\left(\sqrt{r \log n} + \frac{r^2}{n} \log n + \log\frac{1}{\delta}\log n \right) \,. \label{eq:randstart-ub}
  \end{align}
  
  \Cref{alg:batch-list} requires $r < n/32$ and $\delta \ge e^{- r /6}$. If either of these conditions do not hold, then it is better to use \Cref{alg:trivial} instead. When $r > n / 32$, we have $\frac{r^2}{n} \log n = \Omega(n \log n)$, and when $\delta \le e^{-r/6}$, we have $\log\frac{1}{\delta}\log n = \Omega(r \log n)$, so using \Cref{alg:trivial} here does not worsen the upper bound from Eq. \ref{eq:randstart-ub}. Consequently, by choosing the better of \Cref{alg:batch-list} and \Cref{alg:trivial}, we can obtain the upper bound from Eq. \ref{eq:randstart-ub} unconditionally.
\end{proof}

It is possible to reduce the space cost of this even further when $r$ is sufficiently smaller than $n$, by replacing the logic used to find a missing element inside a given block. When $r = O(\log n)$, one can obtain an $O\left(r^{1/3} (\log r)^{2/3} \right)$-space algorithm by changing the block size to be $\hat{w} = \tO(n/r^2)$ instead of $w$, and running a nested copy of \Cref{alg:iterated-pigeonhole} configured for $\mif(\hat{w}, r^{2/3} (\log r)^{1/3})$ inside each block instead of tracking precisely which elements in $\{w (L_c - 1) + 1, \ldots, w L_c\}$ have been seen before.

\section{Acknowledgements}

We thank Amit Chakrabarti and Prantar Ghosh for many helpful discussions.

\bibliographystyle{alpha}
\bibliography{refs}

\appendix
\section{Appendix}\label{sec:appendix}

\begin{proof} In this proof of \Cref{lem:azumanoff}, we essentially repeat the proof
  of the Chernoff bound, with slight modifications to account for the dependence of $X_i$ on its predecessors. Here $t$ is a positive real number chosen later.
  \begin{align*}
    \Pr& \left[\sum_{i=1}^n X_i \ge n p (1 + \delta)\right]  \\
      &= \Pr\left[e^{t\sum_{i=1}^n X_i} \ge e^{t n p (1 + \delta)}\right] \\
      &\le \EE[e^{t\sum_{i=1}^n X_i}] / e^{t n p (1 + \delta)} \\
      &= e^{- t n p (1 + \delta)} \EE[e^{t X_1} \EE[e^{t X_2} \cdots \EE[e^{t X_n} \mid X_1 = X_1, \ldots, X_{n-1} = X_{n-1}] \mid X_1 = X_1 ] \\
      &\le e^{- t n p (1 + \delta)} \EE[e^{t X_1} \EE[e^{t X_2} \cdots \EE[e^{t X_{n-1}} (p e^t + (1 - p)) \mid X_1 = X_1, \ldots, X_{n-2} = X_{n-2}] \mid X_1 = X_1 ] \\
      &\le e^{- t n p (1 + \delta)} \EE[e^{t X_1} \EE[e^{t X_2} \cdots \EE[e^{t X_{n-2}} (p e^t + (1 - p))^2 \mid X_1 = X_1, \ldots, X_{n-3} = X_{n-3}] \mid X_1 = X_1 ] \\
      &\le e^{- t n p (1 + \delta)} (p e^t + (1 - p))^n = \left(\frac{p e^t + (1-p) }{e^{t p (1 + \delta)}}\right)^n \\
      & \le \left(\frac{e^{p (e^t - 1)} }{e^{t p (1 + \delta)}}\right)^n = \left(\frac{e^{e^t - 1} }{e^{t (1 + \delta)}}\right)^{n p} \qquad \text{since $1+x \le e^x$} \\
      &= \left(\frac{e^\delta }{(1 + \delta)^{1 + \delta}}\right)^{n p} \qquad \text{picking $t = \ln(1 + \delta)$} \\
      &\le \exp\left(-\frac{\delta^2 n p}{2+\delta}\right) \,. \qquad \text{since $x - (1+x) \ln(1+x) \le -x^2 / (2+x)$}
  \end{align*}
\end{proof}

\end{document}